\documentclass[11pt]{article}
\usepackage[utf8]{inputenc}
\usepackage[twoside,letterpaper,margin=1in]{geometry}
\usepackage{microtype}
\usepackage{graphicx,url,amsmath,amsfonts,amssymb,amsthm,subfigure,bbm,bm}

\usepackage{hyperref}

\usepackage[dvips]{epsfig}
\usepackage{thm-restate}
\usepackage{float}
\usepackage{color}

\definecolor{blueblack}{rgb}{0,0,.7}

\newcommand{\emphdef}[1]{%
  \textcolor{blueblack}{%
    \textbf{\emph{#1}}%
  }%
}


\theoremstyle{plain}
\newtheorem{theorem}{Theorem}[section]
\newtheorem{proposition}[theorem]{Proposition}

\newtheorem{corollary}[theorem]{Corollary}
\newtheorem{lemma}[theorem]{Lemma}

\newcounter{sideremark}

\newcommand{\paren}[1]{\left({#1}\right)}

\def\eg{{\it e.g.,}~}

\newcommand{\eps}{\varepsilon}
\renewcommand{\epsilon}{\varepsilon}

\def\PP{\mathcal{P}}
\def\OPT{{\text{OPT}}}
\def\Sk{{Sk}}

\title{A Near-Linear Approximation Scheme for Multicuts\\ of Embedded
  Graphs with a Fixed Number of Terminals%
  \thanks{A previous version appeared in \emph{Proc.\ 29th ACM--SIAM Symp.\ on
      Discrete Algorithms} (SODA'18).  This work is supported by the French ANR Blanc projects
    ANR-12-BS02-005 (RDAM), ANR-16-CE40-0009-01 (GATO), and
    ANR-17-CE40-0033 (SoS).}%
}

\author{Vincent Cohen-Addad\thanks{
    Sorbonne Universit\'es, UPMC Univ Paris 06, CNRS, LIP6, Paris, France. 
    Email: \url{vcohenad@gmail.com} }%
  \and %
  \'Eric Colin de Verdi\`ere\thanks{LIGM, CNRS, Univ Gustave Eiffel, ESIEE
Paris, F-77454 Marne-la-Vall\'ee, France. Work partly done while the author was at
    D\'epartement d'Informatique, \'Ecole normale sup\'erieure, Paris.
    Email: \url{eric.colindeverdiere@u-pem.fr}.} %
  \and %
  Arnaud de Mesmay\thanks{LIGM, Univ. Gustave Eiffel, CNRS, ESIEE Paris, F-77454 Marne-la-Vall\'ee, France. Email:
\url{arnaud.demesmay@u-pem.fr}}
}

\begin{document}

\begin{titlepage}
  \maketitle
  \thispagestyle{empty}

\begin{abstract}
\normalsize
  For an undirected edge-weighted graph $G$ and a set $R$ of pairs of
  vertices called pairs of \textit{terminals}, a multicut is a set of
  edges such that removing these edges from $G$ disconnects each pair
  in $R$. We provide an algorithm computing a
  $(1+\varepsilon)$-approximation of the minimum multicut of a graph
  $G$ in time $(g+t)^{(O(g+t)^3)}\cdot(1/\varepsilon)^{O(g+t)} \cdot n
  \log n$, where $g$ is the genus of $G$ and $t$ is the number of
  terminals.

  This is tight in several aspects, as the minimum multicut problem is both
  APX-hard and W[1]-hard (parameterized by the number of terminals), even
  on planar graphs (equivalently, when $g=0$).

  Our result, in the field of fixed-parameter approximation algorithms,
  mostly relies on concepts borrowed from computational topology of graphs
  on surfaces.  In particular, we use and extend various recent techniques
  concerning homotopy, homology, and covering spaces.  Interestingly, such
  topological techniques seem necessary even for the planar case.  
  We also exploit classical ideas stemming from approximation
  schemes for planar graphs and low-dimensional geometric inputs. A key
  insight towards our result is a novel characterization of a minimum
  multicut as the union of some Steiner trees in the universal cover of the
  surface in which $G$ is embedded.
\end{abstract}

\end{titlepage}
\setcounter{page}{1}
\section{Introduction}
Cuts and flows are fundamental objects in combinatorial optimization and
have generated a very deep theory.  One classical problem is
\textsc{Multicut}: Let $G=(V,E)$ be an undirected graph, let $T$ be a
subset of vertices of~$G$, called \emphdef{terminals}, and let $R$ be a set
of unordered pairs of vertices in~$T$, called \emphdef{terminal pairs}.  A
subset $E'$ of~$E$ is a \emphdef{multicut} (with respect to~$(T,R)$) if for
every terminal pair $\{t_1,t_2\}\in R$, the vertices $t_1$ and~$t_2$ lie in
different connected components of the graph $(V,E\setminus E')$; see
Figure~\ref{F:illustration}(a).  Assuming $G$ is positively edge-weighted,
and given $G$, $T$, and $R$, the \textsc{Multicut} problem asks for a
multicut of minimum weight.

The case $|R|=1$ is the famous minimum cut problem and admits a
polynomial-time algorithm.  The case $|R|=2$ can also be solved in
polynomial time~\cite{Hu63}, but the case where $R$ is arbitrary is much
harder, and in fact even NP-hard to approximate within any constant factor
assuming the Unique Games Conjecture~\cite{ckkrs-hamsc-06}.  Thus, to
circumvent this hardness, a fruitful approach is to focus on specific
inputs that characterize the instances that are relevant in practice.

The study of cut and flow problems originates from practical problems
involving road or railway networks~\cite{s-hcot1-05}.  Since these can
usually be modeled by embedded graphs, there exists an important literature
towards the efficient computation of minimum cuts (or maximal flows) on
planar
graphs~\cite{bk-namsd-09,e-mfpsp-10,h-mfspn-81,hj-oamfu-85,kn-fpgsr-93,r-mscpu-83}
(among other references) or, more generally, graphs of bounded genus (which
can be embedded, i.e., drawn without crossings, on a
surface of bounded genus)~\cite{benw-apmcnr-16,cen-hfcc-12,cen-mcshc-09,efn-gmcse-12,en-mcsnc-11}.
However, \textsc{Multicut} remains APX-hard for planar
graphs~\cite{gvy-pdaifm-97}, and so to obtain an exact or
$(1+\eps)$-approximation algorithm it is needed to consider parameterized
algorithms.

A very natural parameter is the number $t=|T|$ of terminals; the case
$t=2$, corresponding to the minimum cut, 
has inspired a huge literature, and the problem is already NP-hard for $t=3$
in general~\cite{djpsy-cmc-94}. Unfortunately, \textsc{Multicut} is W[1]-hard when parameterized by the number of 
terminals, even for planar graphs~\cite{m-tlbpwc-12}.

Therefore, for planar graphs, the best possible result for
\textsc{Multicut} parameterized by the number of terminals is a
$(1+\eps)$-approximation.  We provide the first algorithm achieving such an
approximation guarantee, for planar and more generally for surface-embedded
graphs.  Moreover, the dependence in the size of the graph is near-linear
and the dependence in $\eps$ is polynomial:
\begin{theorem}\label{T:main}
  Let $G$ be an undirected, positively edge-weighted graph embeddable on a
  surface of genus $g$, orientable or not.  Let $n$ be the number of
  vertices and edges of~$G$.  Let $T$ be a set of $t$~terminals and $R$ be
  a set of unordered pairs of~$T$.  Then for every $\varepsilon >0$, we can
  compute a $(1+\varepsilon)$-approximation of the minimum multicut of~$G$
  with respect to $(T,R)$ in time $f(\varepsilon,g,t)\cdot n \log n$, where
  $f(\eps,g,t)=(g+t)^{O(g+t)^3}\cdot(1/\varepsilon)^{O(g+t)}$.
\end{theorem}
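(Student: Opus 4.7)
The plan is to combine topological techniques on the embedded graph with a Baker/Borradaile--Klein--Mathieu style approximation scheme applied to a Steiner-tree problem in a covering space. The approach proceeds in three phases.

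\textbf{Phase 1: Topological preprocessing.} First, we preprocess so that $G$ is cellularly embedded on a surface $\Sigma$ of genus $g$ with the terminals at distinct vertices. Dualizing in $\Sigma \setminus T$, a multicut corresponds to a family of dual curves whose removal separates every terminal pair in $R$. The \emph{topological type} of such a family is its vector of homotopy classes in $\Sigma \setminus T$. Since each curve in an optimal family can be assumed to be simple, essential and with bounded combinatorial complexity in a polygonal schema of $\Sigma \setminus T$, the number of relevant types is bounded by a function of $g$ and $t$ alone. We enumerate these types, and for each fixed type $\tau$, reduce to finding a minimum multicut realizing $\tau$.

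\textbf{Phase 2: Lifting to the universal cover.} For a fixed $\tau$, we pass to the universal cover $\widetilde{\Sigma \setminus T}$, in which each terminal becomes a discrete set of punctures and each dual curve unwraps into a family of arcs. The core structural claim, the ``novel characterization'' promised in the abstract, is that a minimum multicut of type $\tau$ corresponds \emph{exactly} to a minimum-weight collection of Steiner trees in the lifted dual graph, each tree spanning a prescribed finite tuple of terminal lifts determined by $\tau$. Since $\tau$ has bounded complexity, only a ``window'' of the cover of combinatorial size at most $(g+t)^{O(g+t)} \cdot n$ is algorithmically relevant, and the window can be constructed in near-linear time.

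\textbf{Phase 3: Approximating the Steiner trees.} On the bounded window, we invoke a $(1+\eps)$-approximation for multi-component Steiner trees in surface-embedded graphs, in the spirit of Borradaile--Klein--Mathieu and its extensions to bounded genus: the window is covered by bricks of bounded diameter relative to $\OPT$, $O(1/\eps)$ portals are placed on each brick boundary, and a dynamic program enumerates crossing patterns across the boundaries. The number of patterns is singly exponential in the number of terminal lifts and the genus of the window, yielding the claimed factor $(g+t)^{O(g+t)^3} \cdot (1/\eps)^{O(g+t)}$, while the DP itself runs in near-linear time in $n$.

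The principal technical obstacle will be Phase~2: rigorously proving that minimum multicuts are \emph{exactly} captured by finite Steiner-tree families in a controlled window of the universal cover, with no double-counting of separations and without losing any required one. This requires careful homotopy-class bookkeeping, a precise choice of window ensuring every separating arc of type $\tau$ has a lift inside it, and a uniform treatment of orientable and non-orientable surfaces; keeping all of this computable in near-linear time is the main source of friction.
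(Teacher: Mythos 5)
There is a genuine gap, and it is concentrated in your Phase~2. You claim that a minimum multicut of a fixed topological type $\tau$ corresponds \emph{exactly} to a minimum-weight family of Steiner trees in a window of the universal cover ``spanning a prescribed finite tuple of terminal lifts determined by $\tau$.'' This cannot be right as stated. First, the dual curves of a multicut do not attach to terminals at all: they run around the terminals, and the points at which the optimal multicut dual must be cut to become a forest (branch points, or cut points on its edges) are arbitrary points of the surface whose locations are \emph{not} determined by the homotopy type $\tau$. Pinning down those locations is exactly the expensive step: the known exact algorithm guesses vertex positions and pays $n^{O(\sqrt{g^2+gt})}$, and since \textsc{Multicut} is W[1]-hard on planar graphs parameterized by $t$ (Marx), an exact reduction to Steiner trees on a bounded, $\tau$-determined terminal set would yield an exact FPT algorithm via Dreyfus--Wagner, contradicting that hardness (unless FPT $=$ W[1]). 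The actual proof only obtains an \emph{approximate} version of your claim, and the bulk of the work is precisely what your proposal omits: constructing, in near-linear time, a small family of candidate \emph{skeleta} of length $O((g+t)\OPT)$ (via exhaustive families of cycles and computations in annular covers), proving that some near-optimal multicut dual can be deformed, at cost $1+O(\eps)$, so that the skeleton meets every one of its cycles, and then placing $O((g+t)^2/\eps)$ portals on the skeleton so that the solution can be rerouted through a bounded portal set that serves as the Steiner terminals in the universal cover. Without this structural existence theorem there is no well-defined, bounded-size Steiner instance to solve.

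Phase~3 has a second, independent problem: you invoke a Borradaile--Klein--Mathieu brick-decomposition PTAS for the Steiner computation, but that machinery has running time exponential in $1/\eps$ (the dynamic program over connection patterns across $\Theta(1/\eps)$ portals per brick), which is incompatible with the claimed bound $(1/\eps)^{O(g+t)}$; the statement you are proving has polynomial dependence on $1/\eps$, and the intended argument explicitly avoids brick decompositions for exactly this reason. Once the portal structure is in place, the trees are computed \emph{exactly} by Dreyfus--Wagner on $O(g+t)$ portal terminals inside a relevant region of the universal cover of size $(g+t)^{O((g+t)^2)}n$, and the long cycles of the near-optimal dual (which cannot be handled by the portal argument) are treated separately as shortest homotopic closed curves in annular covers. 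So both the central structural claim and the final approximation step of your outline need to be replaced before the argument can go through.
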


\begin{figure*}[ht]
  \begin{center}
    \def\svgwidth{\textwidth}
    \begingroup%
  \makeatletter%
  \providecommand\color[2][]{%
    \errmessage{(Inkscape) Color is used for the text in Inkscape, but the package 'color.sty' is not loaded}%
    \renewcommand\color[2][]{}%
  }%
  \providecommand\transparent[1]{%
    \errmessage{(Inkscape) Transparency is used (non-zero) for the text in Inkscape, but the package 'transparent.sty' is not loaded}%
    \renewcommand\transparent[1]{}%
  }%
  \providecommand\rotatebox[2]{#2}%
  \ifx\svgwidth\undefined%
    \setlength{\unitlength}{4818.28099861bp}%
    \ifx\svgscale\undefined%
      \relax%
    \else%
      \setlength{\unitlength}{\unitlength * \real{\svgscale}}%
    \fi%
  \else%
    \setlength{\unitlength}{\svgwidth}%
  \fi%
  \global\let\svgwidth\undefined%
  \global\let\svgscale\undefined%
  \makeatother%
  \begin{picture}(1,0.33246604)%
    \put(0,0){\includegraphics[width=\unitlength,page=1]{illustration.pdf}}%
    \put(0.14203605,0.00075721){\color[rgb]{0,0,0}\makebox(0,0)[b]{\smash{a.}}}%
    \put(0.50629305,0.00009404){\color[rgb]{0,0,0}\makebox(0,0)[b]{\smash{b.}}}%
    \put(0.8660031,0.00009404){\color[rgb]{0,0,0}\makebox(0,0)[b]{\smash{c.}}}%
  \end{picture}%
\endgroup%

    \caption{a. An instance of multicut (the polygons and stars denote pairs of terminals) and a solution. b. We consider the \em dual \em of the solution, and treat terminals as \em boundaries \em of the surface. c. The strategy for our algorithm is to compute a set of \emph{portals} (Section~\ref{S:portals}), whose removal cuts a near-optimal solution into trees and cycles.
The trees can then be computed (Section~\ref{S:alg}) as Steiner trees in the universal cover of the surface, which is obtained by
gluing a new copy of the surface each time we go through an edge of a system of arcs of the surface
(dashed lines---the system of arcs cuts the surface into a disk).
    }

    \label{F:illustration}
  \end{center}
\end{figure*}

\paragraph*{Comparison with existing hardness results}

As hinted at before, we argue that our result is, from various points
of view, the best possible (even if we were to restrict ourselves to planar
graphs, or equivalently assuming $g=0$).  Indeed, through its multiple parameters,
it finds a very intricate middle ground amidst the flurry of hardness
results that have been established for the \textsc{Multicut}
problem.
\medskip 
More precisely:
\begin{enumerate}
\item\textbf{Dependence on the number of terminals.} \textsc{Multicut} is
  known to be hard to approximate on planar graphs: Garg et
  al.~\cite{gvy-pdaifm-97} have proved that it is APX-hard on unweighted
  stars. Therefore, the exponential dependence on the number of terminals
  cannot be improved to a polynomial one, no matter the dependence on the
  genus, as this would yield a PTAS for planar graphs.

\item\textbf{Approximation factor.} \textsc{Multicut} is hard to solve
  exactly for planar graphs, even from the point of view of parameterized
  complexity: Marx~\cite{m-tlbpwc-12} has shown that it is W[1]-hard for
  planar graphs when parameterized by the number of terminals.  Actually,
  he has proved a lower bound of $n^{\Omega(\sqrt t)}$ assuming the
  Exponential Time Hypothesis.  Therefore,
  there is no exact FPT algorithm parameterized by $t$ unless W[1]=FPT (no matter the dependence in the genus).
  
\item\textbf{Dependence on the approximation factor.} The previous
  observation can be refined in the following way, using a parameterized
  version of the classical argument showing that ``integer-valued''
  strongly NP-complete problems do not admit an FPTAS. Indeed, since the reduction
  of Marx~\cite{m-tlbpwc-12} also applies to unweighted planar graphs, it
  also precludes an algorithm with running time
  $f(1/\varepsilon) h(t) p(n)$ for some computable function  $h$ and
  polynomials $f$ and $p$, which are even allowed to depend arbitrarily on $g$.
  Such an algorithm would yield an \textit{exact} FPT algorithm for
  \textsc{Multicut} on unweighted planar graphs by setting $\varepsilon=1/\Omega(n)$,
  since a solution to an unweighted \textsc{Multicut} instance has value
  $O(n)$.

\item\textbf{Dependence on the genus.} For general graphs,
  \textsc{Multicut} is known to be
  hard to approximate, even for fixed values of $t$: Dahlhaus et
  al.~\cite[Theorem~5]{djpsy-cmc-94} have proved that it is APX-hard for
  any fixed $t \geq 3$. Therefore (unless P=NP), the exponential dependence
  on the genus of the graph cannot be improved to a polynomial one, no
  matter the dependence on the number of terminals, as this would
  yield a $(1+\eps)$-approximation for general graphs (thus
  contradicting Dahlhaus et al.).

\end{enumerate}

\paragraph*{Comparison with existing algorithms}

Our result lies at the interface of parameterized complexity and
approximation algorithms---we refer to the article of Marx~\cite{m-pcaa-08}
for a survey of the results of this type. Within the language of that
survey, our algorithm is a fixed parameter tractable approximation scheme
(fpt-AS) with respect to its parameters (the genus, the number of
terminals, and the approximation factor).  Furthermore, we highlight that
the dependence on $1/\varepsilon$ is polynomial here. In particular, our
algorithm is a near-linear Fully Polynomial Time Approximation Scheme
(FPTAS) if the number of terminals and the genus are constant.  This is a
rare occurrence for approximation schemes: Indeed, as mentioned above, in
the unparameterized world, such an approximation scheme cannot exist for
``integer-valued'' strongly NP-complete problems. Thus, it is not clear how
to make use of the successful techniques designed for obtaining PTASes for
planar and embedded graphs, like for example ``brick decompositions''
(\eg~\cite{bhkm-ptaspm-12,bhm-assfpg-11,bdt-ptass-09,bkm-onasst-09}), as
they usually lead to algorithms that have exponential dependence in
$\varepsilon$.  Hence to obtain a polynomial dependence in $\eps$, we
develop a new, specific toolset.

The \textsc{Multicut} problem has been the subject of many exact or
approximate algorithms.  The most relevant to this article is a recent
algorithm~\cite{c-mpbgg-17} by the second author of this paper which solves
the exact version of \textsc{Multicut} in time
$O((g+t)^{O(g+t)}n^{O(\sqrt{g^2+gt})})$.  There also exist constant-factor
approximations to solve \textsc{Multicut} on graphs of bounded genus (or
excluding some fixed minor) using techniques based on padded decompositions
(see \eg~\cite{aggnt-crtspd-14,ft-idtgef-03,kpr-exndmf-93}), and Garg et
al.~\cite{gvy-amfmmc-96} have provided a $O(\log t)$-approximation for
general graphs. Regarding parameterized algorithms, a recent flurry of
results has culminated into two simultaneous
proofs~\cite{bdt-mfpt-11,m-fptmps-14} that unweighted \textsc{Multicut} is
FPT when parameterized by the \textit{cost of the solution}.  We emphasize
that this result is not comparable to ours due to the difference in the
choice of parameters.

A special instance of this problem that has been more studied is the
\textsc{Multiway cut} problem (also known as \textsc{Multiterminal cut}),
where the set~$R$ of pairs of terminals comprises all the pairs of vertices in
the set $T$ of terminals.  All the hardness results mentioned above for
\textsc{Multicut} were actually proved for \textsc{Multiway cut}, 
with one exception: The planar version of \textsc{Multiway cut} is not
APX-hard, and indeed Bateni et al.~\cite{bhkm-ptaspm-12} have provided
a PTAS for this planar problem. Regarding exact algorithms, Klein and
Marx~\cite{km-spktc-12} have shown how to solve
planar \textsc{Multiway cut} exactly in time $2^{O(t)}\cdot
n^{O(\sqrt{t})}$, coming very close to the lower bound of
$n^{\Omega(\sqrt t)}$ mentioned above. Chekuri and
Madan~\cite{cm-amdg-17} have recently investigated the complexity of
approximating \textsc{Multicut} with weaker constraints on the terminal
pairs, in both the undirected and the directed setting.

\paragraph*{Overview of the strategy and of the techniques}

The previous works on cuts and flows in embedded graphs have revealed
a strong relationship between these a priori purely combinatorial
constructs and some \textit{topological} aspects of the underlying
surface. For example, \textit{homology} is a key tool in the study of
minimum cuts of graphs on
surfaces~\cite{cen-mcshc-09,en-mcsnc-11}. Similarly, the starting
point of our algorithm is the observation that the \textsc{Multicut}
problem for surface-embedded graphs is inherently a topological
problem; this opens the possibility to use and enhance techniques from
computational topology of graphs on surfaces, which turns out to be 
the key to our approach.  We stress that these topological techniques are
essential, and are not a consequence of the fact that our result involves
surface-embedded graphs:
neither the algorithm nor its proof of correctness gets significantly
simpler by considering only planar graphs.

Indeed, in the same way that minimal cycles are dual to minimal cuts
in planar graphs, the edges dual to a minimum multicut form in some
sense the ``shortest dual graph'' that separates topologically the
required sets of terminals; see Figure~\ref{F:illustration}(b) and
Section~\ref{S:overview}.  Henceforth, we always use this dual
reformulation and can completely forget about the original problem.
While this topological point of view already underlies the early
algorithms for planar \textsc{Multiway Cut}~\cite{djpsy-cmc-94}, it
was only recently made precise by the second author of this
article~\cite{c-mpbgg-17}, based on the framework of
\textit{cross-metric surfaces}~\cite{ce-tnpcs-10}.  He also proved
that the topology of such an optimal \textit{multicut dual} can be
quantified appropriately by bounding its number of intersections with
a certain \emph{cut graph} of the surface (equivalently in this paper,
a \emph{system of arcs}---in the planar case, a certain tree spanning
the terminals).  The algorithm of~\cite{c-mpbgg-17} for solving
\textsc{Multicut} exactly ``guesses'' (by enumerating all possibilities)
the topology (i.e.,
the \textit{homotopy class}) of the edges of the optimal multicut
together with the position of the vertices. In particular, an easy
instance of \textsc{Multicut} is one where all the components of an
optimal multicut dual are simple cycles. In this case, their topology
can be guessed and they can all be computed separately as shortest
homotopic cycles. Therefore, the bulk of our effort in this paper
concerns dealing with the graphical components of the dual solution,
and we will disregard cyclic components in the rest of this overview.

A new topological idea driving our approach is the following:
Suppose that we have managed to represent a multicut dual as a union 
of trees (see Figure~\ref{F:illustration}(c)).  Replacing, in the
multicut, such a tree with a shortest tree with the same topology
(roughly, homotopy) and the same location for the leaves preserves the
multicut property, and of course does not increase the length.  (This
may seem intuitively clear, but a great care is needed to make this
argument rigorous.)  Knowing only the topology of the tree and the
location of the leaves, we show that one can compute such a shortest
tree in near-linear time parameterized by the genus and the number of
terminals (Section~\ref{S:alg}); this subroutine combines topological
ideas to compute shortest homotopic curves with universal cover
constructions~\cite{ce-tnpcs-10} with the Steiner tree algorithm by
Dreyfus and Wagner~\cite{dw-spg-71,emv-ssmmc-87} that is
fixed-parameter tractable in the number of terminals.

We then mix these topological tools with standard techniques
stemming from the
design of approximation schemes in low-dimensional
geometric inputs (originating with Arora~\cite{a-ptaset-98} and
Mitchell~\cite{m-gsaps-99}) which have been successfully adapted to planar graphs, see for example~\cite{bhkm-ptaspm-12,bkm-onasst-09}.  As hinted above, the number of possible
topologies for a tree-like portion of an optimal solution is bounded,
and thus can be ``guessed'', so all the difficulty resides in
computing a small set of points on the surface, called \emph{portals},
whose removal cuts a near-optimal solution into trees (see
Figure~\ref{F:illustration}(c) and Section~\ref{S:portals}).  This, in
turn, boils down to computing a \emph{skeleton}, a graph of controlled
length with respect to the length of the optimal solution that cuts a
near-optimal solution into a forest.  The portals can then be placed
at regular intervals along the skeleton, to ensure that the
near-optimal solution, after creating short detours, passes through
these portals.

Thus, the only remaining difficulty is to compute the skeleton efficiently.
This is actually the most technical part of this article
(Section~\ref{S:skeleton}), also extensively relying on many topological
tools, in particular certain covering spaces.  For this approach to work,
we (roughly) need to treat the ``long'' cycles of the solution separately
(Section~\ref{S:nolongcycle}), and additional technicalities are needed to
obtain a small dependence in the genus and the number of terminals
(Section~\ref{S:exhaustive}).  We give a more detailed overview of the
construction of the skeleton in Section~\ref{S:overview2}, once we have
introduced the necessary terminology.

\paragraph*{General remarks}

In summary, while we use some ideas from PTAS in computational geometry and
planar graphs, the vast majority of the arguments, and the most delicate
ones, are of topological nature.  We note that, throughout this article, we
use several techniques which are somewhat at odds with the typical setting
encountered in computational topology for graphs on surfaces.  For example,
computations in some covering space, when projected back to the surface,
may involve undesired crossings; and it is more natural to have a
near-optimal solution take detours after the placement of portals if it is
allowed to cross itself. This is why many of our arguments deal with
\textit{drawings} of graphs instead, which allow a graph to intersect
itself. One of the perks of our approach is that since being a multicut
dual is a homotopical property, most of our techniques dovetail naturally
with this change of perspective, and the setting of cross-metric surfaces
also allows for a seamless algorithmic treatment of drawings of graphs.

\section{Preliminaries}
\label{S:prelim}
\subsection{Graphs, surfaces, and homotopy}

We recall here standard definitions on the topology of surfaces.  For
general background on topology, see for example Stillwell~\cite{s-ctcgt-93}
or Hatcher~\cite{h-at-02}.  For more specific background on the topology of
surfaces in our context, see recent articles and surveys on the same
topic~\cite{ce-tnpcs-10,c-ctgs-17}.

Let $S$ be a (compact, connected) \emphdef{surface} without boundary.  If
$S$ is an orientable surface, its \emphdef{genus} $g\geq0$ is even, and $S$
is (homeomorphic to) a sphere with $g/2$ handles attached.  If $S$ is a
non-orientable surface, its \emphdef{genus} is a positive integer~$g$, and
$S$ is a sphere with $g\geq1$ disjoint disks replaced by M\"obius strips.
A \emphdef{surface with boundary} is obtained from a surface without
boundary by removing a set of open disks with disjoint closures.  The boundary
of such a disk is a \emphdef{boundary component} of~$S$.

A \emphdef{path} is a continuous map from $[0,1]$ to~$S$.  Two paths are
\emphdef{homotopic} if there is a continuous deformation between them on
the surface that keeps the endpoints fixed.  An \emphdef{arc} is a path
with endpoints on the boundary of the surface.  A \emphdef{closed curve} is
a continuous map from the unit circle to~$S$.  Two closed curves are
(freely) \emphdef{homotopic} if there is a continuous deformation between
them.  A path or closed curve is \emphdef{simple} if it is one-to-one. 

A closed curve~$\gamma$ is \emphdef{contractible} if it is homotopic to a closed curve that is a constant map.  If $\gamma$ is simple, a small neighborhood of the
image of~$\gamma$ is either an annulus or a M\"obius strip; in the former
case, $\gamma$ is \emphdef{two-sided}, and otherwise it is
\emphdef{one-sided}.  Of course, one-sided curves exist only if the surface
is non-orientable\footnote{The reader not familiar with topology may safely
  skip all considerations involving non-orientable surfaces and one-sided
  curves.  If the input graph is planar, then one can even restrict oneself
  to a surface that is a sphere with disjoint disks removed.}.

We consider \emphdef{drawings} of graphs $C$ on the surface~$S$, which
are assumed to be in general position: There are finitely many
(self-)intersection points of the drawing, each involving exactly two
edges, and these two edges actually cross at that point, in their
relative interior. A similar definition holds for closed curves, and
two graphs drawn in~$S$ are in general position with respect to each other if their
union is in general position.  We frequently
abuse terminology and identify the abstract graph~$C$ with its drawing
on~$S$.  An \emphdef{embedding} of the graph~$C$ is an injective
(``crossing-free'') drawing.  An embedding is \emphdef{cellular} if
its faces are disks.

We say that $C$ is an \emphdef{even graph} if every vertex has even degree.
We will work with \emphdef{homology} over the field $\mathbb{Z}_2$, which
is an equivalence relation between even graphs that is coarser than
homotopy.  We can sketch the definition here: Let $C_1$ and~$C_2$ be two
even graphs embedded on~$S$, crossing finitely many times.  Then the
closure of the symmetric difference of the images of~$C_1$ and~$C_2$,
denoted by $C_1+C_2$, is also the image of an even subgraph embedded
on~$S$; we say that $C_1$ and~$C_2$ are \emphdef{homologous} if the faces
of $C_1+C_2$ can be colored with two colors so that all the boundary
components of~$S$ have the same color and the boundary between the two colors
is precisely the image of~$C_1+C_2$.
An even graph is \emphdef{null-homologous} if it is homologous to a contractible closed curve. We refer to, e.g.,~\cite{cen-mcshc-09,c-ctgs-17} for more background on homology.

Let $\Gamma$ be a set of closed curves in general position on a
surface~$S$.  A \emphdef{monogon} is a disk on~$S$ whose interior does
not meet~$\Gamma$ and whose boundary is formed by a subpath of a curve
in~$\Gamma$.  A \emphdef{bigon} on~$S$ is a disk on~$S$ whose interior
does not meet~$\Gamma$ and whose boundary is formed by the
concatenation of two subpaths of curves in~$\gamma$ (possibly the same
curve) such that these two subpaths are disjoint except at their
endpoints.

If $C$ is drawn on~$S$ with crossings, there is an ambiguity when
talking about \emphdef{cycles} of $C$. We adopt the convention that a
\emphdef{cycle} in $C$ is a cycle in $C$ in the graph-theoretical
sense (a closed walk, not reduced to a single vertex, without vertex
repetition), or by abuse of language the image of that cycle. Thus a
cycle in $C$ is a closed curve on~$S$ (which may cross itself). 

The input of our algorithm is a graph that can be embedded on a surface,
orientable or not, with genus~$g$.  Every graph can be embedded on some
surface of sufficiently large genus.  Actually, there is an algorithm that takes as
input a graph~$G$ with $n$ vertices and edges, and a surface $S$ specified
by its genus~$g$ and by whether it is orientable, and decides in
$2^{\text{poly}(g)}\cdot n$ whether $G$ embeds on~$S$.  Moreover, in the affirmative, it
computes a cellular embedding of~$G$ on a surface $S'$ with the same
orientability as~$S$ and with genus at most~$g$~\cite{kmr-sltae-08}.

It is useful to consider the surface with boundary obtained by removing
small disks around each terminal; this operation transforms every terminal
vertex~$v$ of~$G$ into $\deg(v)$ one-degree vertices on the boundary
of~$S$.  Henceforth, {\color{blueblack}$\bm{S}$} is a surface with
genus~{\color{blueblack}$\bm g$} and~{\color{blueblack}$\bm t$} boundary
components.  Each boundary component of~$S$ corresponds to a terminal, and
we can thus view the set of terminal pairs {\color{blueblack}$\bm R$} as a
set of pairs of boundary components to be separated.

\subsection{Cross-metric surfaces, systems of arcs, and topology}\label{S:topology}

A \emphdef{cross-metric surface} $(S',G')$ is a surface~$S'$ (possibly with
boundary) equipped with a positively edge-weighted graph~$G'$ cellularly
embedded on~$S'$~\cite[Section~1.2]{ce-tnpcs-10}.  Curves (paths and closed
curves) on $(S',G')$ are assumed to be in general position with~$G'$.  The
\emphdef{length} of a curve~$c$, sometimes denoted by~$\color{blueblack}{\bm{|c|}}$, is the sum of the weights of the edges of~$G'$
crossed by the curve, counted with multiplicity.  This provides a discrete
notion of metric for~$S'$.  In this paper, the cross-metric surface we
frequently use is $(S,G)$, as defined in the previous paragraph. In
particular, $(S,G)$ always has at least one boundary component. When we
use terms such as ``shortest'' or ``distance'', this implicitly refers to
this notion of distance.

As in previous papers~\cite[Proposition~4.1]{c-mpbgg-17} (see also, e.g.,
Chambers et al.~\cite{ccelw-scsh-08} and Erickson and Nayyeri~\cite{en-mcsnc-11}), we will use an algorithm to compute
a \emphdef{greedy system of arcs}~$K$:
\begin{proposition}\label{P:cutgraph}
  In $O(n\log n+(g+t)n)$ time, we can compute a set~$K$ of $O(g+t)$
  disjoint, simple arcs on~$S$ in general position with respect to~$G$,
  such that $S\setminus K$ is a disk.  Moreover, each arc is the
  concatenation of two shortest paths on~$S$, and is a shortest homotopic
  path on~$S$.
\end{proposition}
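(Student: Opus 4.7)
The plan is to adapt the classical greedy system-of-loops construction of Erickson--Whittlesey and Chambers et al.~\cite{ccelw-scsh-08} to surfaces with boundary, by treating the $t$ boundary components as sources of a multi-source Dijkstra in the dual graph.

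First, I would compute a shortest-path forest $F$ in the dual graph $G^*$ rooted at the $t$ dual vertices corresponding to boundary components. Since $G^*$ has $O(n)$ vertices and edges, multi-source Dijkstra runs in $O(n\log n)$ time. For every face $f$ of $G$ this produces a dual path $p(f)$ from $f$ to its closest boundary component; geometrically, $p(f)$ realizes a shortest arc on the cross-metric surface $(S,G)$ from an interior point of $f$ to the boundary, of length $d(f) = |p(f)|$. Next, to each dual edge $e^* = f_1 f_2$ not in $F$, I associate the candidate arc $\alpha_e = p(f_1) \cdot e^* \cdot p(f_2)^{-1}$, which by construction is the concatenation of two shortest paths meeting across the single edge~$e$; its length equals $d(f_1) + w(e) + d(f_2)$.

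Then, I would sort the candidate arcs in increasing order of length and perform a matroid-style greedy selection: accept $\alpha_e$ iff the dual edge $e^*$, together with $F$ and the duals of the previously accepted arcs, does not create a subgraph of $G^*$ that already cuts $S$ along that homotopy class---equivalently, iff $e^*$ is not in the cotree grown so far by a tree-cotree decomposition, which is maintained in $O(\alpha(n))$ per query using union-find. An Euler characteristic count ($\chi(S) = 2 - 2g - t$ or $2 - g - t$, with each cut along an arc raising $\chi$ by $1$, and the final disk having $\chi = 1$) shows that exactly $2g + t - 1$ (orientable) or $g + t - 1$ (non-orientable) arcs are accepted, so $|K| = O(g+t)$ and $S \setminus K$ is a single open disk. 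Realizing each accepted arc explicitly, by traversing the forest $F$, takes $O(n)$ time per arc, giving the claimed total $O(n \log n + (g+t) n)$.

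The main obstacle will be verifying the two metric properties claimed in the statement. For \emph{shortest in the homotopy class}: if some arc $\gamma$ homotopic to $\alpha_e$ had strictly smaller length, then standard bigon elimination between $\gamma$ and $\alpha_e$ would yield either a shorter dual path from one of the $f_i$'s to the boundary---contradicting the optimality of the Dijkstra distances---or a contractible region violating the minimality of the intermediate crossing at~$e$. For \emph{disjoint and simple}: two accepted arcs may share a prefix or suffix along the forest $F$, but since every edge of $F$ is the dual of a single edge of $G$ and the perturbation of arcs into parallel copies can be carried out inside a small tubular neighborhood of $F$ disjoint from any other edge of $G$ they cross, this preserves both the homotopy class and the cross-metric length. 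The resulting arcs are then disjoint, simple, in general position with respect to $G$, and realize the stated properties.
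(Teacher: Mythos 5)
Your construction is essentially the one the paper imports, namely the greedy system of arcs of \cite{c-mpbgg-17,ccelw-scsh-08}: a shortest-path forest rooted at the boundary (Dijkstra in the dual, i.e.\ cross-metric shortest paths), candidate arcs $p(f_1)\cdot e\cdot p(f_2)^{-1}$, a greedy tree--cotree selection, an Euler-characteristic count giving $O(g+t)$ arcs, and a perturbation making the arcs simple and pairwise disjoint; the running-time accounting also matches. The genuine gap is your argument for the key metric claim, that each arc of $K$ is a shortest path in its homotopy class. Your bigon-elimination sketch never uses the greedy selection, yet the claim is simply false for a general candidate arc: if the two forest paths of a non-forest dual edge $e^*$ reach nearby points of the same boundary component and $\alpha_e$ cobounds a disk with a subarc $b$ of the boundary, then $\alpha_e$ is homotopic (endpoints fixed) to $b$, whose cross-metric length can be much smaller (possibly zero), while $|\alpha_e|\geq d(f_1)+w(e)+d(f_2)>0$. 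So no argument that only invokes the optimality of the Dijkstra distances and a local bigon between a shorter homotopic arc $\gamma$ and $\alpha_e$ can establish the property; moreover, it is unclear what ``the minimality of the intermediate crossing at $e$'' is supposed to mean, and a bigon between $\gamma$ and $\alpha_e$ does not isolate a subpath that competes with some $p(f_i)$.

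What is actually needed, and what the paper's remark after the proposition points to, is homological: the greedily selected arcs are shortest representatives of their relative $\mathbb{Z}_2$-homology classes in $H_1(S,\partial S;\mathbb{Z}_2)$ (this is where the greedy/matroid exchange structure is genuinely used, following the analysis in \cite{c-scgsp-10}), and since two homotopic paths are homologous, shortest-in-homology-class implies shortest-in-homotopy-class. This is also where your acceptance rule should live: accept $e^*$ exactly when $[\alpha_e]$ is independent, over $\mathbb{Z}_2$, of the classes of the previously accepted arcs (equivalently, keep the edges left over by a maximum-weight spanning cotree), rather than the vaguer ``not in the cotree grown so far.'' This precise criterion both discards homologically trivial candidates such as the example above and is what legitimizes your Euler-characteristic count, since it guarantees that every accepted cut is essential, so that the complement stays connected throughout and $S\setminus K$ ends up a single open disk.
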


We note that the presentation in~\cite{c-mpbgg-17} is slightly different
because the construction takes place in the surface without boundary, but
the result is the same.  When we write that the arc is the concatenation of
two shortest paths $uv$ and~$vw$, here it could be that the point~$v$ lies
on an edge of~$G$, not taken into account for computing the lengths of
paths $uv$ and~$vw$.  Also, the fact that each arc is a shortest homotopic
path is folklore and follows, e.g., from the analysis in~\cite{c-scgsp-10}
and the fact that two homotopic paths are homologous.

As an illustrating special case for the above proposition, assume that $S$
has genus zero (a sphere with disjoint open disks removed).  If we contract
each boundary component to a point, then $K$ is a tree spanning
the terminals.  Thus, the set $K$ can be computed using, e.g., a minimum
spanning tree algorithm, slightly adapted to the cross-metric setting and
to accommodate for the boundary components instead of the terminals.

Henceforth, we fix one greedy system of arcs $K$ for $(S,G)$ that will be used throughout the paper. All graphs are drawn in general position with respect to
$K\cup G$, and the \emphdef{complexity} of a graph drawn on $S$ is the sum of its number of vertices, edges, and the number of intersections with $G$ (this coincides with the sum of the lengths of the edges if $G$ is unweighted). In order to avoid issues with non-uniqueness of shortest paths, we will use throughout the article the rule that whenever we compute a shortest (homotopic) path, out of all the possible paths having the same length, we output one with the fewest crossings with $K$.

Let $C$ be a graph embedded on~$S$ and in general position with~$K$.  Since
$K$ cuts~$S$ into a disk, we can represent~$C$ by its image in the disk.
More precisely, we define the \emphdef{topology} of~$C$ to be the
information of the combinatorial map of the overlay of~$C$ and~$K$.

Finally, one notation: given a closed curve $\gamma$, we denote by
{\color{blueblack}$\bm{\gamma^o}$} a shortest closed curve in its
homotopy class.

\subsection{Covering spaces and shortest homotopic curves}

Our algorithm relies on
arguments used to compute shortest homotopic paths and
closed curves on the surface with boundary~$S$~\cite[Sections 1
and~6]{ce-tnpcs-10}. In a nutshell, the main idea is that computing shortest homotopic paths or shortest homotopic closed curves generally boils down to computing shortest paths in some region of controlled size (the \emph{relevant region}) within some space larger than the surface (a \emph{covering space}). Since our setting is 
simpler than the one in~\cite{ce-tnpcs-10}
(because $S$ has boundaries and and we not need the best
possible complexity with respect to $g$ and~$t$), 
we find it
worthwhile to summarize and explain these tools in a way that is tailored
to our purpose; see Appendix~\ref{A:covering}. 
The reader
unfamiliar with covering spaces will also find the relevant definitions
there.

\section{Multicut duals and overview}
\label{S:overview}
\subsection{Multicut duals}

A \emphdef{multicut dual}~$C$ is a graph drawn (possibly with crossings)
on~$S$ such that every path in $S$ connecting two boundaries of~$S$
corresponding to a terminal pair $\{t_1,t_2\}\in R$ meets the image
of~$C$. Thus, the set of edges crossed by~$C$ forms a multicut. Conversely,
any multicut corresponds to a multicut dual of the same cost; thus, it
suffices to compute a shortest multicut dual (see
also~\cite[Proposition~3.1]{c-mpbgg-17}). Henceforth, we focus on the
problem of computing a shortest multicut dual. Throughout this article, $K$ is a greedy system of arcs as defined and
computed in Proposition~\ref{P:cutgraph}.

Most of the structures we consider have small complexity, in the
following sense.
A graph~$C$ drawn on~$S$ is \emphdef{small} if it satisfies the
following conditions:
\begin{enumerate}\renewcommand{\theenumi}{(\roman{enumi})}
\item\label{sm:genpos} $C$ is drawn on~$S$ in general position with respect to $K \cup G$.
\item\label{sm:nodeg01} Each vertex of~$C$ has degree at least two.
\item\label{sm:compl} $C$ has $O(g+t)$ vertices and edges.
\item\label{sm:cross} The number of crossings between $C$ and each arc of~$K$ is $O(g+t)$.
\end{enumerate}
A graph $C$ drawn on~$S$ is \emphdef{eligible} if it satisfies the following conditions:
\begin{enumerate}
\item $C$ is small.
\item $C$ is embedded on~$S$.
\item Each face of $C$ contains at least one terminal.
\end{enumerate}

To determine whether an embedded graph is eligible, the information of its
topology is sufficient.  Thus, we also refer to the topology of any eligible graph as an eligible topology. The two following lemmas show that a shortest multicut dual has an eligible topology, and that the eligible topologies can be enumerated quickly.

\begin{lemma}\label{L:eligibility-struct}
  In $(g+t)^{O(g+t)}$ time, we can enumerate a set of $(g+t)^{O(g+t)}$
  eligible topologies, one of which is the topology of some shortest
  multicut dual.
\end{lemma}
\begin{proof}
  This follows from~\cite[Section~5, Proposition~5.1, and
  Proposition~6.1]{c-mpbgg-17}.  The algorithm is described
  in~\cite[Proposition~6.1]{c-mpbgg-17}.  It outputs eligible topologies
  (satisfying an additional condition that is irrelevant here), except that
  the last condition that each face of~$C$ contains at least one terminal
  is not necessarily enforced.  We thus discard the topologies that do not
  satisfy this condition.

  One of the topologies output by~\cite[Proposition~6.1]{c-mpbgg-17} is the
  topology of some shortest multicut dual.  As it turns out, in the
  algorithm, if the topology of a multicut dual~$C$ is produced, the
  topology of any multicut dual that is a subgraph of~$C$ is also produced,
  and in particular the topology of an inclusionwise minimal subgraph
  of~$C$ that is a multicut dual.  Each face of such an inclusionwise
  minimal subgraph of~$C$ contains at least one terminal.  Thus the
  algorithm produces an eligible topology of some shortest multicut dual.
\end{proof}

We let ${\color{blueblack}\bm{C_\text{\bf OPT}}}$ be a shortest multicut
dual that is eligible (which exists per Lemma~\ref{L:eligibility-struct}), and let
${\color{blueblack}\text{\bf OPT}}$ be its length.

\subsection{Multicut dual property}

We now introduce a sufficient condition for a graph drawn on~$S$ to be
a multicut dual, which will be used throughout this paper.  Let $C$ be
a graph drawn on~$S$.  

We remark that the property of being a multicut dual is not preserved by
homological transformations, contrary to, e.g., the dual of a minimum
cut~\cite{cen-mcshc-09}.  For example, a shortest multicut dual needs not
be an even graph; moreover, a pair of terminals is allowed to be separated by a
positive even number of edges.  However, homology can still be used in this
realm, by also looking at subgraphs of the multicut dual:
We say that $C$ has the \emphdef{multicut dual property} if, for every even
subgraph~$C'_\OPT$ of~$C_\OPT$, there is a subgraph~$C'$ of~$C$ such
that the homology classes of $C'$ and $C'_\OPT$ in the surface~$S$ are
equal.

Obviously, $C_\OPT$ has the multicut dual property.  It may appear strange
that this definition depends on the unknown graph~$C_\OPT$; in particular,
the multicut dual property is not easily checkable.  Roughly speaking,
starting from the unknown graph~$C_\OPT$, we will prove that one can modify
it, while maintaining the multicut dual property, into another graph with
more structural properties.  The existence of this new graph will be useful
for the algorithm.

\begin{lemma}\label{L:homology}
  Let $C$ be a graph drawn on~$S$.  Assume that $C$ has the multicut dual
  property.  Then $C$ is a multicut dual.
\end{lemma}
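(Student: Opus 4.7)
The plan is to argue by contradiction using the $\mathbb{Z}_2$-intersection pairing between relative and absolute homology classes on the surface with boundary $S$. Suppose for contradiction that $C$ is not a multicut dual; then there exist a terminal pair $\{t_1,t_2\}\in R$ and a path $p$ in $G$ from the boundary component $\partial_1$ (of $t_1$) to $\partial_2$ (of $t_2$) whose image avoids the image of $C$. Since $C_\OPT$ is itself a multicut dual, $\partial_1$ and $\partial_2$ lie in distinct connected components of $S\setminus C_\OPT$; let $A$ be the closure of the component containing $\partial_1$, regarded as a $\mathbb{Z}_2$-valued $2$-chain on $S$.

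The key step is to extract from $C_\OPT$ an even subgraph tailored to detect this separation. Let $C'_\OPT$ consist of the edges of $C_\OPT$ that border $A$ on exactly one side, equivalently the edges appearing with nonzero coefficient in $\partial A$ restricted to the interior of $S$. A standard local argument at each vertex $v$ of $C_\OPT$---as one cycles through the face-corners around $v$, the indicator of $A$ flips an even number of times---shows that $C'_\OPT$ has even degree at every vertex and is thus an even subgraph of $C_\OPT$.

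Next I would compute the $\mathbb{Z}_2$-intersection number $[p]\cdot[C'_\OPT]$, viewing $p$ as a relative $1$-cycle in $(S,\partial S)$ and $C'_\OPT$ as an absolute $1$-cycle on $S$. After a small generic perturbation of $p$ inside $S\setminus C$ to make it transverse to $C_\OPT$, each transverse crossing with $C'_\OPT$ records a switch of $p$ between $A$ and $S\setminus A$. Since $p$ enters the interior of $S$ from $\partial_1\subseteq A$ and exits on $\partial_2\not\subseteq A$, the parity of such crossings is odd, so $[p]\cdot[C'_\OPT]=1$. By the multicut dual property applied to $C'_\OPT$, there is a subgraph $C'\subseteq C$ with $[C']=[C'_\OPT]$ in $H_1(S;\mathbb{Z}_2)$, and homology invariance of the intersection pairing then forces $[p]\cdot[C']=1$. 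But $p$ is disjoint from $C$, hence from $C'$, so the geometric intersection number is $0$, a contradiction.

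The main obstacle is choosing the even subgraph $C'_\OPT$ well: a generic even subgraph of $C_\OPT$ need not carry enough homological information to detect the separation of $\partial_1$ from $\partial_2$, but the portion of $\partial A$ lying in the interior of $S$ is designed precisely to do so via Poincar\'e--Lefschetz duality on a surface with boundary. Verifying that this $C'_\OPT$ really is an even subgraph (rather than just a $\mathbb{Z}_2$-chain), and carefully setting up the intersection pairing on a surface with boundary where boundary components may themselves carry nontrivial homology, are the two delicate technical points.
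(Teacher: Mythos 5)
Your proof is correct and follows essentially the same route as the paper: extract an even subgraph of $C_\OPT$ separating the terminal pair, invoke the multicut dual property to obtain a homologous subgraph of $C$, and conclude by invariance of mod-2 crossing parity. The only cosmetic differences are that the paper takes an inclusionwise minimal separating subgraph of $C_\OPT$ (which has exactly two faces, hence is even) instead of the interior part of $\partial A$, and argues the parity directly (a path from $t_1$ to $t_2$ crosses the null-homologous difference an even number of times) rather than by contradiction via the relative/absolute intersection pairing.
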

Note, however, that some multicut duals might not have the multicut dual
property.

\begin{proof}
  (In this proof, we are actually not using the fact that $C_\OPT$ is
  optimal, only that it is a multicut dual that is embedded.) %
  Let $(t_1,t_2) \in R$ be a terminal pair. Since $C_\OPT$ is a multicut
  dual, every path connecting $t_1$ and $t_2$ on $S$ meets $C_\OPT$.  Let
  $C'_\OPT$ be an inclusionwise minimal subgraph of~$C_\OPT$ that separates
  $t_1$ from~$t_2$.  This subgraph has exactly two faces, and every edge is
  incident to each of these two faces; thus, it is an even subgraph
  of~$C'_\OPT$, and there is a subgraph $C'$ of $C$ homologous to $C'_\OPT$
  in $S$, i.e., $C'+C'_\OPT = U$, where $U$ is homologically trivial. Now,
  any path connecting $t_1$ and $t_2$ in $G$ crosses $U$ an even number of
  times (because homology is considered in~$S$), and it crosses $C'_\OPT$
  an odd number of times (because $C'_\OPT$ has two faces, each containing
  one of $t_1$ and~$t_2$). Therefore, each such path crosses $C'$ an odd
  number of times, i.e., at least once.  Since this applies for any
  terminal pair, this shows that $C$ is a multicut dual.
\end{proof}

\subsection{Good multicut duals}

We now define good multicut duals, which are a central concept of
this paper.  A graph~$C$ drawn on~$S$ is a \emphdef{good multicut dual} if,
as an abstract graph, it is the disjoint union of subgraphs
$(C_0,C_1,\ldots,C_k)$ (whose images on~$S$ may have crossings and
self-crossings), satisfying the following conditions.

\newcommand\expl[1]{\textbf{[#1]}}
\begin{enumerate} 
\item\label{c:small}\expl{small} Each of $C_0,C_1,\ldots,C_k$ is small.
\item\label{c:cycles}\expl{structure} $C_1,\ldots, C_k$, viewed as abstract
  graphs, are cycles.  ($C_0$ is arbitrary, and in particular may be non-connected.)
\item\label{c:nbcycles}\expl{number of cycles} $k=O(g+t)$.
\item\label{c:length}\expl{length} The length of~$C$ is at most
  $(1+O(\varepsilon))\OPT$.
\item\label{c:homology}\expl{multicut dual property} $C$ has the multicut
  dual property.
\end{enumerate}

The $C_0$ component of a good multicut dual will be called its
\emphdef{core}.

\begin{figure}
  \centering
\def\svgwidth{14cm}
\begingroup%
  \makeatletter%
  \providecommand\color[2][]{%
    \errmessage{(Inkscape) Color is used for the text in Inkscape, but the package 'color.sty' is not loaded}%
    \renewcommand\color[2][]{}%
  }%
  \providecommand\transparent[1]{%
    \errmessage{(Inkscape) Transparency is used (non-zero) for the text in Inkscape, but the package 'transparent.sty' is not loaded}%
    \renewcommand\transparent[1]{}%
  }%
  \providecommand\rotatebox[2]{#2}%
  \newcommand*\fsize{\dimexpr\f@size pt\relax}%
  \newcommand*\lineheight[1]{\fontsize{\fsize}{#1\fsize}\selectfont}%
  \ifx\svgwidth\undefined%
    \setlength{\unitlength}{3264.08603865bp}%
    \ifx\svgscale\undefined%
      \relax%
    \else%
      \setlength{\unitlength}{\unitlength * \real{\svgscale}}%
    \fi%
  \else%
    \setlength{\unitlength}{\svgwidth}%
  \fi%
  \global\let\svgwidth\undefined%
  \global\let\svgscale\undefined%
  \makeatother%
  \begin{picture}(1,0.46298685)%
    \lineheight{1}%
    \setlength\tabcolsep{0pt}%
    \put(0,0){\includegraphics[width=\unitlength,page=1]{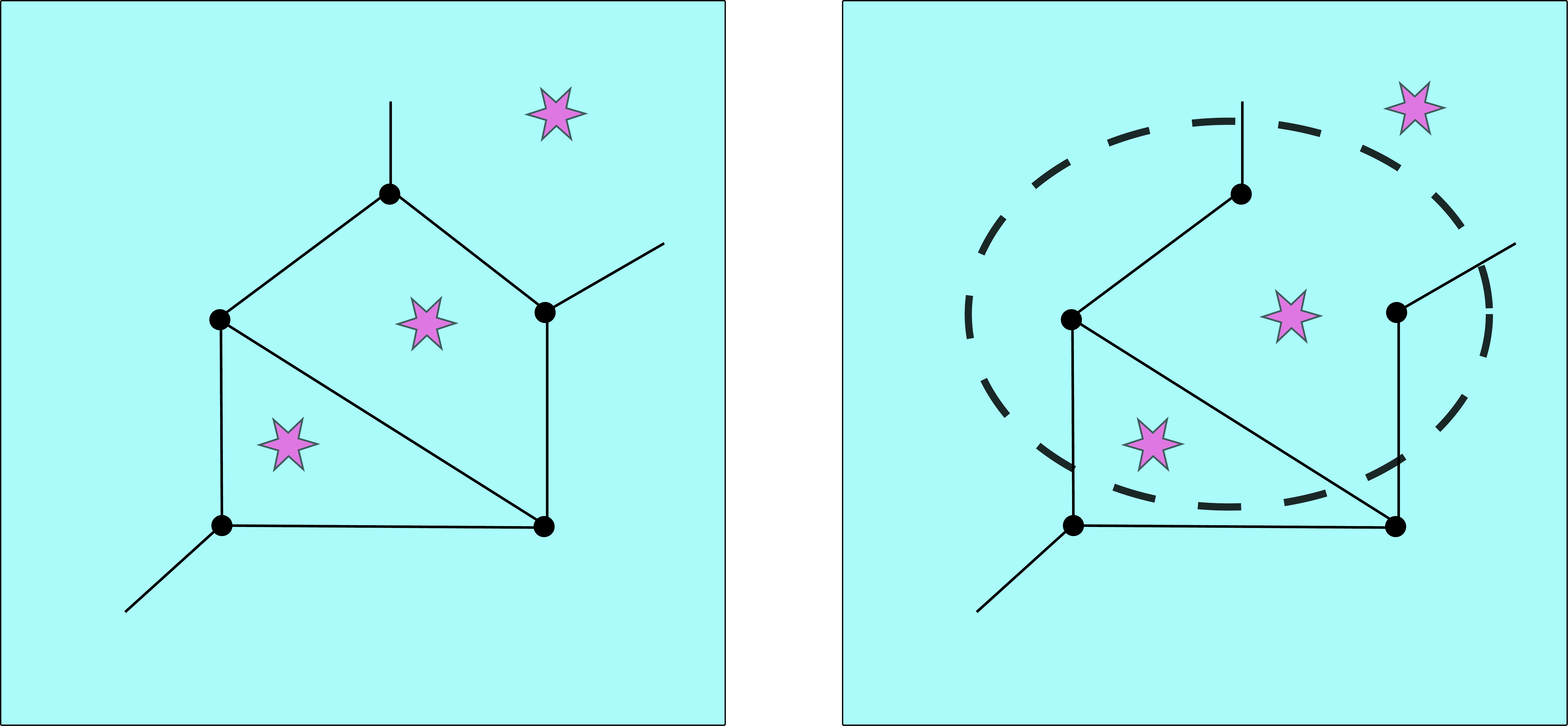}}%
    \put(0.07912247,0.04955894){\color[rgb]{0,0,0}\makebox(0,0)[lt]{\lineheight{0}\smash{\begin{tabular}[t]{l}$C$\end{tabular}}}}%
    \put(0.23975177,0.14240139){\color[rgb]{0,0,0}\makebox(0,0)[t]{\lineheight{0}\smash{\begin{tabular}[t]{c}$\gamma$\end{tabular}}}}%
    \put(0.30800169,0.31323151){\color[rgb]{0,0,0}\makebox(0,0)[t]{\lineheight{0}\smash{\begin{tabular}[t]{c}$e$\end{tabular}}}}%
    \put(0.61389501,0.04120311){\color[rgb]{0,0,0}\makebox(0,0)[lt]{\lineheight{0}\smash{\begin{tabular}[t]{l}$C_0$\end{tabular}}}}%
    \put(0.71046301,0.38936232){\color[rgb]{0,0,0}\makebox(0,0)[t]{\lineheight{0}\smash{\begin{tabular}[t]{c}$C_1$\end{tabular}}}}%
  \end{picture}%
\endgroup%

\caption{Portions of good multicut duals.  Stars denote terminals. Both
  sides depict good multicut duals. On the left, it consists only of its core $C_0$, while on the right, there are two components $C_0$ and $C_1$.}
  \label{F:examplegood}
\end{figure}

 The rationale behind the a priori strange splitting in this
 definition into a core $C_0$ and a family of
 cycles $C_1, \ldots C_k$ will be explained in the next
 subsection. Figure~\ref{F:examplegood} pictures a portion of two good multicut
 duals, one consisting only of a core $C_0$ on the left, and
 one also using a cycle $C_1$ on the right: in both cases the three
 stars are separated. We remark:
\begin{itemize}
\item The definition of a good multicut dual depends on a specific choice
  of a greedy system of arcs $K$ and of a specific choice of an optimum
  multicut dual~$C_\OPT$, but these are fixed throughout this paper (again,
  $C_\OPT$ is not known, but this is not a problem).
\item By the definition of~$C_\OPT$, the multicut dual $\{C_\OPT\}$ (where
  $C_\OPT$ is the core, and there is no cycle) is a good multicut dual with
  an eligible core.
\item We will care in particular about good multicut duals whose cores are
  eligible, namely, which satisfy the additional property: The core is
  embedded, and each face of the core contains at least one terminal.
  However, in the course of the proof, we will work with good multicut
  duals with a core that is not eligible.
\item Condition~\ref{c:homology}, together with Lemma~\ref{L:homology},
  implies that a good multicut dual is a multicut dual.
\item This definition uses the $O(\cdot)$ notation.  Each time the notion
  of good multicut dual is used, the constants hidden in the $O(\cdot)$
  notation are universal, independent from the input of the algorithm, but
  the constants will vary across statements.  More precisely, in some of
  the following sections, the main proposition
  (Propositions~\ref{P:shortcycles}, \ref{P:skeleton}, and~\ref{P:portals})
  involves some good multicut dual, and the constants in the $O(\cdot)$
  notation implicitly increase each time.  Since the number of increases is
  bounded, there is no danger in using this notation.
\item In the rest of the paper, we focus on computing a good multicut dual.
  Assuming we computed one, by Condition~\ref{c:length}, we can obtain a
  $(1+O(\varepsilon))$-approximation of~$\OPT$, by taking the edges in~$G$
  that are crossed by the good multicut dual.
\end{itemize}

\subsection{Overview for the construction of the
  skeleton}\label{S:overview2}

Recall that our main strategy is to compute a short skeleton that cuts the core of a
near-optimal multicut dual into a forest.  Indeed, one can then guess the
topologies of the trees and these locations for the leaves (via the
construction of the portals).  For each such guess, we determine the
shortest trees satisfying these properties by computing Steiner trees in
the universal cover. Assembling together these trees provides a graph, one
for each guess; the graph corresponding to the correct guess must be a
near-optimal multicut dual.

Starting with the optimal multicut dual, we iteratively modify it while
preserving its good multicut dual structure. The goal of
Sections~\ref{S:exhaustive} to~\ref{S:skeleton} is to show (1) the
existence of a ``well-behaved'' near-optimal solution and (2) how to
compute a skeleton cutting this near-optimal solution into a family of
trees.  (Section~\ref{S:homotopy-type} introduces preliminary tools, and
justifies that replacing tree-like portions of a solution with Steiner
trees in the universal cover preserves the multicut dual property.)

Actually, we compute not a single, but a number of skeleta that is a
function of $g$, $t$, and~$\varepsilon$: The main step of our algorithm,
\textsc{BuildAllSkeleta} (Section~\ref{S:skeleton}), computes a family of
skeleta of controlled length such that at least one of them intersects every cycle
of a good multicut dual.  To achieve this goal, it first enumerates the
possible topologies of the shortest multicut dual, and estimates the length
of its cycles.  For each such combination of cycle lengths and topologies,
the algorithm runs the \textsc{BuildOneSkeleton} algorithm.  If the
combination of cycle lengths and topologies corresponds to the one of a
good multicut dual, the \textsc{BuildOneSkeleton} algorithm outputs a graph
cutting each cycle of this good multicut dual.  The reader can glance at
Figures~\ref{F:skeleton1} and~\ref{F:skeleton2} to get an intuition of how
this works. This algorithm heavily relies on annular covers in both the
computations and the analysis, and ultimately on Klein's multiple source
shortest path algorithm~\cite{k-msppg-05} to do the computations in
near-linear time.

In order for this algorithm to work, the length of the cycles of the
multicut dual must be estimated; the number of possibilities must be small.
For this purpose, as a preprocessing step (Section~\ref{S:nolongcycle}), we
transform the optimal multicut dual into a near-optimal one without
\emph{long} cycle: We prove with a homology argument that each time a
near-optimal solution contains a long cycle, one can remove an edge of the
long cycle from the core and add the original cycle as a new
cycle of the multicut dual.  These cycles can then be computed separately
later, as shortest homotopic closed curves.  This is why, in the definition
of a good multicut dual, the multicut dual is split into a core
with no long cycle and a set of cycles.

Finally, the naive application of the above approach requires enumerating
all cycles of the core $C_0$ of a good multicut dual.  This
yields a correct algorithm, but with a running time doubly-exponential in
$g+t$.  To obtain a better dependence, we show that we can restrict our
attention to a much smaller family of $O(g+t)$ cycles, called an
\textit{exhaustive family}, which bears strong similarities with a
\textit{pants decomposition}.  This is described in
Section~\ref{S:exhaustive}; the proof uses topological arguments on closed
curves on surfaces that are quite different from the rest of the paper.

\section{Homotopy types}
\label{S:homotopy-type}
Recall that, starting with a shortest multicut dual~$C_\OPT$, we need to
prove that it can be transformed into another near-optimal multicut dual
satisfying additional structural properties.  In order to do this, we need
to define some operations that transform a multicut dual into another one.
Actually, it is more convenient to work on graphs that satisfy the multicut
dual property.

Let $C$ and~$D$ be two graphs drawn on~$S$ (but not necessarily embedded).  We say that $C$ and~$D$ have
the same \emphdef{homotopy type} if one can be obtained from the other via
a sequence of homotopies of the graph (vertices and edges move
continuously), edge contractions, and edge expansions (the reverse of edge
contractions---a vertex~$v$ is replaced with two new vertices $w$ and~$w'$,
connected by a new edge, and each edge incident to~$v$ is made incident to
either $w$ or~$w'$).  The following lemma is intuitively clear:
\begin{lemma}\label{L:preserve}
  Let $C$ and~$D$ be graphs drawn on~$S$ with the same homotopy type.  If
  $C$ satisfies the multicut dual property, then so does~$D$.
\end{lemma}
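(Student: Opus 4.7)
The plan is to show that each of the three elementary operations (a homotopy of the drawing, an edge contraction, and an edge expansion) preserves the multicut dual property; the lemma then follows by induction on the length of the sequence of operations taking $C$ to $D$. Since edge expansion is defined as the inverse of edge contraction, only homotopy and contraction need direct treatment.

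For a homotopy, the continuous deformation carrying $C$ to $D$ sends each subgraph $C'$ of $C$ to a corresponding subgraph $D'$ of $D$ whose inclusion $D' \hookrightarrow S$ is homotopic to $C' \hookrightarrow S$. Therefore $[D'] = [C']$ in $H_1(S;\mathbb{Z}_2)$. Given any even subgraph $C'_\OPT$ of $C_\OPT$, the multicut dual property of $C$ provides a subgraph $C' \subseteq C$ with $[C'] = [C'_\OPT]$, and the corresponding $D'$ then witnesses the property for $D$.

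For an edge contraction, suppose $D$ is obtained from $C$ by contracting an edge $e$ with distinct endpoints. Since $e$ is simple and embedded, it has a tubular neighborhood in $S$ that is a disk; collapsing this disk to a point gives a quotient map $q:S\to S/e$, and $S/e$ is homeomorphic to $S$ because we are collapsing a contractible subspace with a collar neighborhood. Composing $q$ with such a homeomorphism yields a continuous self-map $f:S\to S$ with $f(C)=D$, which is homotopic to the identity via the gradual shrinking of the disk. Consequently $f_* = \mathrm{id}$ on $H_1(S;\mathbb{Z}_2)$, and for any subgraph $C'\subseteq C$ the image $D':=f(C')$ is a subgraph of $D$ satisfying $[D'] = f_*[C'] = [C']$. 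Applying this to the $C'$ furnished by the multicut dual property of $C$ transfers the property to $D$.

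The main thing to verify carefully is that the correspondence $C'\mapsto D'$ always yields a bona fide subgraph of $D$ (and not merely a subset of $S$), respecting the incidence relations at the modified vertices. For homotopy this is immediate by construction; for edge contraction, if $e\in C'$ then $D'$ is obtained from $C'$ by deleting $e$ and identifying its two endpoints, while if $e\notin C'$ then $D'$ is $C'$ with any endpoint of $e$ in $C'$ reassigned to the new merged vertex. In either case $D'$ is a well-defined subgraph of $D$, and this combinatorial bookkeeping is the only mildly subtle step; everything else is an application of the homotopy invariance of $H_1(S;\mathbb{Z}_2)$.
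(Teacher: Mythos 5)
Your treatment of homotopies (homotopic graphs are homologous, so the witnessing subgraph can be transported) and of edge contractions (a contraction is induced by a map of $S$ homotopic to the identity, hence acts trivially on $H_1(S;\mathbb{Z}_2)$) is sound and is essentially the paper's argument, which disposes of these cases in one sentence. However, there is one genuine logical slip: you claim that ``since edge expansion is defined as the inverse of edge contraction, only homotopy and contraction need direct treatment.'' The statement being proved is directional -- if $C$ has the multicut dual property and $D$ is obtained from $C$ by a single move, then $D$ has it -- so what you proved for contractions is ``property of $C$ implies property of $C/e$.'' When $D$ is obtained from $C$ by an \emph{expansion}, the contraction case applied to the pair $(D, D/e'=C)$ only gives ``property of $D$ implies property of $C$,'' which is the converse of what is needed. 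Inverting the move does not invert the implication, so the expansion case is not covered by your argument as written.

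The gap is easily repaired, and it is worth noting where the small subtlety lies: if $D$ is obtained from $C$ by splitting a vertex $v$ into $w,w'$ joined by a new edge $e'$, and $C'$ is the (even) subgraph of $C$ witnessing the property, then simply taking the corresponding edges of $C'$ inside $D$ may produce odd degrees at $w$ and $w'$; one must add the edge $e'$ to the candidate subgraph $D'$ exactly when the number of $C'$-edges reassigned to $w$ is odd (evenness at $w'$ then follows since $\deg_{C'}(v)$ is even). Contracting $e'$ in $D$ recovers $C$ and carries $D'$ to $C'$, so by your own contraction argument $D'$ and $C'$ are homologous, and $D$ inherits the property. With this case added, your induction over elementary moves is complete; note that your contraction argument should also be phrased so as not to assume the contracted edge avoids crossings with other edges of the drawing (collapsing it may drag other edges along), but since homology classes are unchanged under any map homotopic to the identity, this does not affect the conclusion.
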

\begin{proof}
  The proof is easy as far as edge contractions and expansions are
  concerned, and for the homotopies it follows, for example, from the fact
  that homotopic graphs are homologous.
\end{proof}

Furthermore, in the special case of trees in the plane, we have the
following easy lemma:
\begin{lemma}\label{L:trees}
  Let $T$ and~$T'$ be two trees drawn in the plane, such that the leaves
  of~$T$ and~$T'$ are at the same positions.
  Then there is a sequence of edge contractions, edge expansions, and
  homotopies that transform $T$ into~$T'$ without moving the leaves.
\end{lemma}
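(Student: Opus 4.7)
The plan is to reduce each of $T$ and $T'$ to a canonical form --- a \emph{star} consisting of a single internal vertex joined by one edge to each leaf --- using only edge contractions that fix all leaves, and then to connect the two resulting stars by a homotopy that exploits simple connectedness of the plane. Composing a contraction sequence for $T$, the homotopy between the two stars, and the reversed (expansion) sequence for $T'$ will then yield the sequence required by the lemma.

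To reduce $T$ to a star, I iteratively contract any edge of the current tree both of whose endpoints have degree at least $2$. Each such contraction preserves the leaf set together with its positions (it never touches a leaf) and strictly decreases the number of internal vertices, so the procedure terminates at a tree $T_{0}$ in which every remaining edge is incident to a leaf. Then $T_{0}$ has at most one internal vertex: otherwise, picking two distinct internal vertices $u, v$, the unique $uv$-path in $T_{0}$ would contain an edge with both endpoints internal, contradicting the stopping condition. Thus $T_{0}$ is a star (or, degenerately, a single edge between two leaves). The same procedure turns $T'$ into a star $T_{0}'$ with the same leaves at the same positions.

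To transform $T_{0}$ into $T_{0}'$, let $p$ and $p'$ be the positions of their respective centers and let $\alpha$ be any continuous path in the plane from $p$ to $p'$. Sliding the center of $T_{0}$ along $\alpha$, with each incident edge deforming continuously while its leaf endpoint stays fixed, is a valid homotopy of the drawing; any crossings that appear along the way are unproblematic since we work with drawings rather than embeddings. Once the centers coincide, the two drawings share the same abstract star and the same endpoints on every edge, so for each leaf $\ell$ the two edges from the center to $\ell$ are paths in the plane with common endpoints. Simple connectedness of the plane supplies a homotopy rel endpoints from one to the other, which I apply one edge at a time.

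The only thing to verify --- and this is really the heart of the lemma --- is that leaves do not move during any step. This is automatic for the contractions since we only contract edges between internal vertices, and the homotopies above can be taken rel the leaves because each star's center has degree at least $2$ and is therefore not a leaf (the trivially small cases where $T$ has one or two vertices are easily handled by hand). Concatenating the three stages, and appending the reversed contraction sequence for $T'$ as edge expansions, produces the required transformation from $T$ to $T'$.
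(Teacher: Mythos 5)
Your proof is correct and takes essentially the same route as the paper's: contract all edges not incident with a leaf to reduce each tree to a star, connect the two stars by a homotopy fixing the leaves, and append the reversed contraction sequence for $T'$ as expansions. The additional details you supply (termination of the contraction phase, the at-most-one-internal-vertex argument, handling of degenerate small trees, and homotoping rel the leaves using simple connectedness of the plane) merely make explicit what the paper's three-line proof leaves implicit.
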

\begin{proof}
  In~$T$, contract all the edges not incident with a leaf, obtaining a
  star.  Do the same for~$T'$.  Transform one star into the other with a
  homotopy. Note that this might add or remove crossings, which is fine since we are dealing with drawings.
\end{proof}

These easy lemmata provide us with the corollary below, showing that a
shortest multicut dual can be obtained as a family of Steiner trees in the
universal cover. It will not be used directly in this paper since we will
be working with approximate solutions, but it provides a good intuition as
to the idea of our algorithm.  For a graph $D$ drawn on $S$ and a set $P$
of points on the image of~$D$ but avoiding the vertices and the self-crossings of $D$, the
graph \emphdef{$\bm{D \setminus P}$} is the one obtained by cutting $D$
along all the points of $P$ on an edge of $D$.  More precisely, for each
point $p \in E(D)$ we do the following.  Let $u$ and~$v$ be two points
close to~$p$ on the edge~$e$ of~$D$ containing~$p$, so that the points $u$,
$p$, and~$v$ appear in this order.  Subdivide edge~$e$ twice by inserting
vertices $u$ and~$v$ on~$e$, and remove edge~$uv$.  Finally, push $u$
and~$v$ towards the place where~$p$ was, without changing the graph, only
its drawing on~$S$ (thus the graph has two distinct vertices that overlap).

Note that, if $T$ is a tree drawn on~$S$, it can be lifted to the universal
cover~$\tilde S$ of~$S$ as follows.  Choose an arbitrary root~$r$ of~$T$,
and a lift~$\tilde r$ of~$r$ in~$\tilde S$.  For each leaf~$\ell$ of~$T$,
lift the \emph{unique} path from~$r$ to~$\ell$ in~$T$ to~$\tilde S$
(starting at~$\tilde r$).  The result is a tree drawn on~$\tilde S$ with
the same number of vertices and edges as~$T$ (but possibly fewer
self-crossings).

\begin{corollary}\label{C:structure}
  Let $C_\OPT$ be a shortest multicut dual drawn on $S$, and $P$ a set of
  points on the edges of $C_\OPT$ (but not on self-crossing points) such
  that $C_\OPT \setminus P$ is a forest $F=(T_1,\ldots, T_m)$.  For each
  $i\in\{1,\ldots,m\}$, let $\tilde{T_i}$ be a lift of~$T_i$ in the
  universal cover of $S$, and let $\tilde{P_i}$ denote the set of lifts of
  points of~$P$ that are leaves of $\tilde{T_i}$. Then the graph consisting of the
  projections on $S$ of the Steiner trees with terminals $\tilde{P_i}$ is a
  shortest multicut dual on $S$.
\end{corollary}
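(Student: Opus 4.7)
My plan is to set $C^{*}:=\bigcup_{i=1}^{m}\pi(\tilde T_i^{*})$ and establish two things: (a) $|C^{*}|\leq \OPT$, and (b) $C^{*}$ satisfies the multicut dual property. Combining (b) with Lemma~\ref{L:homology} certifies that $C^{*}$ is a multicut dual, and (a) together with the optimality of $C_{\OPT}$ then makes it a shortest one.

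For the length bound (a) I would chain three per-tree inequalities. Since $\tilde T_i$ is itself a tree in $\tilde S$ containing the terminal set $\tilde P_i$, Steiner optimality gives $|\tilde T_i^{*}|\leq|\tilde T_i|$. The lift of $T_i$ to $\tilde T_i$ is a local isometry in the cross-metric sense, hence $|\tilde T_i|=|T_i|$. Finally, the covering projection $\pi$ is length-non-increasing, because every crossing of $\pi(\tilde T_i^{*})$ with an edge of $G$ on $S$ lifts uniquely to a crossing of $\tilde T_i^{*}$ with a lift of that edge in $\tilde S$; thus $|\pi(\tilde T_i^{*})|\leq|\tilde T_i^{*}|$. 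Summing over $i$ yields $|C^{*}|\leq\sum_i|T_i|$, which equals $|C_{\OPT}|$ since the tiny arcs removed in the $D\setminus P$ operation may be taken of arbitrarily small length.

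For the multicut dual property (b) I would apply Lemmas~\ref{L:trees} and~\ref{L:preserve} in the universal cover. We may assume without loss of generality that each Steiner tree $\tilde T_i^{*}$ has its terminals $\tilde P_i$ as leaves, by subdividing it at any internal terminal into subtrees (which does not change the image of its projection). Then both $\tilde T_i$ and $\tilde T_i^{*}$ are trees in the simply connected surface $\tilde S$ whose leaves coincide with $\tilde P_i$; since $\tilde S$ is planar (the surface $S$ has nonempty boundary whenever there is at least one terminal), Lemma~\ref{L:trees} produces a sequence of ambient homotopies together with edge contractions and expansions in $\tilde S$ that transforms $\tilde T_i^{*}$ into $\tilde T_i$ while keeping $\tilde P_i$ pointwise fixed. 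Pushing these operations down through $\pi$ transforms $\pi(\tilde T_i^{*})$ into $T_i$ in $S$ while fixing $P$. Running all these deformations simultaneously and reassembling the pieces at $P$ gives a homotopy-type transformation from $C^{*}$ to $C_{\OPT}$; Lemma~\ref{L:preserve} then transfers the multicut dual property (trivially held by $C_{\OPT}$) to $C^{*}$.

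The main delicate point is the last gluing step: one must verify that the per-piece deformations of Lemma~\ref{L:trees} assemble into a valid ambient sequence of operations on the whole graph $C^{*}$. This is where it matters that the projection sends the leaves of each $\tilde T_i^{*}$ exactly onto $P$, that the combinatorial pairing of endpoints at each point of $P$ is the same for $C^{*}$ as it is for $C_{\OPT}$ (both inherit it from the cut $C_{\OPT}\setminus P$), and that each piecewise deformation fixes its endpoints throughout; these facts make the gluings compatible at every intermediate stage, so the whole process is a genuine homotopy-type equivalence on $S$.
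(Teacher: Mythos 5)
Your proposal is correct and follows essentially the same route as the paper's (much terser) proof: Steiner-tree optimality in the universal cover for the length bound, Lemma~\ref{L:trees} to match the homotopy type of each $\tilde T_i^{*}$ with $\tilde T_i$ rel the lifted portals, and Lemma~\ref{L:preserve} (with Lemma~\ref{L:homology}) to conclude that the reassembled projection is a multicut dual no longer than $C_\OPT$. Your extra care about leaves versus internal terminals and about gluing the per-tree deformations at $P$ only fills in details the paper leaves implicit.
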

\begin{proof}
  For each $i \in\{1,\ldots,m\}$, the Steiner tree with terminals
  $\tilde{P_i}$ is no longer than $\tilde{T_i}$ since it is a Steiner tree
  linking the same terminals, and it has the same homotopy type as
  $\tilde{T_i}$ by Lemma~\ref{L:trees}. Thus, the graph consisting of the
  projections of the $\tilde{T_i}$s on $S$ has the same homotopy type as
  $C_\OPT$ and is a multicut dual by Lemma~\ref{L:preserve}. By
  construction, it cannot be longer than $C_\OPT$, so it is a shortest
  multicut dual.
\end{proof}
\section{Exhaustive families}
\label{S:exhaustive}
We say that two closed curves $\gamma$ and~$\delta$ are
\emphdef{essentially crossing} on~$S$ if for each choice of closed curves
$\gamma'$ and~$\delta'$ homotopic to $\gamma$ and~$\delta$ respectively
in~$S$, the closed curves $\gamma'$ and~$\delta'$ intersect. Recall that a
\emph{cycle} in a graph drawn (or embedded) on~$S$ is a cycle in the
graph-theoretical sense (a closed walk, not reduced to a single vertex,
without vertex repetition).  A family $\Gamma$ of (abstract) cycles in a graph $D$
embedded on $S$ forms an \emphdef{exhaustive family} if, for each
cycle~$\delta$ in~$D$, either $\delta$ is a cycle of $\Gamma$ or there exists a
cycle~$\gamma$ in~$\Gamma$ such that the embeddings of $\gamma$ and~$\delta$ are essentially
crossing in~$S$.

Our algorithm to build the skeleta (Section~\ref{S:skeleton}) requires
computing exhaustive families for several embedded graphs, which for the sake of efficiency should be
small.  A trivial exhaustive family of~$D$ is the entire family of the
$(g+t)^{O(g+t)}$ cycles in~$D$.  Using this exhaustive family in
Section~\ref{S:skeleton} results in a $(1+\eps)$-approximation of
\textsc{Multicut} in time $f(\eps,g,t)\cdot n\log n$, where
$f(\eps,g,t)=\left(\log(\frac{g+t}\varepsilon)/\varepsilon\right)^{(g+t)^{O(g+t)}}$.
The purpose of this section is to give an algorithm to compute efficiently
a smaller exhaustive family, resulting in a singly-exponential~$f$ as
desired (Theorem~\ref{T:main}).  The main result of this section is the
following:

\begin{proposition}\label{P:exhaustive}
  Let $D$ be an eligible graph on~$S$ (and thus embedded). We can compute
  an exhaustive family~$\Gamma$ made of $O(g+t)$ cycles of~$D$ in time
  $(g+t)^{O(g+t)}$.  Moreover, this computation and the topology of the
  cycles of $\Gamma$ only depend on the topology of~$D$.
\end{proposition}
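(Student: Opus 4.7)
The plan rests on a structural observation about candidate topologies. Since every face of $D$ contains a terminal (a boundary component of $S$), no simple cycle of $D$ can be contractible in $S$---the bounded disk would house a face without any terminal. Similarly, no two \emph{disjoint} simple cycles of $D$ can be freely homotopic in $S$: they would cobound an annulus in $S$ lacking terminals, again contradicting the face condition. In particular, any two distinct cycles of $D$ in the same free homotopy class must share at least one edge of $D$.

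With this in hand, I construct $\Gamma$ via an analogue of a pants decomposition adapted to $D$. First, iteratively pick a maximal collection of pairwise disjoint and pairwise non-homotopic essential simple cycles $\gamma_1, \ldots, \gamma_N$ of $D$; by the classical bound on admissible curve systems, $N \leq 3g + 3t - 3 = O(g+t)$. These curves decompose $S$ into ``pants-like'' regions, each bounded by a subset of the $\gamma_i$'s together with some boundary components of $S$. Second, for each such region $P$ and each boundary curve $\alpha$ of $P$, include in $\Gamma$ the cycle of $D$ (if any) that lies in $P$ and is homotopic to $\alpha$. The structural observation ensures uniqueness of this cycle per $(P,\alpha)$ pair: two distinct such cycles, if disjoint, directly contradict the observation, and if they share edges, their symmetric difference forms a simple cycle in $P$ which would be either contractible or boundary-parallel to a boundary with no terminal on the other side, again contradicting the face condition. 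Thus at most three extra cycles are added per region, yielding $|\Gamma| = O(g+t)$ overall.

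For exhaustiveness, let $\delta$ be any simple cycle of $D$ not in $\Gamma$. If $\delta$ essentially crosses some $\gamma_i$, we are done. Otherwise $\delta$ can be isotoped into a single region $P$ of the decomposition; inside $P$, the only possible homotopy classes for a simple closed curve are contractible (excluded by the structural observation applied to $\delta$), or homotopic to one of the boundaries of $P$. By the uniqueness just established, $\delta$ must then equal the cycle of $\Gamma$ representing that homotopy class, contradicting $\delta \notin \Gamma$.

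Algorithmically, all simple cycles of $D$ can be enumerated in $(g+t)^{O(g+t)}$ time, and their free homotopy classes and pairwise essential-crossing relations can be tested by intersection-number computations in the annular covers associated to candidate cycles, as recalled in Section~\ref{S:topology}, each running in polynomial time. Assembling $\Gamma$ from these ingredients is then a polynomial combinatorial post-processing on the enumerated data, and since every step is homotopy invariant, the final output depends only on the combinatorial embedding of $D$. The principal obstacle is the uniqueness claim supporting $|\Gamma| = O(g+t)$: disjoint homotopic cycles are immediately excluded by the structural observation, but for two homotopic cycles of $D$ sharing edges in the same region $P$, one must carefully analyze their symmetric difference to show that it produces either a contractible cycle or an improperly boundary-parallel cycle that would violate the face-contains-terminal condition, thereby ruling out this case as well.
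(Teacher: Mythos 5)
Your approach is genuinely different from the paper's: you build a pants-type decomposition from a maximal family of pairwise disjoint cycles of~$D$ and then add boundary-parallel representatives, whereas the paper runs a greedy loop over all cycles of~$D$ (adding a cycle whenever it does not \emph{combinatorially} cross what is already selected) and then bounds the result and proves exhaustiveness by working in a tubular neighborhood $N$ of $D$ via de Graaf--Schrijver, Juvan et al., and Hass--Scott. Your initial structural observation is sound and is in fact the same ``every face contains a terminal, so an annulus between disjoint cycles (or a bigon) is impossible'' idea that drives the paper's final paragraph. However, as it stands your argument has real gaps, which you partly acknowledge yourself.

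The central difficulty is that distinct cycles of $D$ can share vertices and edges without essentially crossing each other, and your scheme does not handle that case. First, your uniqueness claim per $(P,\alpha)$ relies on the assertion that the symmetric difference of two homotopic cycles sharing edges ``forms a simple cycle in $P$''; in general it is only an even subgraph, possibly a union of several cycles, and the dichotomy ``contractible or improperly boundary-parallel'' is not justified for such a subgraph. Second, the exhaustiveness step assumes that a cycle $\delta$ that does not essentially cross any $\gamma_i$ can be pushed into a single region $P$ and then identified with a member of~$\Gamma$; but $\delta$ may share edges or vertices with several $\gamma_i$, so it neither literally lies in $P$ (where your uniqueness was asserted) nor can be appended to the disjoint family $\{\gamma_i\}$ (contradicting maximality), and the conclusion $\delta\in\Gamma$ does not follow. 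Third, ``pants-like regions'' is doing load-bearing work: because the $\gamma_i$ are constrained to be cycles of~$D$, the complementary regions need not be pairs of pants, and the statement that every essential simple closed curve in such a region is boundary-parallel requires an argument (your maximality does give it when $\delta$ is genuinely disjoint from the $\gamma_i$, but that is exactly the case the shared-edge issue escapes). The paper sidesteps all three difficulties at once by passing to the regular neighborhood $N$: Lemma~\ref{L:degraaf-schrijver} lets one realize all the selected cycles as simultaneously simple and pairwise disjoint in $N$; Lemma~\ref{L:pantsdecomp} then bounds their number by $O(g+t)$ using that $N$ has genus $\le g$ and at most $t$ boundaries; and the essential-crossing claim in $S$ is obtained by first proving it in~$N$ (via a universal-cover argument around a crossing disk) and then ruling out bigons in~$S$ via the terminal-in-every-face property and Lemma~\ref{L:hass-scott-2}. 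To make your approach rigorous you would essentially need to reproduce those ingredients, in particular some mechanism (like $N$ and de Graaf--Schrijver) to separate cycles of $D$ that overlap without essentially crossing.
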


Some remarks: Of course, a given graph~$D$ has (in general) several
exhaustive families; the algorithm computes one such exhaustive family,
specified by a set of cycles in (the abstract graph) $D$. On the other hand, if $D$ is a cellularly embedded graph, its family of faces (or any homology basis of $D$) is \emph{not} an exhaustive family. The last part of
the proposition states that, to perform this computation, the exact
knowledge of~$D$ is unnecessary; only the topology of~$D$ is
actually used by the algorithm.  Of course, if we know the topology of~$D$
and the cycles of~$\Gamma$ (viewed as cycles in the graph~$D$), we can
infer the topology of the cycles in~$D$ (viewed as closed curves in~$S$).
Also, the assumption that $D$ is eligible is only used to
bound the number of cycles in~$\Gamma$ and the complexity of the algorithm.

The remaining part of this section is devoted to the proof of
Proposition~\ref{P:exhaustive}.  It uses arguments that are independent
from the rest of the paper.  Thus, the reader might wish to jump to the
next section in a first reading, admitting Proposition~\ref{P:exhaustive}
(as indicated above, a trivial weakened version of
Proposition~\ref{P:exhaustive}, with $(g+t)^{O(g+t)}$ cycles, still gives
an $f(\eps,g,t)\cdot n\log n$ algorithm for a
$(1+\varepsilon)$-approximation of \textsc{Multicut}, with a worse
dependence on $g$, $t$, and~$\varepsilon$ than announced in
Theorem~\ref{T:main}).

\subsection{The algorithm}

We say that two distinct cycles $\gamma$ and~$\delta$ in the embedded
graph~$D$ \emphdef{cross} at one connected component $p$ of the
intersection of the images of $\gamma$ and~$\delta$ (a path, possibly
reduced to a single vertex) if, after contracting~$p$ in both $\gamma$
and~$\delta$ into a degree-four vertex~$v$, the pieces of~$\gamma$
and~$\delta$ alternate at~$v$.  The algorithm for
Proposition~\ref{P:exhaustive} is greedy: Start with an empty set~$\Gamma$. While there exists a cycle in $D$ that does not cross any of the cycles in $\Gamma$, add it to $\Gamma$. The algorithm ends when there are no such cycles anymore.

\subsection{Some preliminary lemmas}

We say that a curve~$\delta$ is \emphdef{minimally self-crossing} if there
is no curve $\delta'$ homotopic to~$\delta$ with fewer self-crossings.
Similarly, two curves $\delta_1$ and~$\delta_2$ are \emphdef{minimally
  crossing} if there is no pair of curves $\delta'_1$ and~$\delta'_2$,
homotopic respectively to $\delta_1$ and~$\delta_2$, such that $\delta'_1$
and~$\delta'_2$ cross less than $\delta_1$ and~$\delta_2$.  We first
introduce auxiliary results. The following lemma relates monogons and
bigons to the number of crossings of closed curves.

\begin{lemma}[{Hass and Scott~\cite[Lemma~3.1]{hs-ics-85}}]\label{L:hass-scott-2}
  Let $\gamma$ and~$\delta$ be simple closed curves on a surface.  Assume
  that $\gamma$ and~$\delta$ are not minimally crossing.  Then they form a
  bigon.
\end{lemma}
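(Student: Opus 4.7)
My plan is to prove this via the universal cover $\pi\colon\tilde S\to S$, by exhibiting an innermost bigon there that projects down embeddedly. After handling the degenerate cases where one of $\gamma,\delta$ is contractible (one can then isotope it away from the other) or where $S$ is a sphere or projective plane (treated separately), I may assume $\tilde S$ is a plane. Lift $\gamma$ and $\delta$ to their complete preimages; because $\gamma$ and $\delta$ are embedded in $S$, each connected component of $\pi^{-1}(\gamma)$ and $\pi^{-1}(\delta)$ is a properly embedded simple line in $\tilde S$.

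The first substantive step is to convert the hypothesis ``not minimally crossing'' into the statement that some pair of lifts $\tilde\gamma\subset\pi^{-1}(\gamma)$ and $\tilde\delta\subset\pi^{-1}(\delta)$ meet in at least two points. The argument is that, if every pair of lifts met in at most one point, then the crossings of $\gamma$ and $\delta$ on $S$ would be in bijection with the deck-orbits of pairs of lifts that meet, a count that depends only on the free homotopy classes of $\gamma$ and $\delta$. The same count would then apply to any homotopic representatives, contradicting the existence of a homotopic pair with strictly fewer crossings. Once such a pair of lifts is in hand, pick two consecutive intersection points $p,q$ along $\tilde\gamma$; the subarcs $\tilde\gamma[p,q]$ and $\tilde\delta[p,q]$ form a simple closed curve in the plane $\tilde S$, and by the Jordan curve theorem bound a disk $B$ whose interior meets neither $\tilde\gamma$ nor $\tilde\delta$.

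Among all such disks, arising from all pairs of lifts and all choices of consecutive crossings, choose $B$ whose interior is inclusion-minimal. I then want to show that $\pi|_{B}$ is injective and that its image is a bigon on $S$ in the sense defined in the preliminaries. By minimality, the interior of $B$ cannot meet any lift of $\gamma$ or $\delta$ at all: any such lift would enter and leave $\pi^{-1}(\gamma)\cup\pi^{-1}(\delta)$ through the boundary of $B$ and carve out a strictly smaller bigon. Similarly, no non-trivial deck transformation can send a point of the interior of $B$ to another point of $B$, since otherwise the image $\tau(B)$ would overlap $B$ and, combined with the absence of interior lifts, this would again produce a strictly smaller bigon. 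It follows that $\pi|_{\mathring B}$ is injective and that $\pi(B)$ is an embedded disk on $S$ whose boundary decomposes as two subarcs of $\gamma$ and $\delta$ meeting only at their endpoints.

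The main obstacle will be the third paragraph: justifying rigorously that innermostness of $B$ prevents both other lifts from entering $\mathring B$ and deck transformations from folding $B$ onto itself. In the non-orientable case this is a little more delicate because the deck group contains orientation-reversing elements, but the innermostness argument goes through verbatim, since any element sending $B$ strictly inside itself would again exhibit a smaller invariant bigon.
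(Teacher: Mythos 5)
The paper does not supply a proof of this lemma; it is invoked as a black-box citation of Hass and Scott's Lemma~3.1. So there is no in-paper argument to compare against, and your proposal should be judged on its own terms. Your universal-cover, innermost-disk approach is in fact the standard way to establish the bigon criterion (it is close in spirit to the arguments in Hass--Scott and to the treatment in Farb--Margalit), so the overall strategy is correct.

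There are two places where the sketch is noticeably under-justified. First, the step ``if every pair of lifts met in at most one point, then the crossing count is a homotopy invariant'' is the crux, and you state it without proof. To make it rigorous one needs the following chain: a transverse pair of properly embedded lines in the plane meets an odd number of times exactly when their four ends \emph{link} on the circle at infinity (or, for the Euclidean cases $\mathbb{R}^2$ covering the torus and Klein bottle, when they separate each other's ends in the one-point compactification $S^2$); linking of ends is a homotopy invariant of the pair of lifts; and each intersection point on $S$ corresponds to a deck orbit of an intersecting pair. Hence, under the hypothesis, the crossing number of $(\gamma,\delta)$ equals the number of deck orbits of linking pairs, which is a lower bound for any homotopic representatives. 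This chain of facts is what licenses the contrapositive. You also handle the sphere and projective plane separately but do not mention that the torus and Klein bottle also have non-hyperbolic universal cover; this is harmless because $\mathbb{R}^2$ is still a plane, but the linking-at-infinity argument must be phrased so that it applies there as well.

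Second, the deck-transformation half of the innermostness argument is slightly loose. Once you know no lift of $\gamma$ or $\delta$ meets $\mathring B$, the set $\tau(\partial B)$ (being a union of arcs of such lifts) is disjoint from $\mathring B$, and by symmetry $\partial B$ is disjoint from $\tau(\mathring B)$. So the only possibilities are $\tau(B)\subseteq B$, $B\subseteq\tau(B)$, or $\tau(B)\cap B=\emptyset$. Proper containment contradicts the inclusion-minimality of $B$ exactly as you say (since $\tau(B)$ is again a bigon of the required form), but you must also rule out $\tau(B)=B$: this is where one uses that $\tau$ preserves the set of lifts of $\gamma$ and the set of lifts of $\delta$ separately, so it cannot exchange the two sides of $\partial B$, hence fixes the two corners, contradicting freeness of the deck action (or, more cleanly, $\tau\colon B\to B$ would have a fixed point by Brouwer). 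With these two points filled in, the argument is complete.
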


The following lemma bounds the number of ``interesting'' disjoint closed curves on a surface.

\begin{lemma}[{Juvan et al.~\cite[Lemma~3.2]{jmm-scs-96}}]\label{L:pantsdecomp}
  Let $\Gamma$ be a family of pairwise disjoint simple closed curves on a
  surface with genus~$g$ and $b$~boundary components.  Assume that the
  curves in~$\Gamma$ are pairwise non-homotopic.  Then $\Gamma$ has
  $O(g+b)$ elements.
\end{lemma}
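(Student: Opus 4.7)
The plan is to separate the curves in $\Gamma$ into three classes according to their topology and bound each class separately. First, since any two contractible simple closed curves on $S$ are homotopic (both are homotopic to a point), and any two simple closed curves that are each freely homotopic to a single boundary component of $S$ are homotopic to each other, the pairwise non-homotopic hypothesis immediately yields at most one contractible curve and at most $b$ peripheral curves (one per boundary component). So it suffices to bound the number of curves in $\Gamma$ that are essential (non-contractible) and non-peripheral.

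The main step is an Euler characteristic count. Let $\Gamma_0 \subseteq \Gamma$ consist of the essential, non-peripheral curves, with $|\Gamma_0| = n$, and let $S'$ be the (possibly disconnected) surface obtained by cutting $S$ along every curve in $\Gamma_0$. Cutting along a disjoint family of simple closed curves preserves the Euler characteristic, so $\chi(S') = \chi(S)$, which is $-\Theta(g+b)$ in both the orientable and non-orientable cases. Writing $S' = S_1 \sqcup \dots \sqcup S_k$, the key claim is that every component $S_i$ satisfies $\chi(S_i) \le -1$: otherwise, $S_i$ would be a disk, annulus, or Möbius band, which would force one of the curves of $\Gamma_0$ to be either contractible, peripheral, or homotopic to another curve of $\Gamma_0$, contradicting minimality of $\Gamma_0$ (one must be slightly careful to also rule out annular components both of whose boundaries come from the \emph{same} curve; such a component would make that curve bound an annulus in $S$, hence still be either contractible, peripheral, or one-sided and handled by a similar argument with a Möbius band).

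From $\chi(S_i) \le -1$ it follows that $k \le -\chi(S) = O(g+b)$. For the number of curves, each two-sided curve of $\Gamma_0$ contributes two boundary components to $S'$ and each one-sided curve contributes one, while each $S_i$ has at least one boundary and satisfies $b(S_i) \le 2 - \chi(S_i)$ by nonnegativity of genus. Summing gives
\[
  2n \;\le\; \sum_{i=1}^{k} b(S_i) \;\le\; \sum_{i=1}^{k}\bigl(2 - \chi(S_i)\bigr) \;=\; 2k - \chi(S),
\]
so $n \le k + \tfrac12|\chi(S)| = O(g+b)$. Together with the at most $b+1$ contractible and peripheral curves, this yields $|\Gamma| = O(g+b)$.

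The main obstacle I expect is the case analysis needed to justify $\chi(S_i) \le -1$ in the non-orientable setting and to handle one-sided curves correctly in the boundary-counting step; the argument above sketches the orientable intuition but the non-orientable version requires checking that a Möbius-band component still forces one of the forbidden homotopy configurations. Once this case analysis is handled uniformly, the Euler characteristic inequality delivers the bound directly.
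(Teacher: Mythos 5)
The paper itself does not prove this lemma---it is quoted directly from Juvan et al.---so your argument can only be judged on its own terms. The overall strategy (cut along the essential, non-peripheral curves and count Euler characteristics of the pieces) is the standard one and can be made to work, but your key claim, that every component $S_i$ of the cut surface satisfies $\chi(S_i)\le -1$, is false in the non-orientable case, and your proposed repair does not close the hole. A two-sided, essential, non-peripheral simple closed curve may bound a M\"obius band in $S$ (a ``cross-cap boundary''); cutting then produces a M\"obius-band component with $\chi(S_i)=0$, and none of the forbidden configurations you invoke (contractible, peripheral, homotopic to another member of $\Gamma_0$) is forced. This is not a degenerate corner case: on the non-orientable surface of genus $g$ one can choose $\Theta(g)$ disjoint, pairwise non-homotopic curves each bounding a cross-cap, so the number of $\chi=0$ components can grow linearly, and the bound $k\le-\chi(S)$ on the number of components---which is exactly what your final inequality relies on---fails. (Your handling of an annulus component both of whose boundary circles come from the same curve is also off: such a curve need not be contractible, peripheral, or one-sided---think of an essential curve on the torus or Klein bottle---but in that situation $S$ is closed of Euler genus $2$ and the family has $O(1)$ members, so this case should simply be set aside rather than contradicted.)

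The gap is fixable within your framework: allow components with $\chi(S_i)=0$, rule out annular components as you do (modulo the degenerate torus/Klein bottle and $S=\,$annulus/disk cases), and bound the number $m$ of M\"obius-band components by noting that they yield $m$ disjoint M\"obius bands embedded in $S$, hence $m\le g$. Then the number of components is at most $-\chi(S)+m+O(1)=O(g+b)$, and your boundary count---with $2n$ replaced by $2n_2+n_1$, separating two-sided from one-sided curves---gives $|\Gamma_0|=O(g+b)$. Combined with the $O(b)$ contractible and peripheral curves (being careful that ``homotopic'' for closed curves may or may not allow orientation reversal, which at worst doubles the count, e.g.\ for two curves cobounding an annulus), the lemma follows.
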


Let $N$ be a tubular neighborhood of (the image of)~$D$. The following lemma shows
that one can minimize the number of self-crossings and crossings of a
whole family of curves.

\begin{lemma}[{de Graaf and Schrijver~\cite{gs-mcmcr-97}}]\label{L:degraaf-schrijver}
Let $(\delta_1,\ldots,\delta_k)$ be a family of closed curves in~$N$.  Then
there exist closed curves $(\delta'_1,\ldots,\delta'_k)$ in general
position in~$N$ such that:
  \begin{itemize}
  \item for each~$i$, the curves $\delta_i$ and~$\delta'_i$ are homotopic;
  \item every curve $\delta'_i$ is minimally self-crossing, and
  \item every pair of curves $\delta'_i$ and~$\delta'_j$ is minimally
    crossing.
  \end{itemize}
\end{lemma}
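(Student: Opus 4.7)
The plan is to realize each $\delta_i$ by a hyperbolic-geodesic representative in a suitable closed surface containing $N$, then appeal to the classical theorem that such geodesics simultaneously minimize self-intersection and pairwise intersection numbers, and finally perturb to general position.

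First I would enlarge $N$ to a closed surface $\Sigma$ of negative Euler characteristic, for example by doubling $N$ along $\partial N$ and discarding any resulting sphere or torus component (the graph $D$ has enough complexity that this is harmless in our setting). The inclusion $N \hookrightarrow \Sigma$ is $\pi_1$-injective because each component of $\partial N$ remains essential in the double. Equip $\Sigma$ with a complete hyperbolic metric in which every component of $\partial N$ is a simple closed geodesic; this is standard and makes $N$ a convex totally-geodesic-boundary subsurface of $\Sigma$, so that any closed geodesic of $\Sigma$ freely homotopic (in $N$) to a curve of $N$ lies entirely in $N$.

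Next I would define the representative $g_i$ of each $[\delta_i]$ according to three cases. If $[\delta_i]$ is trivial in $\Sigma$, let $g_i$ be a tiny embedded round circle in $N$; if $[\delta_i]$ is the class of a boundary component of $N$, let $g_i$ be a slight parallel push-off of that component inside $N$; otherwise let $g_i$ be the unique closed geodesic in the free homotopy class, which lies in $N$ by convexity. I would then invoke the theorem of Freedman--Hass--Scott (strengthening Lemma~\ref{L:hass-scott-2} to arbitrary, possibly non-simple, closed curves): hyperbolic closed geodesics realize the minimum number of transverse self-intersections within their free homotopy class, and two non-homotopic primitive closed geodesics realize the minimum number of transverse pairwise intersections. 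This immediately yields the second and third bullets for all pairs except those where both $g_i$ and $g_j$ arise from powers of a common primitive geodesic; the embedded-representative cases (trivial, peripheral) are verified directly since their minima are zero.

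Finally, the family $\{g_i\}$ need not be in general position: three geodesics may meet at a common point, two may be tangent, or as just noted two may actually coincide as sets. I would apply an arbitrarily small $C^\infty$ isotopy of each curve inside $N$, within its own free homotopy class, to obtain $\delta'_i$ in general position. By transversality a small enough perturbation neither creates new crossings nor destroys existing transverse ones, so minimality is preserved. The main obstacle is precisely the coincidence case where $[\delta_i]$ and $[\delta_j]$ are powers $\alpha^{p_i}$ and $\alpha^{p_j}$ of the same primitive class $\alpha$: there the perturbation must be performed inside an annular neighborhood of the shared geodesic so that the resulting pairwise intersection count matches the algebraic lower bound for $(p_i,p_j)$-curves in an annulus, which is a direct calculation but needs to be carried out explicitly. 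As an alternative that avoids hyperbolic geometry entirely, one can argue combinatorially by iteratively removing bigons and monogons using Lemma~\ref{L:hass-scott-2} and its self-crossing analogue, with termination ensured by taking the lexicographically ordered tuple of total crossing counts as a well-founded measure.
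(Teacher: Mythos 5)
First, note that the paper does not prove this lemma at all: it is imported as a black box from de Graaf and Schrijver~\cite{gs-mcmcr-97}, whose argument is combinatorial (curves are put into minimal position by Reidemeister-type moves). So any proof you give is necessarily ``a different route''; the hyperbolic-geodesic strategy you propose (double $N$, make the boundary geodesic, take geodesic representatives, invoke Freedman--Hass--Scott, perturb) is indeed the standard geometric alternative. However, as written it has genuine gaps, and they sit exactly at the points you flag and then defer. The coincidence case --- several $\delta_i$ homotopic to powers of a common primitive class --- is not a routine afterthought: the geodesic representatives then have identical images, the quoted minimality theorems do not apply verbatim (for a non-primitive class the geodesic is not even in general position, so ``minimally self-crossing'' is not established by citing that geodesics minimize self-intersections), and the whole content of the simultaneous statement is that the perturbations inside the shared annular or M\"obius-band neighborhood can be chosen to realize the minima for \emph{all} pairs and all self-crossings at once. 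Deferring this as ``a direct calculation'' leaves the hardest part of the lemma unproved; it is precisely what \cite{gs-mcmcr-97} provides.

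Second, the hyperbolization step itself fails for components of $N$ that are annuli or M\"obius bands, and these actually occur here: $N$ is a regular neighborhood of a graph $D$ all of whose vertices have degree at least two, so a connected component of $D$ may be a single cycle, whose neighborhood is an annulus (or a M\"obius band if the cycle is one-sided). Doubling such a component gives a torus or Klein bottle, which carries no hyperbolic metric; discarding these components ``as harmless'' discards exactly the curves living in them, for which the lemma still has to be proved (all such curves are powers of the core, i.e.\ again the coincidence case). Also, since $S$ may be non-orientable, Klein-bottle doubles must be addressed, not only spheres and tori. Your fallback combinatorial sketch does not rescue this: Lemma~\ref{L:hass-scott-2} is stated for \emph{simple} closed curves, and for non-simple curves and families the existence of removable singular monogons/bigons, and the fact that removing one does not increase the other self- and pairwise crossing numbers, is the non-trivial content of Hass--Scott and de Graaf--Schrijver --- assuming it amounts to assuming the lemma. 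To make your argument complete you would need (i) an explicit treatment of the annulus/M\"obius-band components and of curves sharing a primitive class, e.g.\ by the standard model of $(p,q)$-curves in an annulus with the known minimal crossing counts, and (ii) the self-intersection formula and its realization for non-primitive classes; alternatively, simply cite \cite{gs-mcmcr-97} as the paper does.
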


\subsection{Proof of Proposition~\ref{P:exhaustive}}

\begin{proof}[Proof of Proposition~\ref{P:exhaustive}]
  It seems simpler to present this proof in the combinatorial setting, rather than the cross-metric one. Let
  $\Gamma$ be any maximal family of pairwise non-crossing cycles in $D$, such as the one
  computed by the greedy algorithm. Computing $\Gamma$ takes $(g+t)^{O(g+t)}$ time.  Indeed, there are
  $(g+t)^{O(g+t)}$ cycles in~$D$ (because $D$ has $O(g+t)$ vertices
  and edges, and every cycle can be represented by a permutation of a
  subset of edges), which we can enumerate in $(g+t)^{O(g+t)}$ time.  The
  algorithm can easily determine whether two cycles in~$D$ cross in
  $O(g+t)$ time.  It is clear that this algorithm only needs the topology
  of~$D$, and that the topology of the resulting cycles also only depends
  on the topology of $D$.

  We now prove that $\Gamma$ is made of $O(g+t)$ cycles.  No pair of cycles
  in~$\Gamma$ is essentially crossing in~$S$, which implies that they can
  all be made simple and disjoint by homotopies on the tubular neighborhood~$N$, by
  Lemma~\ref{L:degraaf-schrijver}.  These considerations lead to a
  family~$\Gamma'$ of pairwise disjoint simple closed curves in~$N$.

  Moreover, no two distinct cycles in~$D$ are homotopic in~$D$ (this is
  a basic fact valid in any graph),  so
  the cycles in~$\Gamma$ are pairwise non-homotopic in~$D$, and thus the
  cycles in~$\Gamma'$ are pairwise non-homotopic in~$N$.  Since $N$ has
  genus at most~$g$, and at most $t$ boundary components because every face
  of~$D$ contains a terminal (because $D$ is eligible), Lemma~\ref{L:pantsdecomp} implies that $\Gamma$ has
  $O(g+t)$ cycles.

  There remains to prove that $\Gamma$ is an exhaustive family.  Let
  $\delta$ be a cycle in~$D$ but not in~$\Gamma$.  By
  construction, there exists $\gamma\in\Gamma$ such that $\gamma$
  and~$\delta$ cross. To conclude the proof, it suffices to proves that $\gamma$ and $\delta$ are essentially crossing in $S$.

  \begin{figure}
    \begin{center}
      \def\svgwidth{10cm}
      \begingroup%
  \makeatletter%
  \providecommand\color[2][]{%
    \errmessage{(Inkscape) Color is used for the text in Inkscape, but the package 'color.sty' is not loaded}%
    \renewcommand\color[2][]{}%
  }%
  \providecommand\transparent[1]{%
    \errmessage{(Inkscape) Transparency is used (non-zero) for the text in Inkscape, but the package 'transparent.sty' is not loaded}%
    \renewcommand\transparent[1]{}%
  }%
  \providecommand\rotatebox[2]{#2}%
  \ifx\svgwidth\undefined%
    \setlength{\unitlength}{103.30992432bp}%
    \ifx\svgscale\undefined%
      \relax%
    \else%
      \setlength{\unitlength}{\unitlength * \real{\svgscale}}%
    \fi%
  \else%
    \setlength{\unitlength}{\svgwidth}%
  \fi%
  \global\let\svgwidth\undefined%
  \global\let\svgscale\undefined%
  \makeatother%
  \begin{picture}(1,0.56321743)%
    \put(0,0){\includegraphics[width=\unitlength]{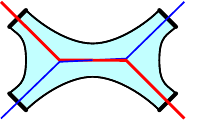}}%
    \put(0.21063531,0.36111984){\color[rgb]{0,0,0}\makebox(0,0)[lb]{\smash{$\gamma$}}}%
    \put(0.21478365,0.18688682){\color[rgb]{0,0,0}\makebox(0,0)[lb]{\smash{$\delta$}}}%
    \put(0.0682067,0.52567325){\color[rgb]{0,0,0}\makebox(0,0)[lb]{\smash{$a^\gamma_1$}}}%
    \put(0.74895467,0.53968387){\color[rgb]{0,0,0}\makebox(0,0)[lb]{\smash{$a^\delta_1$}}}%
    \put(0.75448571,0.00730508){\color[rgb]{0,0,0}\makebox(0,0)[lb]{\smash{$a^\gamma_2$}}}%
    \put(0.06585061,0.01421929){\color[rgb]{0,0,0}\makebox(0,0)[lb]{\smash{$a^\delta_2$}}}%
    \put(0.42082114,0.30165941){\color[rgb]{0,0,0}\makebox(0,0)[lb]{\smash{$p$}}}%
  \end{picture}%
\endgroup%

      \caption{A picture of the situation in~$P$, in the proof of
        Proposition~\ref{P:exhaustive}.}
        \label{F:proof-exhausting}
    \end{center}
\end{figure}

  Let us first prove that the cycles $\gamma$ and~$\delta$
  are essentially crossing in the tubular neighborhood~$N$. Since $N$ is a regular neighborhood of
  $D$, this is intuitively obvious, but the proof is somewhat
  heavy-handed.  Let $p$ be a connected component of the
  intersection of the images of $\gamma$ and~$\delta$ where these
  cycles cross. Let $P$ be a disk neighborhood of~$p$ in~$N$ as shown
  in Figure~\ref{F:proof-exhausting}, so that the boundary of~$P$ is
  made of pieces of boundaries of~$N$ and of arcs in~$N$, four of
  which intersect $\gamma$ or~$\delta$, which we denote by
  $a^1_\gamma$, $a^1_\delta$, $a^2_\gamma$, and $a^2_\delta$ (the
  subscript denoting which cycle crosses the arc); since $\gamma$
  and~$\delta$ cross, without loss of generality the arcs appear in
  that order when walking along the boundary of~$P$.  Let $\tilde P$
  be a lift of~$P$ in the universal cover~$\tilde N$ of~$N$, let
  $\tilde a^1_\gamma$, $\tilde a^1_\delta$, $\tilde a^2_\gamma$, and
  $\tilde a^2_\delta$ the corresponding lifts of the arcs, and let
  $\tilde\gamma$ and~$\tilde\delta$ be lifts of $\gamma$ and~$\delta$
  entering~$\tilde P$.  Each of these lifts of the arcs separates the
  universal cover of~$\bar D$ into two connected components, and is
  crossed exactly once by exactly one of $\tilde\gamma$
  and~$\tilde\delta$.  Any homotopy in~$N$ between~$\gamma$ and
  another closed curve~$\gamma'$ lifts to a homotopy between
  $\tilde\gamma$ and another lift $\tilde\gamma'$, which still has to
  cross $\tilde a^1_\gamma$ and~$\tilde a^2_\gamma$, and similarly if
  $\delta'$ is homotopic to~$\delta$ in~$N$ we obtain a lift
  $\tilde\delta'$ that crosses $\tilde a^1_\delta$ and~$\tilde
  a^2_\delta$.  This implies that $\tilde\gamma'$ and~$\tilde\delta'$
  cross, and thus $\gamma$ and~$\delta$ are essentially crossing
  in~$N$.

  Now, let $\gamma''$ and~$\delta''$ be curves in~$N$, homotopic to
  $\gamma$ and~$\delta$ in~$N$, and minimally crossing given these
  constraints.  By the preceding paragraph, $\gamma''$ and~$\delta''$
  have to cross, and we can choose $\gamma''$ and~$\delta''$ to be
  simple, again by Lemma~\ref{L:degraaf-schrijver}.  They form no
  bigon in~$N$.  They also cannot form a bigon in~$S$: Indeed, no
  such bigon could be entirely in~$N$, so any bigon has to contain at
  least one connected component of $S\setminus N$.
  Each such component
  contains at least one boundary component of~$S$, because $D$ is eligible;
  this contradicts the
  definition of a bigon.  This implies, by Lemma~\ref{L:hass-scott-2},
  that $\gamma''$ and~$\delta''$ are minimally crossing in~$S$.  In
  other words, $\gamma$ and~$\delta$ are essentially crossing in~$S$,
  as desired.
\end{proof}

\section{Near-optimal solution with no long cycle}
\label{S:nolongcycle}
We say that a cycle~$\gamma$ of a graph $D$ drawn on $(S,G)$ is
\emphdef{long} if $|\gamma^o|=O(|\gamma|\varepsilon/(g+t))$ (for some
universal, well-chosen constant hidden in the $O(\cdot)$ notation).  As a
preprocessing step for the computation of the skeleta
(Section~\ref{S:skeleton}), we need to make sure that there is a good
multicut dual whose core has no long cycle:

\begin{proposition}\label{P:shortcycles}
  There exists a good multicut dual $C$ such that
  $C_0$ is eligible and has no long cycle.
\end{proposition}

\begin{proof}
  Initially, let $C:=\{C_0\}:=\{C_\OPT\}$.  (Note that $C_\OPT$ is not
  necessarily connected.)  We have that $C$ is a good multicut dual that is
  eligible.  Whenever $C_0$ contains a long cycle $\gamma$, we do the
  following: We remove one of the edges of~$\gamma$ from~$C_0$, iteratively
  prune vertices of degree one (by removing such vertices and their
  incident edges), and finally add to~$C$ (as a separate component) the
  cycle~$\gamma^o$ (as a graph, it is a loop; the image of the vertex is
  located arbitrarily on the image of~$\gamma^o$). We refer to
  Figure~\ref{F:nolongcycle} for a picture in the planar case, the
  surface-embedded case bearing no difference.

\begin{figure}
\centering
\def\svgwidth{14cm}
\begingroup%
  \makeatletter%
  \providecommand\color[2][]{%
    \errmessage{(Inkscape) Color is used for the text in Inkscape, but the package 'color.sty' is not loaded}%
    \renewcommand\color[2][]{}%
  }%
  \providecommand\transparent[1]{%
    \errmessage{(Inkscape) Transparency is used (non-zero) for the text in Inkscape, but the package 'transparent.sty' is not loaded}%
    \renewcommand\transparent[1]{}%
  }%
  \providecommand\rotatebox[2]{#2}%
  \newcommand*\fsize{\dimexpr\f@size pt\relax}%
  \newcommand*\lineheight[1]{\fontsize{\fsize}{#1\fsize}\selectfont}%
  \ifx\svgwidth\undefined%
    \setlength{\unitlength}{3490.11774641bp}%
    \ifx\svgscale\undefined%
      \relax%
    \else%
      \setlength{\unitlength}{\unitlength * \real{\svgscale}}%
    \fi%
  \else%
    \setlength{\unitlength}{\svgwidth}%
  \fi%
  \global\let\svgwidth\undefined%
  \global\let\svgscale\undefined%
  \makeatother%
  \begin{picture}(1,0.43647546)%
    \lineheight{1}%
    \setlength\tabcolsep{0pt}%
    \put(0,0){\includegraphics[width=\unitlength,page=1]{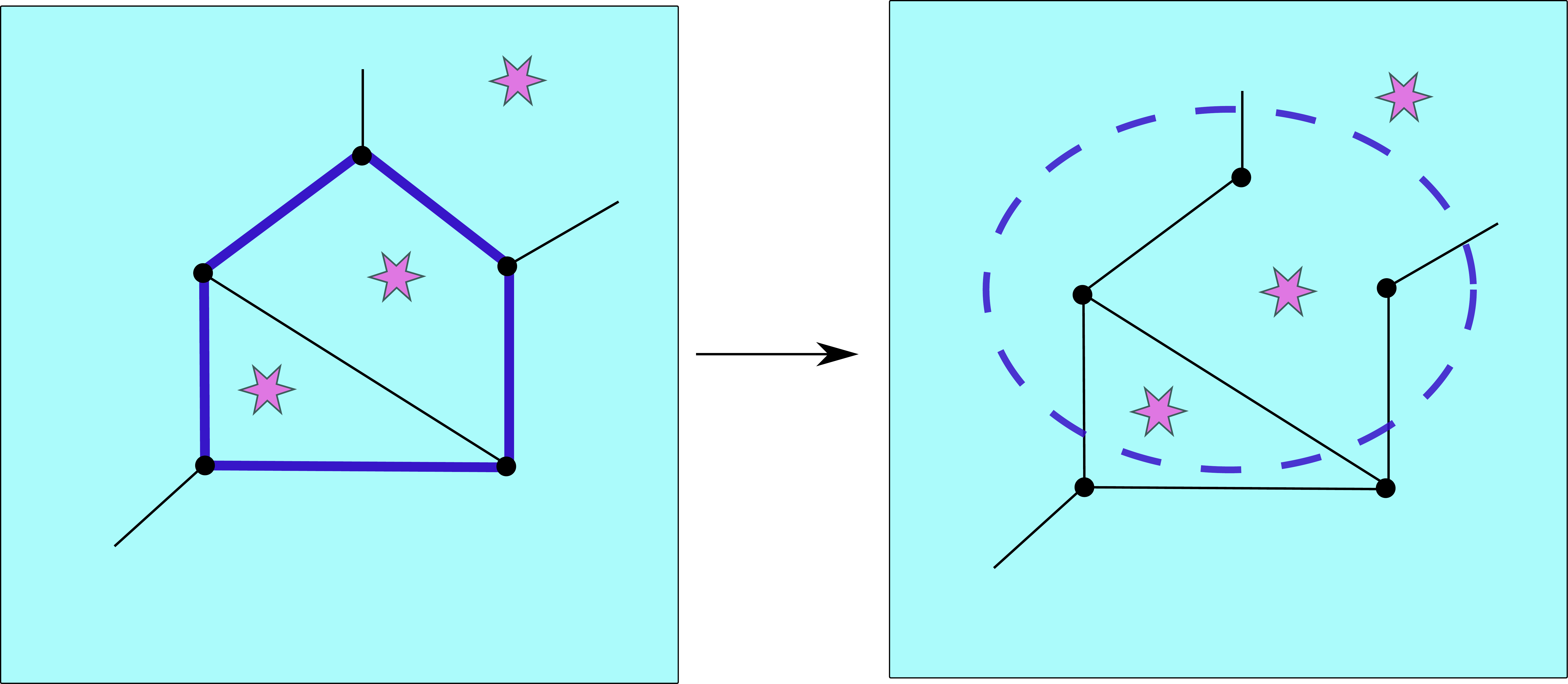}}%
    \put(0.07236391,0.06628582){\color[rgb]{0,0,0}\makebox(0,0)[lt]{\lineheight{0}\smash{\begin{tabular}[t]{l}$C$\end{tabular}}}}%
    \put(0.22259031,0.15311548){\color[rgb]{0,0,0}\makebox(0,0)[t]{\lineheight{0}\smash{\begin{tabular}[t]{c}$\gamma$\end{tabular}}}}%
    \put(0.28642013,0.31288208){\color[rgb]{0,0,0}\makebox(0,0)[t]{\lineheight{0}\smash{\begin{tabular}[t]{c}$e$\end{tabular}}}}%
    \put(0.62546887,0.0445784){\color[rgb]{0,0,0}\makebox(0,0)[lt]{\lineheight{0}\smash{\begin{tabular}[t]{l}$C_0$\end{tabular}}}}%
    \put(0.71578283,0.37018965){\color[rgb]{0,0,0}\makebox(0,0)[t]{\lineheight{0}\smash{\begin{tabular}[t]{c}$C_1=\gamma^o$\end{tabular}}}}%
  \end{picture}%
\endgroup%

\caption{The edge $e\subseteq \gamma$ is removed
  from $C_0$, and the cycle $\gamma^o$ is added to~$C$; the cycle
  $\gamma^o$ may cross $C_0$.}
  \label{F:nolongcycle}
\end{figure}

Each iteration removes at least one edge in~$C_0$, which initially is equal
to the good multicut dual~$C_\OPT$ and therefore has $O(g+t)$ edges.
Hence, there are $O(g+t)$ iterations.  The new graph consists of the
remaining graph~$C_0$ and of the set of cycles that were added at each
step, $C_1,\ldots,C_{k}$.  By construction, $C_0$ has no long cycle.  Let
us argue that this final graph $C=\{C_0,\ldots,C_k\}$ is a good multicut
dual, and that $C_0$ is eligible.

Let us prove Condition~\ref{c:small} (each of $C_0,C_1,\ldots,C_k$ is
small).  Conditions \ref{c:small}\ref{sm:genpos}
and~\ref{c:small}\ref{sm:nodeg01} are satisfied by construction.
Condition~\ref{c:small}\ref{sm:compl} is satisfied for~$C_0$ because it is
satisfied for $C_\OPT$, and because $C_0$ is a subgraph of~$C_\OPT$.  It is
satisfied for~$C_1,\ldots,C_k$ by construction.  Let us prove that
Condition~\ref{c:small}\ref{sm:cross} holds.  As above, it is satisfied
for~$C_0$ because it is satisfied for $C_\OPT$, and because $C_0$ is a
subgraph of~$C_\OPT$; let us prove it for the cycles $C_1,\ldots,C_k$.
Recall that $\gamma^o$, being a shortest homotopic closed curve, crosses
arcs of~$K$ minimally, and therefore the number of crossings between
$\gamma^o$ and~$K$ does not exceed the number of crossings between $\gamma$
and~$K$. In particular, the number of intersections of each cycle $C_i$
with each arc of $K$ is at most the number of intersections of $C_{\OPT}$
with each arc of $K$, that is, $O(g+t)$.  So Condition~\ref{c:small} holds.

Condition~\ref{c:cycles} is satisfied by construction.

Condition~\ref{c:nbcycles} holds because the number of iterations is
$O(g+t)$ and each iteration adds one cycle.

Let us prove that Condition~\ref{c:length} holds.  At each iteration, the
length of $C$ increases by at most the length of $\gamma^o$, which
satisfies
\[|\gamma^o|=O(|\gamma|\varepsilon/(g+t)) =O(\OPT\cdot\varepsilon/(g+t)). \] %
This implies that after the $O(g+t)$ iterations, the length of the graph
has increased by at most $O(\varepsilon\cdot\OPT)$, as desired.

Let us prove that Condition~\ref{c:homology} is satisfied.  Let
$C'_\OPT$ be an even subgraph of $C_\OPT$.  We need to prove that $C'_\OPT$
has the same homology class as some even subgraph of~$C$.  Initially,
$C=\{C_\OPT\}$, so this property was trivially true at the beginning.  It
thus suffices to prove that this property is maintained when performing an
iteration that removes one edge~$e$ that belongs to a cycle~$\gamma$ and
adds $\gamma^o$ to~$C$.  Before this iteration, we know that $C'_\OPT$ has
the same homology class as some subgraph $C'$ of~$C$.  If $C'$ does not
contain the chosen edge~$e$, this is trivially true after the replacement.
Otherwise, $C'=((C'\setminus e)\Delta(\gamma\setminus e))+\gamma$ (where
$\Delta$ denotes the symmetric difference),
which is homologous to $((C'\setminus e)\Delta(\gamma\setminus e))+\gamma^o$, an
even subgraph of the new~$C$.

Finally, we have proved that $C_0$ is small.  Moreover, being a subgraph
of~$C_\OPT$, it is embedded, and each face of~$C_0$ contains a terminal.
Thus, $C_0$ is eligible.
\end{proof}

\section{Building skeleta}
\label{S:skeleton}
In this section, we describe the \textsc{BuildAllSkeleta} algorithm whose
properties are as follows.
\begin{proposition}\label{P:skeleton}
  In $\paren{\frac{g+t}\varepsilon}^{O(g+t)}\cdot n\log n$ time, the 
  \textsc{BuildAllSkeleta} algorithm
  builds a family of $\paren{\frac{g+t}\varepsilon}^{O(g+t)}$ graphs, called
  \emph{skeleta}, drawn on~$S$, each with $O(g+t)$ vertices and edges and $O((g+t)^3n)$ complexity,
  such that at least one of them, denoted by $\Sk(C_0)$, satisfies the
  following conditions:
  \begin{itemize}
  \item $\Sk(C_0)$ has length $O((g+t)\OPT)$;
  \item There exists a good multicut dual~$C'=\{C'_0,C'_1,\ldots,C'_k\}$
    (with $C'_0$ not necessarily eligible) such that $\Sk(C_0)$ intersects every 
    cycle of~$C'_0$.
  \end{itemize}
\end{proposition}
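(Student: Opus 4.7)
The plan is to design BuildAllSkeleta as a two-level enumeration over candidate topologies of $C_0$ and over length estimates for a small family of cycles, where each combination invokes a near-linear-time subroutine BuildOneSkeleton. I will argue that when the enumeration reaches the true topology and correct cycle lengths of the graphical part $C'_0$ of the good multicut dual from Proposition~\ref{P:shortcycles}, the output skeleton $\Sk(C_0)$ intersects every cycle of $C'_0$. At the outer level I enumerate all candidate topologies $\tau$ for $C_0$; by Lemma~\ref{L:struct} there are $(g+t)^{O(g+t)}$ of them. For each $\tau$, I apply Proposition~\ref{P:exhaustive} to compute an exhaustive family $\Gamma(\tau)=\{\gamma_1,\ldots,\gamma_m\}$ with $m=O(g+t)$, and I extract the homotopy class in $S$ of each $\gamma_i$ from the combinatorial map of $C_0\cup K$ encoded in $\tau$ (this determines the homotopy class uniquely since $K$ cuts $S$ into a disk).

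At the inner level I enumerate length estimates $L_1,\ldots,L_m$ approximating $|\gamma_i|$ within a $(1+\varepsilon)$ factor; the no-long-cycle property of Proposition~\ref{P:shortcycles} bounds the range of interesting values, giving $(1/\varepsilon)^{O(g+t)}$ combinations per topology and $(g+t/\varepsilon)^{O(g+t)}$ combinations overall. For each $(\tau,L_1,\ldots,L_m)$, BuildOneSkeleton processes each $\gamma_i$ independently inside the annular cover of $(S,G)$ corresponding to the cyclic subgroup generated by the homotopy class of $\gamma_i$, truncated to a region of ``width'' proportional to $L_i$. Klein's multiple-source shortest path algorithm~\cite{k-msppg-05} computes, in near-linear time, a shortest closed curve $\sigma_i$ in the correct homotopy class together with a short transversal whose projection to $S$ has non-zero intersection number with $\gamma_i$. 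The skeleton is the union of these $O(g+t)$ curves, viewed as a graph drawn on~$S$.

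For the correct guess, correctness splits into two cases via the exhaustive family property. If a cycle $\delta$ of $C'_0$ does not belong to $\Gamma(\tau)$, then $\delta$ essentially crosses some $\gamma_i\in\Gamma(\tau)$; since $\sigma_i$ is homotopic to $\gamma_i$, the definition of essential crossing forces $\delta$ to meet $\sigma_i$. If instead $\delta=\gamma_i$ for some $i$, the transversal built for $\gamma_i$ is designed to meet $\gamma_i$ in $S$. The length bound $O((g+t)\OPT)$ follows from $|\sigma_i|\leq|\gamma_i|\leq\OPT$ together with a similar bound on the transversals, summed over the $m=O(g+t)$ cycles; the bound on the number of vertices and edges is immediate from $m=O(g+t)$; and the $O((g+t)^3 n)$ complexity together with the $n\log n$ running time come from running Klein's algorithm inside a truncated annular cover whose combinatorial size is polynomial in $g+t$ times $n$.

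The hardest part, I expect, is designing and analyzing the transversal curves inside BuildOneSkeleton: they must actually meet the specific drawing of $\gamma_i$ in $C'_0$ (not only some homotopic representative), and their total length must still be $O((g+t)\OPT)$. The transversal length cannot be bounded by $|\gamma_i|$ alone, so I would use the no-long-cycle property to control the effective width of the annular cover at the scale of $L_i$, which in turn allows Klein's algorithm to certify a short transversal inside a bounded search region. Balancing this topological length bound against the requirement of a near-linear total running time, while exploiting the flexibility that $C'$ need not coincide with the good multicut dual given by Proposition~\ref{P:shortcycles} (and in particular need not be embedded), is the technical core of the argument.
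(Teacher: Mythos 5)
Your enumeration framework (candidate topologies times length ranges, the exhaustive family of Proposition~\ref{P:exhaustive}, annular covers, Klein's algorithm~\cite{k-msppg-05}) matches the paper, and your first correctness case (a cycle $\delta$ of the dual not in $\Gamma$ must essentially cross some $\gamma_i\in\Gamma$, hence must meet any curve homotopic to $\gamma_i$ placed in the skeleton) is also the paper's argument. The gap is exactly the part you flag as ``the hardest part'' and leave unresolved: the case $\delta=\gamma_i$, which you propose to handle with a ``transversal whose projection meets $\gamma_i$.'' The algorithm knows only the homotopy class and an approximate length of $\gamma_i$, not its drawing, and in general \emph{no} curve of length $O((g+t)\OPT)$ can meet every short representative of that class. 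Take $S$ to be an annulus (one terminal pair) that is metrically a cylinder of circumference $1$ and length $L\gg 1$: the optimal dual is a single cross-section of length $1=\OPT$, every cross-section along the cylinder is an equally short homotopic candidate, and two such candidates near opposite ends are disjoint and at distance roughly $L$. Any ``transversal'' certified to hit all of them has length $\Omega(L)$, which is not $O((g+t)\OPT)$. So the step cannot be repaired by a cleverer search region for Klein's algorithm; the length bound is simply false for the object you ask for.

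The paper escapes this by not trying to hit the given drawing at all. For each two-sided $\gamma\in\Gamma$ it computes, in the annular cover, the leftmost and rightmost curves $\hat\gamma^1,\hat\gamma^2$ along a lift of an arc of~$K$ among non-contractible curves whose length is within (or below) the guessed range; these sandwich every admissible position of $\hat\gamma$. A path crossing the annulus $\hat A_\gamma$ is added \emph{only} when the topology of $D$ shows edges attached to $\gamma$ on both sides, and then its length is at most $|D|$ because $D$ itself must cross $\hat A_\gamma$ (no terminal lies in it and $D$ has no degree-one vertex). When the attachments are on one side only, the skeleton may well miss $\gamma$, and correctness is restored by \emph{modifying the dual}: Lemma~\ref{L:skeleton-forest} pushes $\hat\gamma$ together with its one-sidedly attached trees onto $\hat\gamma^2$ (handling non-simple projections via the monogon/bigon argument of Lemma~\ref{L:hass-scott-1} and Epstein's lemmas), producing a new graph $C'_0$ of the same homotopy type with length inflated by only $1+O(\varepsilon)$. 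This is precisely why the statement quantifies existentially over a good multicut dual $C'$ ``not necessarily embedded'' rather than asserting intersection with the cycles of the $C_0$ from Proposition~\ref{P:shortcycles}. You mention this flexibility in your last sentence, but your main argument does not use it, and without it the construction you describe cannot meet the stated length bound.
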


In this proposition and the next one, the data structure used to represent
a skeleton and its drawing is its abstract graph, and for each of its edges
$e$, the ordered sequence of the edges of $G$ crossed by $e$ in the
drawing, together with the orientation of each crossing. The bound on the
complexity of the skeleta ensures that this encoding has a controlled size.

This data structure does not encode the precise location of the skeleton
inside each face of~$G$ but it is sufficient for the purpose of the
algorithm. More precisely, for the proofs in this section, we rely on the
precise location of the skeleta, but in the actual algorithms (see
Section~\ref{S:portals}), this data structure suffices.\footnote{While such
  an ambiguity in the encoding is customary when handling embedded graphs
  in a cross-metric setting, it is somewhat worsened here by the fact that
  we handle drawings and not embeddings. For example, edges might cross
  themselves arbitrarily many times within a face of $G$, and the encoding
  will not reflect it.}

The proof of Proposition~\ref{P:skeleton} relies on the exhaustive
families introduced in Section~\ref{S:exhaustive}; the proof also
extensively uses covering spaces and algorithms to compute shortest
homotopic curves, described in Appendix~\ref{A:covering}.


\subsection{Definition of a skeleton}\label{SS:skeleta}


Let $D$ be an eligible graph embedded on $S$ that has no long cycle. A \emphdef{skeleton} $\Sk(D)$ is a graph drawn on $S$ that satisfies the following properties:

\begin{itemize}
  \item $\Sk(D)$ has $O(g+t)$ vertices and edges;
  \item $\Sk(D)$ has complexity $O((g+t)^3n)$.
  \item $\Sk(D)$ has length $O((g+t)|D|)$;
  \item there exists a graph~$D'$ drawn on~$S$ such that:
    \begin{itemize}
    \item $D'$ has the same homotopy type as~$D$,
    \item $|D'|\leq(1+\varepsilon)|D|$,
    \item $\Sk(D)$ intersects every cycle of~$D'$,
    \item $D'$ is small.
    \end{itemize}
\end{itemize}

We now introduce the algorithm we use, called \textsc{BuildOneSkeleton},
for computing a single skeleton,
and will prove the following key proposition.

\begin{proposition}\label{P:oneskeleton}
  Let $D$ be an eligible graph embedded on $S$ that
  has no long cycle.  In $(g+t)^{O(g+t)}n\log n$ time, the
  \textsc{BuildOneSkeleton} algorithm computes a skeleton $\Sk(D)$.

  Moreover, the construction of $\Sk(D)$ does not depend on~$D$, but only on one out of
$(\frac{g+t}\varepsilon)^{O(g+t)}$ possible choices of parameters.
\end{proposition}

Note that we can apply Proposition~\ref{P:oneskeleton} with $D=C_0$.
Hence, we define \textsc{BuildAllSkeleta} to be the algorithm that, for
every possible choice of parameters, computes the corresponding skeleton
using the \textsc{BuildOneSkeleton} algorithm and returns the family
consisting of all these skeleta.  Assuming this
Proposition~\ref{P:oneskeleton}, we can directly prove
Proposition~\ref{P:skeleton}:

\begin{proof}[Proof of Proposition~\ref{P:skeleton}, assuming Proposition~\ref{P:oneskeleton}]
  The proof is immediate and only involves definition-chasing: we apply
  Proposition~\ref{P:oneskeleton} for all the possible choices of
  parameters, which can be done in
  $\paren{\frac{g+t}\varepsilon}^{O(g+t)}\cdot n\log n$ time. All the
  resulting skeleta satisfy the bounds on the number of vertices and edges
  and the complexity. Out of these, for the parameters corresponding to the
  $C_0$ of Proposition~\ref{P:shortcycles}, we obtain $\Sk(C_0)$, which has
  length $O((g+t)|C_0|)=O((g+t)OPT)$, and we denote by $C'_0$ the graph
  $D'$ associated to the skeleton~$\Sk(C_0)$.  By construction, $\Sk(C_0)$
  intersects every cycle of~$C'_0$.  There remains to prove that
  $C'=(C'_0,C_1, \ldots, C_k)$ is a good multicut dual.

  Conditions~\ref{c:small}, \ref{c:cycles},
  \ref{c:nbcycles}, and~\ref{c:length} follow from the construction and
  from the properties of~$C'_0$.  Because $C'_0$ has the same homotopy type
  as $C_0$, Lemma~\ref{L:preserve} proves that Condition~\ref{c:homology}
  is fulfilled.
\end{proof}

The algorithm \textsc{BuildOneSkeleton} takes as input a set of
\emphdef{parameters}, which we now describe. The first of these is the
topology of $D$. By Proposition~\ref{P:exhaustive}, an exhaustive
family~$\Gamma$ can be computed for~$D$ just with the knowledge of its
topology, and it consists of $O(g+t)$ cycles.  There is one additional
parameter per two-sided cycle in~$\Gamma$, which corresponds to an estimate
of its length up to a precision $(1+\varepsilon)$.  More precisely, fix a
two-sided cycle $\gamma \in \Gamma$.  Since $D$ has no long cycle, we have
$|\gamma|= O(|\gamma^o|(g+t)/\varepsilon)$, hence $|\gamma|$ lies within
one of the following \emphdef{ranges}:

\[[|\gamma^o|, (1+\varepsilon) |\gamma^o|),\quad
[(1+\varepsilon)|\gamma^o|, (1+\varepsilon)^2 |\gamma^o|),\quad \ldots,
\quad [(1+\varepsilon)^\ell|\gamma^o|, (1+\varepsilon)^{\ell+1}
|\gamma^o|),\]
for some non-negative integer $\ell$ satisfying
$(1+\varepsilon)^{\ell+1}=\Theta((g+t)/\varepsilon)$, and thus
$\ell= \Theta(\log\left(\frac{g+t}\varepsilon\right)/\varepsilon)$.  Therefore,
for a two-sided cycle $\gamma$, we associate an integer parameter
$\theta(\gamma):=\lfloor \log_{1+\varepsilon}(|\gamma_i|/|\gamma_i^o|)
\rfloor$, between 0 and~$\ell$, indicating in which range $|\gamma|$ lies.

We now describe the \textsc{BuildOneSkeleton} algorithm for constructing
$\Sk(D)$.  First, we compute the exhaustive family~$\Gamma$ associated
to~$D$.  We now look at each cycle~$\gamma\in\Gamma$ in turn, and add some
vertices and edges in~$\Sk(D)$, as follows.

If $\gamma$ is one-sided, we just add it to~$\Sk(D)$.

Otherwise, $\gamma$ is two-sided.  It is also non-contractible, because
otherwise a face of~$D$ would be a disk, contradicting the eligibility
of~$D$.  Let $\hat{S}_\gamma$ be the annular cover of~$S$ corresponding
to~$\gamma$
; let $\hat\gamma$ be a cyclic lift of $\gamma$ in this covering space. Let
$\hat p$ be an arbitrary lift of an arc of~$K$ that ``connects the two
boundaries of the annular cover'', or more  formally that crosses a shortest
closed curve homotopic to~$\hat\gamma$; such a lift exists because
otherwise $\hat\gamma$ would be contractible, and furthermore, every closed
curve homotopic to~$\hat\gamma$ crosses it (see Appendix~\ref{A:covering}).
Moreover, since $\hat{p}$ is a lift of an arc, it
intersects only a finite number of faces of $\hat{S}_{\gamma}$. For every
face~$f$ of~$\hat{S}_\gamma$ crossed by~$\hat p$, let $\hat\gamma_f$ be a
shortest non-contractible closed curve in the annulus~$\hat S_\gamma$ that
goes through~$f$.  Among all the closed curves $\hat\gamma_f$ of length at
most $(1 + \varepsilon)^{\theta(\gamma) +1} |\gamma|^o$, let $\hat\gamma^L$
and $\hat\gamma^R$ be ones corresponding to faces $f$ which are the
leftmost and rightmost along $\hat p$ (see
Figure~\ref{F:skeleton1}). Without loss of generality, these closed curves
can be chosen to be simple and disjoint. Finally, let $\hat A_\gamma$ be
the annulus bounded by $\hat\gamma^L$ and $\hat\gamma^R$, and let~$\hat D$
be the lift of~$D$ containing~$\hat\gamma$.  We add the projections
$\gamma^L$ and $\gamma^R$ of the curves $\hat\gamma^L$ and~$\hat\gamma^R$
to~$\Sk(D)$ (see Figure~\ref{F:skeleton1}, left). Moreover, if
on each side of the two-sided cycle~$\hat\gamma$ there is at least one edge
of~$\hat D$ incident to~$\hat\gamma$, we add the projection of the shortest
path in~$\hat A_\gamma$ between $\hat\gamma^L$ and $\hat\gamma^R$ to
$\Sk(D)$ (see Figure~\ref{F:skeleton1}, right).

Note that this construction only depends on the topology of $D$ and the
parameters $\theta(\gamma)$: Indeed, the topology of $D$ suffices to
compute the exhaustive family~$\Gamma$ and the topology of its
cycles. Furthermore, it specifies the length of $\gamma^o$ and whether a
cycle $\gamma$ of $\Gamma$ is one-sided or two-sided. If $\gamma$ is
two-sided, the knowledge of its topology also allows us to compute its
annular cover $\hat S_{\gamma}$.  Then, to define and compute the cycles
$\hat\gamma^L$ and $\hat\gamma^R$, one only needs to additionally know the
range of $\gamma$.  Whether a shortest path between $\hat\gamma^L$ and
$\hat\gamma^R$ is added to~$\Sk(D)$ can be deduced just by knowing the
topology of~$D$. Indeed, this is the case if and only if, in~$D$, there are
edges attached to vertices of~$\gamma$ that leave~$\gamma$ on both of its
sides: The ``only if'' part is trivial; for the ``if'' part, this follows
from the fact that any edge leaving~$\hat\gamma$ on its left (resp.,
right) must cross~$\hat\gamma^L$ (resp., $\hat\gamma^R$), because
$D$ has no degree-one vertex and no face that is a disk (by eligibility).

\begin{figure}
  \begin{center}
    \def\svgwidth{11cm}
    \begingroup%
  \makeatletter%
  \providecommand\color[2][]{%
    \errmessage{(Inkscape) Color is used for the text in Inkscape, but the package 'color.sty' is not loaded}%
    \renewcommand\color[2][]{}%
  }%
  \providecommand\transparent[1]{%
    \errmessage{(Inkscape) Transparency is used (non-zero) for the text in Inkscape, but the package 'transparent.sty' is not loaded}%
    \renewcommand\transparent[1]{}%
  }%
  \providecommand\rotatebox[2]{#2}%
  \newcommand*\fsize{\dimexpr\f@size pt\relax}%
  \newcommand*\lineheight[1]{\fontsize{\fsize}{#1\fsize}\selectfont}%
  \ifx\svgwidth\undefined%
    \setlength{\unitlength}{385.21790288bp}%
    \ifx\svgscale\undefined%
      \relax%
    \else%
      \setlength{\unitlength}{\unitlength * \real{\svgscale}}%
    \fi%
  \else%
    \setlength{\unitlength}{\svgwidth}%
  \fi%
  \global\let\svgwidth\undefined%
  \global\let\svgscale\undefined%
  \makeatother%
  \begin{picture}(1,0.33129976)%
    \lineheight{1}%
    \setlength\tabcolsep{0pt}%
    \put(0,0){\includegraphics[width=\unitlength,page=1]{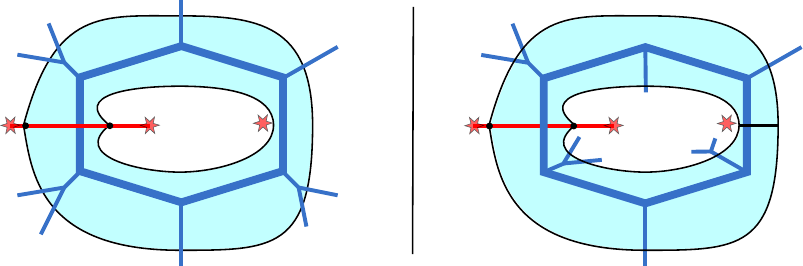}}%
    \put(0.20357214,0.12381414){\color[rgb]{0,0,0}\makebox(0,0)[lt]{\lineheight{0}\smash{\begin{tabular}[t]{l}$\gamma^L$\end{tabular}}}}%
    \put(0.27781571,0.26836781){\color[rgb]{0,0,0}\makebox(0,0)[lt]{\lineheight{0}\smash{\begin{tabular}[t]{l}$\gamma$\end{tabular}}}}%
    \put(0.12932858,0.31655238){\color[rgb]{0,0,0}\makebox(0,0)[lt]{\lineheight{0}\smash{\begin{tabular}[t]{l}$\gamma^R$\end{tabular}}}}%
    \put(0.05656008,0.18232398){\color[rgb]{0,0,0}\makebox(0,0)[lt]{\lineheight{0}\smash{\begin{tabular}[t]{l}$p$\end{tabular}}}}%
    \put(0.77962983,0.12349588){\color[rgb]{0,0,0}\makebox(0,0)[lt]{\lineheight{0}\smash{\begin{tabular}[t]{l}$\gamma^L$\end{tabular}}}}%
    \put(0.69456931,0.31574244){\color[rgb]{0,0,0}\makebox(0,0)[lt]{\lineheight{0}\smash{\begin{tabular}[t]{l}$\gamma^R$\end{tabular}}}}%
    \put(0.85584008,0.26804955){\color[rgb]{0,0,0}\makebox(0,0)[lt]{\lineheight{0}\smash{\begin{tabular}[t]{l}$\gamma$\end{tabular}}}}%
    \put(0.63458444,0.18200572){\color[rgb]{0,0,0}\makebox(0,0)[lt]{\lineheight{0}\smash{\begin{tabular}[t]{l}$p$\end{tabular}}}}%
    \put(0.94623381,0.18200572){\color[rgb]{0,0,0}\makebox(0,0)[lt]{\lineheight{0}\smash{\begin{tabular}[t]{l}$q$\end{tabular}}}}%
  \end{picture}%
\endgroup%

    \caption{The construction of the skeleton~$\Sk(D)$, as represented on the
      surface~$S$.  The cycle~$\gamma$ is in thick lines.  The closed
      curves $\gamma^L$ and~$\gamma^R$ are chosen so that, among the closed
      curves with the same homotopy class as~$\gamma$ and within the same
      range or lower, they cross~$p$ as left or as right as possible,
      respectively.  Left: In this case, $\gamma$ has incident edges only
      on one side, so only $\gamma^L$ and~$\gamma^R$ are added to~$\Sk(D)$.  The same construction would be used if $\gamma$ were
      incident to no other edge of~$C_0$. Right: In this case, $\gamma$ has incident
      edges on both of its sides, and thus not only $\gamma^L$
      and~$\gamma^R$ are added to~$\Sk(D)$, but also a shortest path
      (denoted by~$q$ here) connecting~$\gamma^L$ and~$\gamma^R$ in the
      region bounded by $\gamma^L$ and~$\gamma^R$.  }
       \label{F:skeleton1}
  \end{center}
\end{figure}

\subsection{Bounding the number of skeleta}

In line with the description of the \textsc{BuildOneSkeleton} algorithm, an
eligible topology and a choice of a range for every two-sided cycle of an
exhaustive family associated to that topology is called a \emphdef{choice
  of parameters}. Out of all the possible outputs of
\textsc{BuildOneSkeleton} computed with all the possible choices of
parameters, one will have the choice of parameters corresponding to $D$,
i.e., the correct topology and the correct choice of range for each cycle
of $D$.  We let $\Sk(D)$ be the output of \textsc{BuildOneSkeleton} for
this choice of parameters.

\begin{lemma}\label{L:numskeleta}
The number of possible choices of parameters is $\paren{\frac{g+t}\varepsilon}^{O(g+t)}$.
\end{lemma}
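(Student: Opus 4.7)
The plan is to multiply together the count of candidate topologies by the number of range choices, exploiting that both factors are controlled.

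First, I would invoke Lemma~\ref{L:struct} and the definition of a candidate topology: the number of candidate topologies is at most $(g+t)^{O(g+t)}$, since a candidate topology is a subtopology of one output by Lemma~\ref{L:struct} (with the same $(g+t)^{O(g+t)}$ enumeration bound, as each subtopology is specified by a subset of vertices and edges together with a cellular-map restriction).

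Next, for each candidate topology, I would apply Proposition~\ref{P:exhaustive} to bound the number of cycles in the associated exhaustive family $\Gamma$ by $O(g+t)$. For each two-sided cycle $\gamma \in \Gamma$, the range is an integer in $\{0,1,\ldots,\ell\}$, where $\ell = \Theta(\log((g+t)/\varepsilon)/\varepsilon)$, as computed in the description of the algorithm. So the number of range choices per cycle is $O(\log((g+t)/\varepsilon)/\varepsilon)$, and over all $O(g+t)$ cycles the total is at most
\[
\left(\frac{\log((g+t)/\varepsilon)}{\varepsilon}\right)^{O(g+t)}.
\]

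Multiplying this by the $(g+t)^{O(g+t)}$ bound on the number of candidate topologies gives
\[
(g+t)^{O(g+t)} \cdot \left(\frac{\log((g+t)/\varepsilon)}{\varepsilon}\right)^{O(g+t)} \leq \left(\frac{g+t}{\varepsilon}\right)^{O(g+t)},
\]
where the last inequality uses $\log x \leq x$ (so $\log((g+t)/\varepsilon)/\varepsilon \leq ((g+t)/\varepsilon)^2$) and $(g+t) \leq (g+t)/\varepsilon$ for $\varepsilon \leq 1$. There is no real obstacle here; the only care needed is the straightforward arithmetic showing that the logarithmic factor does not blow past $((g+t)/\varepsilon)^{O(g+t)}$.
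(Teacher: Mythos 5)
Your proposal is correct and follows essentially the same route as the paper's proof: bound the number of candidate topologies by $(g+t)^{O(g+t)}$ via Lemma~\ref{L:struct}, bound the number of range choices per cycle by $O(\log((g+t)/\varepsilon)/\varepsilon)$ over the $O(g+t)$ cycles of the exhaustive family, and multiply. The only difference is that you spell out the elementary arithmetic absorbing the logarithmic factor into $\paren{\frac{g+t}\varepsilon}^{O(g+t)}$, which the paper leaves implicit.
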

\begin{proof}
  By Lemma~\ref{L:eligibility-struct}, there are $(g+t)^{O(g+t)}$ eligible topologies,
  and to a given topology corresponds an exhaustive family $\Gamma$.
  Moreover, the number of ranges for every cycle is
  $\ell=O(\log\left(\frac{g+t}\varepsilon\right)/\varepsilon)$, and the
  number of cycles in~$\Gamma$ is $O(g+t)$.  As noted above, whether we
  need to add a shortest path between $\hat\gamma^L$ and~$\hat\gamma^R$ can
  be deduced from~$\gamma$ and from the topology of~$D$.  Thus, the total
  number of choices of parameters is $\paren{\frac{g+t}\varepsilon}^{O(g+t)}$, as
  desired.
\end{proof}

\subsection{Complexity}

\begin{lemma}\label{L:build-skeleton}
  The \textsc{BuildOneSkeleton} algorithm runs
  in time $(g+t)^{O(g+t)} \cdot n \log n$ and its output has complexity $O((g+t)^3n)$.
\end{lemma}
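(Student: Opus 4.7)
The plan is to trace through the \textsc{BuildOneSkeleton} algorithm of Section~\ref{SS:skeleta} and account for both the cost of each step and the complexity of each curve it contributes to $\Sk(D)$. First I would invoke Proposition~\ref{P:exhaustive} to compute the exhaustive family $\Gamma$, consisting of $O(g+t)$ cycles, in $(g+t)^{O(g+t)}$ time; the rest of the work is a loop over these cycles, with the bulk of each iteration being shortest-curve computations in a covering space.

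For a one-sided cycle $\gamma$, I would insert into $\Sk(D)$ a shortest representative of its (topologically specified) free homotopy class, using the shortest homotopic closed curve machinery summarized in Appendix~\ref{A:covering} and based on~\cite{ce-tnpcs-10}. This costs $(g+t)^{O(g+t)} n \log n$ time and yields a curve of complexity $O((g+t)^2 n)$. For a two-sided cycle $\gamma$, I would construct a finite portion of the annular cover $\hat S_\gamma$ large enough to contain every closed curve whose length is within the range prescribed by the parameters. A lift $\hat p$ of an arc of $K$ connecting the two boundaries of $\hat S_\gamma$ has $O(n)$ crossings with $G$ and therefore meets $O(n)$ faces of $\hat S_\gamma$. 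For each such face, Klein's multiple-source shortest path algorithm~\cite{k-msppg-05}, applied on this truncated annular cover, computes the corresponding shortest non-contractible closed curve $\hat\gamma_f$, yielding all of them in aggregate time $(g+t)^{O(1)} n \log n$. A linear scan then extracts $\hat\gamma^1$ and $\hat\gamma^2$, and one further shortest-path computation in $\hat A_\gamma$ produces the optional connecting path; each of these three curves has complexity $O((g+t)^2 n)$.

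Summing over the $O(g+t)$ cycles of $\Gamma$ gives total time $(g+t)^{O(g+t)} n \log n$ and total skeleton complexity $O((g+t)^3 n)$, matching the claim. The main obstacle is justifying that a truncation of the annular cover with size polynomial in $g+t$ and linear in $n$ is sufficient to carry out all the relevant shortest-curve computations, so that Klein's algorithm stays in the near-linear regime. This should follow from the length cap imposed by the range parameter, combined with the bounded complexity of arcs of $K$ guaranteed by Proposition~\ref{P:cutgraph}, which ensures that any candidate for $\hat\gamma^1$, $\hat\gamma^2$, or the connecting path can be traced entirely inside a finite portion of $\hat S_\gamma$ of the required size.
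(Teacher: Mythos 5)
Your overall architecture is the paper's: compute the exhaustive family via Proposition~\ref{P:exhaustive} in $(g+t)^{O(g+t)}$ time, then for each cycle work in a finite piece of the annular cover, use Klein's multiple-source shortest path algorithm to obtain all the curves $\hat\gamma_f$ (hence $\hat\gamma^1$, $\hat\gamma^2$ and the optional connecting path), and charge both the running time and the skeleton's complexity to the size of that finite piece. The gap is precisely the step you flag as ``the main obstacle,'' and the justification you sketch for it does not hold up. A length cap coming from the range parameter does not bound the portion of $\hat S_\gamma$ that a candidate curve can visit: in the cross-metric setting, crossings with (lifts of) the arcs of $K$ cost nothing, so a closed curve of small length may a priori traverse many fundamental domains; moreover, ``a portion large enough to contain every closed curve whose length is within the range'' is not something the algorithm can delimit, since it only knows the topology of $D$ (and the range parameter), not $D$ itself or actual lengths in advance.

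The paper closes this step differently (Appendix~\ref{A:covering}, Section~\ref{S:annular}): the truncation is the \emph{relevant region} of $\hat S_\gamma$, namely the copies of the fundamental domain (the disk obtained by cutting $S$ along $K$) visited by the closed lift $\hat\gamma$; since $\gamma$ crosses $K$ at most $O((g+t)^2)$ times for a candidate topology, this region has complexity $(g+t)^{O(1)}n$, and it is determined purely by the crossing sequence of $\gamma$ with $K$, i.e.\ by the guessed topology. The reason the shortest homotopic and shortest non-contractible curves stay inside it is not a length bound but the fact that the arcs of $K$ are shortest homotopic arcs (Proposition~\ref{P:cutgraph}), so their lifts are separating shortest paths in the cover; hence an optimal curve never crosses a lift of an arc of $K$ more often than $\hat\gamma$ does, with the one-sided (M\"obius) case handled by Lemma~\ref{L:mobius-aux}. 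With that region in hand, your accounting goes through as in Lemmas~\ref{L:annulus} and~\ref{L:mobius}: Klein's algorithm runs in $(g+t)^{O(1)}n\log n$ per cycle, and the skeleton, being $O(g+t)$ projections of shortest paths and curves living in a region of size $(g+t)^{O(1)}n$, has complexity $O((g+t)^3 n)$. So: same approach as the paper, but the length-cap argument must be replaced by the relevant-region argument for the proof to be complete.
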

\begin{proof}
  We first compute the exhaustive family~$\Gamma$ corresponding to the
  graph~$D$, using Proposition~\ref{P:exhaustive}, in $(g+t)^{O(g+t)}$
  time.  It is made of $O(g+t)$ cycles, each of them crossing~$K$ a number
  of times that is $O((g+t)^2)$.  (For the purpose of this computation, we
  only need
 to keep the information of the sequences of the arcs of~$K$ crossed by the
 cycles in~$\Gamma$, together with
 the orientation of the crossings.)

 For each such cycle~$\gamma$, we
  compute the curve $\hat\gamma$, as well as $\hat\gamma^L$ and
  $\hat\gamma^R$, and possibly the shortest path inbetween, as follows (see
  Appendix~\ref{S:annular}).  We first compute the shortest homotopic closed curve~$\gamma^o$ in
  $O((g+t)jn\log n)$ time, where $j$ is the number of crossings between
  $\gamma$ and~$K$, as explained in Lemmas \ref{L:annulus}
  and~\ref{L:mobius}. By eligibility, we
  have $j=O((g+t)^2)$.
  The curves $\hat\gamma_f$ remain in the relevant region used to
  compute~$\gamma^o$, so, using the techniques of Lemmas \ref{L:annulus}
  and~\ref{L:mobius}  (which rely on Klein's multiple shortest path
  algorithm~\cite{k-msppg-05}), we can also compute, in the same amount of
  time, the length of all the closed curves $\hat\gamma_f$, and thus also
  the closed curves $\hat\gamma^L$ and~$\hat\gamma^R$.  We can actually
compute the curves $\hat\gamma^L$ and~$\hat\gamma^R$ in the cross-metric
  surface corresponding to the relevant region, and force them to be simple
  (by a shortest path computation) and disjoint (by computing one of them
  first, and forbidding the second one to cross the first one).
  Finally, if needed, we can compute a shortest path between $\hat\gamma^L$
  and~$\hat\gamma^R$.   We
  then project these curves to the surface~$S$, keeping only, for each edge~$e$
  of~$\Sk(D)$, the ordered sequence of oriented edges of~$G$ crossed
  by~$e$.  (The projections of the curves $\hat\gamma^L$, $\hat\gamma^R$,
  and possibly the shortest path, may have crossings.)

  Since $\Sk(D)$ is made of $O(g+t)$ projections of shortest paths in a relevant region (see Appendix~\ref{A:covering}) of size $O((g+t)^2n)$, the bound on the complexity follows.
\end{proof}

\subsection{Properties of~$\Sk(D)$}

For an eligible graph $D$ with no long cycle, recall that $\Sk(D)$ is  computed with the choice of parameters corresponding to $D$. To prove Proposition~\ref{P:skeleton}, it remains to prove that
$\Sk(D)$ has the desired properties.

\begin{lemma}\label{L:length-skeleton}
 $\Sk(D)$ has length $O((g+t)|D|)$.
\end{lemma}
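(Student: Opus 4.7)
The plan is to sum the contributions of the $O(g+t)$ cycles of the exhaustive family $\Gamma$ used by \textsc{BuildOneSkeleton} (cf.\ Proposition~\ref{P:exhaustive}) and to show that each contributes $O(|D|)$ to the length of $\Sk(D)$. For a one-sided cycle $\gamma \in \Gamma$, the algorithm adds $\gamma$ itself; since $\gamma$ is a cycle of the embedded graph $D$, its edges form a subset of $E(D)$, so $|\gamma|\leq|D|$ immediately.

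For a two-sided cycle $\gamma\in\Gamma$, the algorithm adds the projections $\gamma^1,\gamma^2$ of $\hat\gamma^1,\hat\gamma^2$, and possibly the projection of a shortest path $q$ between $\hat\gamma^1$ and $\hat\gamma^2$ in the annulus $\hat A_\gamma$. By construction $\hat\gamma^1$ and $\hat\gamma^2$ are chosen so that their lengths lie within the range of~$\gamma$ (or lower); since the ratio between the endpoints of any range is $1+\varepsilon$ and $|\gamma|$ lies in that range, I obtain $|\gamma^1|,|\gamma^2|\leq(1+\varepsilon)|\gamma|\leq 2|D|$, where I use that projections from the cover do not increase length.

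The hard part will be bounding $|q|$. My approach is to exploit two structural features: first, $\hat\gamma$ is itself a non-contractible simple closed curve of length $|\gamma|$ in the annulus $\hat S_\gamma$; second, the leftmost/rightmost choice of $\hat\gamma^1,\hat\gamma^2$ along~$\hat p$ forces every crossing of $\hat\gamma$ with $\hat p$ to lie between those of $\hat\gamma^1$ and $\hat\gamma^2$ (since each face~$f$ on $\hat p$ crossed by $\hat\gamma$ satisfies $|\hat\gamma_f|\leq|\hat\gamma|=|\gamma|$ and is hence a valid candidate for $f_1$ and $f_2$), so $\hat\gamma$ is homotopically sandwiched in $\hat A_\gamma$. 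From this I will construct a path from $\hat\gamma^1$ to $\hat\gamma^2$ by concatenating a short excursion from $\hat\gamma^1$ to $\hat\gamma$, a partial traversal of $\hat\gamma$ of length at most $|\gamma|$, and a short excursion from $\hat\gamma$ to $\hat\gamma^2$. The length of each excursion is controlled by the extremality of $\hat\gamma^1$ and $\hat\gamma^2$: an excursion of length exceeding the range upper bound could be spliced with $\hat\gamma^1$ (resp.\ $\hat\gamma^2$) to yield a non-contractible closed curve through a face more extreme than $f_1$ (resp.\ $f_2$) whose length is within the range, contradicting the choice of $f_1$ (resp.\ $f_2$). This will give $|q|=O(|\gamma|)=O(|D|)$. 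Summing over the $O(g+t)$ cycles of $\Gamma$ then yields $|\Sk(D)|=O((g+t)|D|)$, as required.
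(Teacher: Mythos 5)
Your treatment of one-sided cycles and of the two closed curves $\gamma^1,\gamma^2$ is correct and matches the paper (the bound $|\gamma^i|=O(|\gamma|)$ follows from the choice of range), and the sandwiching of $\hat\gamma$ between $\hat\gamma^1$ and $\hat\gamma^2$ is a valid observation. The gap is in the bound on the shortest path $q$ in the annulus $\hat A_\gamma$, which is where the real content of the lemma lies.

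Your excursion argument does not work. The extremality of $f_1$ and $f_2$ says only that no face of $\hat p$ strictly to the left of $f_1$ (resp.\ right of $f_2$) admits a short non-contractible closed curve through it; it says nothing about distances \emph{between} the three curves $\hat\gamma^1$, $\hat\gamma$, $\hat\gamma^2$, all of which live in the region between $f_1$ and $f_2$. Concretely, $\hat A_\gamma$ could be a very long, thin annulus in which $\hat\gamma^1$, $\hat\gamma$, $\hat\gamma^2$ each have length $\approx|\gamma|$ but are pairwise at distance much larger than $|\gamma|$; nothing in the definition of $f_1,f_2$ forbids intermediate faces whose shortest non-contractible curves are very long. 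Moreover, the contradiction you invoke is inverted: splicing a \emph{long} excursion with $\hat\gamma^1$ produces a \emph{long} closed curve (and one passing through \emph{less} extreme faces, not more), so it cannot contradict the choice of $f_1$. Your sketch also tries to bound $|q|$ unconditionally, whereas the algorithm adds $q$ only when $\hat\gamma$ has incident edges of $\hat D$ on both sides; that hypothesis is precisely what must be exploited.

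The missing idea is structural, not metric. The path $q$ is added only when $\hat D$ has edges leaving $\hat\gamma$ on both sides. Because $D$ has a candidate topology, it has no degree-one vertices and every face of $D$ contains a terminal; since the annulus $\hat A_\gamma$ contains no terminal, the lift $\hat D$ cannot have a dead end or a terminal-free face inside $\hat A_\gamma$, so it must actually reach both $\hat\gamma^1$ and $\hat\gamma^2$. Hence $\hat D$ itself provides a path from $\hat\gamma^1$ to $\hat\gamma^2$ inside $\hat A_\gamma$ of length at most $|D|$, and therefore $|q|\le|D|$. Without this use of the candidate topology, the bound on $|q|$ cannot be salvaged by the geometric considerations you propose.
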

\begin{proof}
  Let $\gamma$ be a cycle of $\Gamma$.  By construction,  $\gamma^L$
  and~$\gamma^R$ have length at most
  $(1+\varepsilon)^{\theta(\gamma)+1}|\gamma^o|\le(1+\varepsilon)|\gamma|$.
  Moreover, if
  we added to~$\Sk(D)$ the projection of a shortest path connecting
  $\hat\gamma^L$ to~$\hat\gamma^R$, it is because $\hat\gamma$ had
  some edges leaving it on each on its sides.  In that case, since $D$
  is eligible, it has no degree one vertex and no face
  without terminal, and since the annulus~$\hat A_\gamma$ bounded by
  $\hat\gamma^L$ and~$\hat\gamma^R$ contains no terminal, it must be
  that $\hat D$ connects $\hat\gamma^L$ to~$\hat\gamma^R$.  So the
  shortest path between $\hat\gamma^L$ and~$\hat\gamma^R$ has length
  at most that of~$D$ (see Figure~\ref{F:skeleton1}, right).  Thus,
  each time we consider a cycle in the exhaustive family~$\Gamma$, the
  length of~$\Sk(D)$ increases by $O(|D|)$.  The desired result
  follows since $\Gamma$ has $O(g+t)$ cycles.
\end{proof}

\begin{lemma}\label{L:skeleton-forest}
  There is a graph~$D'$ drawn on~$S$ such that:
    \begin{itemize} 
    \item $D'$ has the same homotopy type as~$D$,
    \item $|D'|\leq(1+\varepsilon)|D|$,
    \item $\Sk(D)$ intersects every cycle of~$D'$,
    \item $D'$ is small.
    \end{itemize}
\end{lemma}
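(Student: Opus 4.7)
The plan is to build $D'$ by modifying $D$ iteratively over the cycles of the exhaustive family $\Gamma$ associated to $D$ by Proposition~\ref{P:exhaustive}. The guiding observation is the defining property of $\Gamma$: every cycle of $D$ either belongs to $\Gamma$ or essentially crosses some $\gamma\in\Gamma$. Since $\Sk(D)$ contains a curve homotopic to each $\gamma\in\Gamma$ (namely $\gamma$ itself if $\gamma$ is one-sided and $\gamma^1$ if $\gamma$ is two-sided), a cycle of the second kind must meet $\Sk(D)$ in any drawing by the topological definition of essential crossing. Hence only the cycles in $\Gamma$ themselves require attention, and I would handle them by local homotopies inside the annular regions that defined the skeleton.

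More concretely, I would set $D':=D$ and do nothing for one-sided $\gamma\in\Gamma$, since $\gamma$ already lies in $\Sk(D)$. For each two-sided $\gamma\in\Gamma$, I would work in the annular cover $\hat S_\gamma$ and compare the cyclic lift $\hat\gamma$ with the curves $\hat\gamma^1,\hat\gamma^2$ bounding the annulus $\hat A_\gamma$. If $\hat\gamma$ already crosses $\hat\gamma^1\cup\hat\gamma^2$ there is nothing to change; otherwise I would perform a homotopy inside $\hat A_\gamma$, projected back to $S$, that slides the edges of $\gamma$ onto $\gamma^1$. The geometric claim underpinning the second case is that $\hat\gamma$ must in fact lie in $\hat A_\gamma$ and not in one of the two outer annular components of $\hat S_\gamma\setminus(\hat\gamma^1\cup\hat\gamma^2)$: every face where $\hat\gamma$ crosses the reference path $\hat p$ hosts a shortest non-contractible closed curve of length at most $|\hat\gamma|$, which lies in the range of $\gamma$ and is therefore eligible, so by the leftmost/rightmost choice made when constructing the skeleton it lies between the faces realizing $\hat\gamma^1$ and $\hat\gamma^2$ along $\hat p$; connectedness of $\hat\gamma$ then forces it into $\hat A_\gamma$.

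I would then verify the four required properties. The homotopy-type condition follows from Lemma~\ref{L:preserve} since every modification is a homotopy of the drawing. The length bound uses $|\gamma^1|\le(1+\varepsilon)|\gamma|$ guaranteed by the range definition, yielding a per-modification increase of at most $\varepsilon|\gamma|$; summing over the $O(g+t)$ cycles of $\Gamma$ and absorbing an extra $(g+t)$ factor by rescaling the granularity of the ranges by $1/(g+t)$ (which only changes constants in Lemma~\ref{L:numskeleta}) gives $|D'|\le(1+O(\varepsilon))|D|$. The bound on crossings with $K$ uses the fact that $\hat\gamma^1$, as a shortest non-contractible closed curve in $\hat S_\gamma$, admits no bigon with the lifts of $K$ and hence crosses each edge of $K$ at most as many times as $\hat\gamma$ does, namely $O(g+t)$ by the assumption that $D$ has a candidate topology.

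Finally, the intersection property follows from a case analysis on each cycle $\delta'$ of $D'$, which corresponds through the underlying graph bijection to a cycle $\delta$ of $D$. If $\delta\notin\Gamma$, then $\delta$ essentially crosses some $\gamma\in\Gamma$; since essential crossing is a homotopy invariant, $\delta'$ must also essentially cross the corresponding element of $\Sk(D)$, and the two drawings therefore meet. If $\delta\in\Gamma$ is one-sided, then $\delta'=\gamma\subseteq\Sk(D)$. If $\delta\in\Gamma$ is two-sided, then either $\delta'$ inherits a crossing with $\gamma^1\cup\gamma^2$ from $D$ or it has been replaced by $\gamma^1\subseteq\Sk(D)$. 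The main technical obstacle is precisely the geometric claim used in the construction---that $\hat\gamma$ lies in $\hat A_\gamma$ whenever it is disjoint from $\hat\gamma^1\cup\hat\gamma^2$---since everything else follows by combining ingredients already established for the skeleton.
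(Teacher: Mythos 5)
Your overall strategy is the paper's: non-$\Gamma$ cycles are dealt with by essential crossing, and the problematic cycles of the exhaustive family are pushed, inside their annular covers, onto curves of the skeleton. But two steps of your construction have genuine gaps. First, you slide a cycle $\gamma$ onto $\gamma^1$ while ignoring the rest of $D$ attached to $\gamma$. A homotopy of the whole drawing drags every incident edge along with the attachment points, and the extension paths this creates have length comparable to the distance between $\gamma$ and $\gamma^1$, which is not controlled by the ranges (only the \emph{lengths} of the eligible curves are controlled, not their distance from $\gamma$); so your ``per-modification increase of at most $\varepsilon|\gamma|$'' is unjustified and the length bound fails. The paper avoids this by exploiting the case in which the modification is actually needed: the cycle misses the skeleton only if the connecting path between $\hat\gamma^1$ and $\hat\gamma^2$ was \emph{not} added, which (using the candidate topology and absence of degree-one vertices) forces the part of the lift of $D'$ inside $\hat A_\gamma$ to consist of $\hat\gamma$ plus trees attached on one side only, with leaves on $\hat\gamma^2$; one then pushes toward \emph{that} side and replaces the whole content of the region by a single curve, so the attachments are absorbed rather than dragged. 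This also requires handling the fact that the projection of $\hat\gamma^2$ may be non-simple (Hass--Scott monogon/bigon flips, then Epstein's lemma to get an embedded annulus), a point you do not address and which is needed to justify that the replacement is a legitimate sequence of homotopies, contractions and expansions on $S$.

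Second, your argument that $\Sk(D)$ meets every cycle of $D'$ does not go through as a single pass over $\Gamma$. For a two-sided $\delta\in\Gamma$ whose lift crosses $\hat\delta^1\cup\hat\delta^2$, that crossing is \emph{not} essential ($\delta$ and $\delta^1$ are freely homotopic), so it is not ``inherited'' by $\delta'$: pushing another cycle that shares edges with $\delta$ moves $\delta$ and can destroy the crossing, and likewise a later push can move a cycle that you had already placed on the skeleton. The paper resolves this by iterating ``while some cycle of $D'$ misses $\Sk(D)$'', maintaining the invariant that the part of $D'$ off the skeleton still lies in $D$, and proving termination because the cycles treated in successive iterations are edge-disjoint (this edge-disjointness is also what gives the $(1+O(\varepsilon))$ length bound directly, without the rescaling of ranges by $1/(g+t)$ that you propose); correctness of the intersection property is then immediate at termination, after a final perturbation restoring general position. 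Your identification of the key geometric fact that $\hat\gamma\subset\hat A_\gamma$ is correct and matches the paper, but without the structural analysis of the attachments and the iterative invariant the proof does not close.
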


To prove this lemma, we will need the following lemma of Hass and Scott.

\begin{lemma}[{Hass and Scott~\cite[Theorem~2.4]{hs-ics-85}}]\label{L:hass-scott-1}
  Let $\gamma$ be a closed curve in general position on a surface with
  non-empty boundary, not simple but homotopic to a simple curve.  Then
  $\gamma$ forms a monogon or a bigon.
\end{lemma}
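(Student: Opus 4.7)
The plan is to lift $\gamma$ to an appropriate cover of $S$, find a monogon or bigon there by elementary planar arguments, and then project an innermost such disk back down to $S$. The choice of cover depends on whether $\gamma$ is contractible: if it is, I would use the universal cover of $S$; if it is not, I would use the annular cover $\hat S_\gamma$ associated to the cyclic subgroup $\langle\gamma\rangle\subseteq\pi_1(S)$ (the same kind of cover introduced in Section~\ref{S:topology} and used throughout Section~\ref{S:skeleton}). In either case, since $S$ has non-empty boundary, the relevant cover embeds as an open subset of the plane (after first passing to the orientation double cover if $S$ is non-orientable, which does not affect the conclusion), so planar topology is available.

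In the contractible case, a lift $\tilde\gamma$ of $\gamma$ to the universal cover $\tilde S$ is a closed curve in the plane. A self-intersection of $\gamma$ lifts either to a self-crossing of $\tilde\gamma$ or to a crossing between two distinct lifts $\tilde\gamma$ and $g\cdot\tilde\gamma$. In either situation, the union of the involved lifts bounds at least one complementary disk in $\tilde S$, and an innermost such complementary disk in the union of \emph{all} lifts of $\gamma$ necessarily has boundary formed by one or two subarcs meeting at crossing points (since any corner with more than two incident arcs could be simplified), which is exactly a monogon or a bigon upstairs.

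In the non-contractible case, $\gamma$ is homotopic to a simple essential closed curve $\gamma'$ by hypothesis. Then in $\hat S_\gamma$, both $\gamma$ and $\gamma'$ admit closed lifts $\hat\gamma$, $\hat\gamma'$ homotopic to the core circle, and the lifts of $\hat\gamma'$ under deck transformations are pairwise disjoint essential simple curves. I would pass one more time to the universal cover of the annulus (an infinite strip) and run the same planar argument: the preimage of $\gamma'$ is a family of pairwise disjoint properly embedded arcs, so each lift of $\gamma$ is homotopic to a simple arc disjoint from its neighbours, and any self-intersection of the preimage of $\gamma$ forces—again by taking an innermost complementary disk—a monogon or bigon in the strip.

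The final step, and what I expect to be the main obstacle, is to transfer the monogon/bigon from the cover down to $S$ without losing the property that its interior avoids $\gamma$. The natural move is to pick an innermost disk: one whose interior is disjoint from the entire preimage of $\gamma$, not merely from the two (or one) boundary arcs chosen. Such a disk exists because within any compact subregion of the cover the preimage of $\gamma$ has only finitely many components and finitely many intersection points, so there is a minimal complementary disk whose boundary is formed by arcs of lifts of $\gamma$. Once the disk is innermost in this strong sense, no two of its boundary points are identified by the covering projection—any such identification would exhibit another lift of $\gamma$ passing through the disk, contradicting innermostness—and its image in $S$ is the desired monogon or bigon. The bookkeeping at multiple self-crossings and at the corners of the disk, as well as the verification that the orientation-double-cover reduction is harmless (a monogon or bigon downstairs lifts to one upstairs and vice versa, because both are disks), is the delicate part but is purely combinatorial.
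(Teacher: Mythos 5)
The paper offers no proof of Lemma~\ref{L:hass-scott-1}: it is quoted as an external result of Hass and Scott, so the only question is whether your argument is a correct self-contained proof. It is not. The fatal step is the innermost-disk claim, in both of your cases: you assert that an innermost complementary disk of the preimage of $\gamma$ ``necessarily has boundary formed by one or two subarcs meeting at crossing points (since any corner with more than two incident arcs could be simplified).'' This is unjustified and false. A complementary face whose interior misses the curve entirely can perfectly well be an $n$-gon with $n\ge 3$ corners; being innermost says nothing about the number of corners, and ``simplifying a corner'' is not an operation you have defined, nor one you may perform without changing $\gamma$. Worse, no argument of this shape can succeed: there exist generic closed curves in a disk, with self-crossings, in which \emph{every} bounded complementary face has at least three corners (shadows of reduced knot diagrams with no monogon or bigon regions; their existence is also forced by the known $\Theta(n^{3/2})$ bound for untangling planar curves by homotopy moves --- if an empty monogon or bigon were always present, $n$ monotone crossing-decreasing moves would always suffice). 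Every closed curve in a disk is freely homotopic to a small embedded circle, so your contractible case, if correct, would apply verbatim to such curves and produce an empty monogon or bigon that does not exist. The hypothesis ``homotopic to a simple curve'' must therefore be used in a much more substantial way than choosing a cover and selecting an innermost face; that is precisely the nontrivial content of Hass and Scott's theorem (and also why one must be careful about the exact form of their statement, e.g.\ singular versus embedded $1$- and $2$-gons), and it is why the paper cites it rather than reproving it.

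The remaining steps inherit or add problems, though they are moot until the innermost-face step is repaired. The non-contractible case rests on the same false claim inside the strip. The orientation-double-cover reduction is not ``harmless'' in the direction you need: an empty monogon or bigon upstairs need not project to an embedded disk with empty interior downstairs, since injectivity of the projection on the closed disk requires the disk to be disjoint from its images under all nontrivial deck transformations, which is not implied by its interior being disjoint from the lifts of $\gamma$. For the same reason, the final descent argument (``any identification of boundary points would exhibit another lift of $\gamma$ passing through the disk'') is not a valid inference: a deck transformation can move the disk over itself without placing any further lift in its interior. As it stands, the proposal does not recover the cited theorem.
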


\begin{proof}[Proof of Lemma~\ref{L:skeleton-forest}]
  We start by providing an overview of the proof. 
  We will inductively build $D'$ from
  $D$ by pushing cycles of $D$ whenever $\Sk(D)$ does not
  meet them. If $D$ has a cycle~$\gamma$ that is disjoint
  from~$\Sk(D)$, then $\gamma$ must belong to the exhaustive
  family~$\Gamma$, and the shortest path between $\hat\gamma^L$
  and~$\hat\gamma^R$ has not been added.  So we can push $\hat\gamma$,
  together with the attached parts in~$\hat A_\gamma$, in the
  direction of these attached parts until it hits $\hat\gamma^R$.
  This does not increase too much the length, because $\hat\gamma^R$ is
  not much longer than~$\hat\gamma$, and preserves the fact that one has a
  multicut dual.

  Here are the details.  In an intermediate step, some parts of~$D'$
  and~$\Sk(D)$ will overlap, so that these two graphs will \emph{not} be in
  general position.  Initially, let $D':=D$.  Our construction is
  iterative, modifying~$D'$ as long as some cycle in~$D'$ does
  not intersect~$\Sk(D)$.  We maintain the following invariant: the part
  of~$D'$ outside~$\Sk(D)$ is included in~$D$.

  So, assume that there exists a cycle~$\gamma$ in~$D'$ that is
  disjoint from~$\Sk(D)$.  By the invariant above, $\gamma$ appears at the
  same place in~$D$, and is thus embedded.  We claim that $\gamma$ belongs to the exhaustive
  family~$\Gamma$ corresponding to~$D$.  Indeed, otherwise, by the
  definition of an exhaustive family, $\gamma$ and some
  cycle~$\gamma'\in\Gamma$ would be essentially crossing; but, by
  construction, $\Sk(D)$ contains some closed curves homotopic
  to~$\gamma'$, which would therefore cross~$\gamma$, which is a
  contradiction.  Thus $\gamma\in\Gamma$.  Moreover, $\Sk(D)$ contains a
  closed curve homotopic to~$\gamma$, which $\gamma$ would cross if it were
  one-sided; so $\gamma$ is two-sided.

  \begin{figure}[t]
    \begin{center}
      \def\svgwidth{11cm}
      \begingroup%
  \makeatletter%
  \providecommand\color[2][]{%
    \errmessage{(Inkscape) Color is used for the text in Inkscape, but the package 'color.sty' is not loaded}%
    \renewcommand\color[2][]{}%
  }%
  \providecommand\transparent[1]{%
    \errmessage{(Inkscape) Transparency is used (non-zero) for the text in Inkscape, but the package 'transparent.sty' is not loaded}%
    \renewcommand\transparent[1]{}%
  }%
  \providecommand\rotatebox[2]{#2}%
  \newcommand*\fsize{\dimexpr\f@size pt\relax}%
  \newcommand*\lineheight[1]{\fontsize{\fsize}{#1\fsize}\selectfont}%
  \ifx\svgwidth\undefined%
    \setlength{\unitlength}{1243.7654512bp}%
    \ifx\svgscale\undefined%
      \relax%
    \else%
      \setlength{\unitlength}{\unitlength * \real{\svgscale}}%
    \fi%
  \else%
    \setlength{\unitlength}{\svgwidth}%
  \fi%
  \global\let\svgwidth\undefined%
  \global\let\svgscale\undefined%
  \makeatother%
  \begin{picture}(1,0.36115679)%
    \lineheight{1}%
    \setlength\tabcolsep{0pt}%
    \put(0,0){\includegraphics[width=\unitlength,page=1]{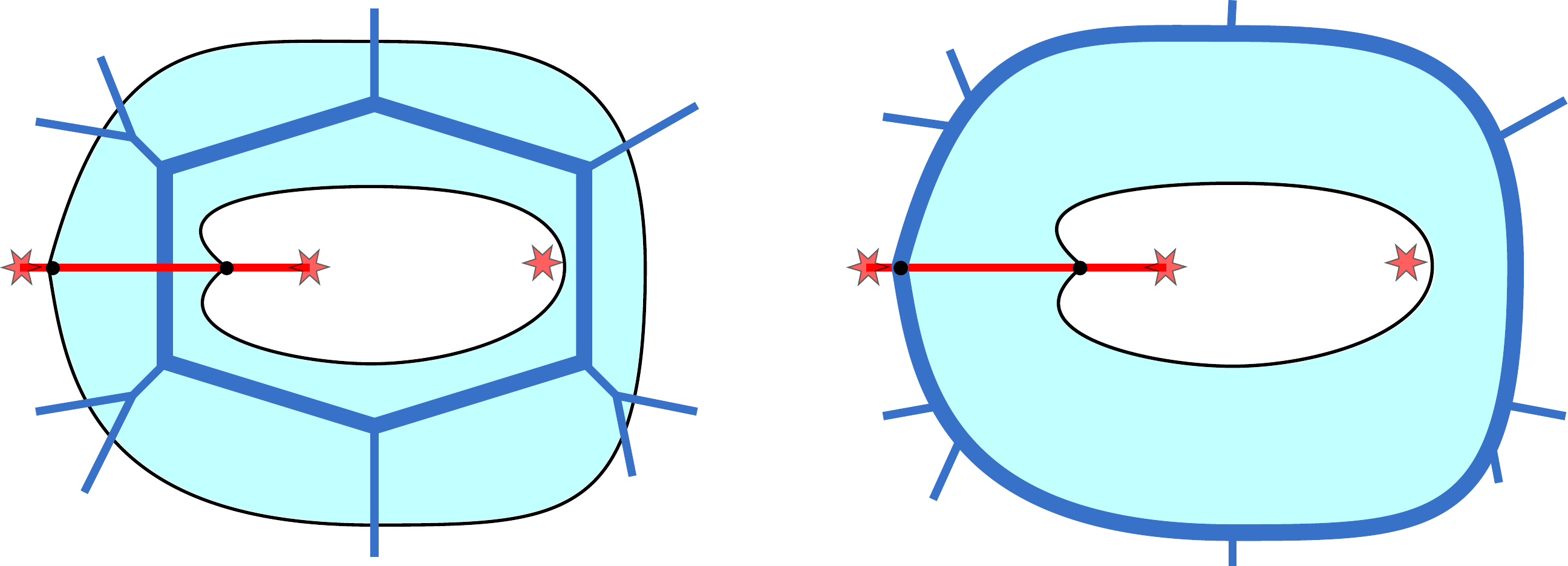}}%
    \put(0.21513442,0.13635269){\color[rgb]{0,0,0}\makebox(0,0)[lt]{\lineheight{0}\smash{\begin{tabular}[t]{l}$\gamma^L$\end{tabular}}}}%
    \put(0.13667407,0.34003789){\color[rgb]{0,0,0}\makebox(0,0)[lt]{\lineheight{0}\smash{\begin{tabular}[t]{l}$\gamma^R$\end{tabular}}}}%
    \put(0.29359482,0.28911659){\color[rgb]{0,0,0}\makebox(0,0)[lt]{\lineheight{0}\smash{\begin{tabular}[t]{l}$\gamma$\end{tabular}}}}%
    \put(0.05977254,0.19818572){\color[rgb]{0,0,0}\makebox(0,0)[lt]{\lineheight{0}\smash{\begin{tabular}[t]{l}$p$\end{tabular}}}}%
    \put(0.76502815,0.13470388){\color[rgb]{0,0,0}\makebox(0,0)[lt]{\lineheight{0}\smash{\begin{tabular}[t]{l}$\gamma^L$\end{tabular}}}}%
    \put(0.66044912,0.34600691){\color[rgb]{0,0,0}\makebox(0,0)[lt]{\lineheight{0}\smash{\begin{tabular}[t]{l}$\gamma^R$\end{tabular}}}}%
    \put(0.60145594,0.19852387){\color[rgb]{0,0,0}\makebox(0,0)[lt]{\lineheight{0}\smash{\begin{tabular}[t]{l}$p$\end{tabular}}}}%
  \end{picture}%
\endgroup%

\caption{In this simple case, we modify the multicut dual by
      pushing it to~$\gamma^R$, increasing its length by a factor of at
      most $1+\varepsilon$.}
       \label{F:skeleton2}
  \end{center}
  \end{figure}  

  Let $\hat\gamma$ be the lift of~$\gamma$ in the annular cover~$\hat
  S_\gamma$.  Since $\hat\gamma^L$ and $\hat\gamma^R$ belong to~$\Sk(D)$,
  they are not crossed by~$\hat\gamma$; by the choice of these closed
  curves, $\hat\gamma$ is entirely contained in the annulus~$\hat
  A_\gamma$.  Let $\hat D'_\gamma$ be the intersection of the interior
  of~$\hat A_\gamma$ with the connected component of the lift of~$D'$
  that contains~$\hat\gamma$.  Since $\gamma$ does not cross $\Sk(D)$, the
  shortest path between $\hat\gamma^L$ and $\hat\gamma^R$ was not added in
  the construction of~$\Sk(D)$.  This implies that there are edges adjacent to $\hat\gamma$ on at most of one its sides, and thus, since $\hat\gamma$ is separating in $\hat S_\gamma$, that $\hat D'_\gamma$ cannot
  touch both $\hat\gamma^L$ and~$\hat\gamma^R$. For example, let us assume
  that it does not touch $\hat\gamma^L$, the case where it does not touch
  $\hat\gamma^R$ being symmetric (if $\hat D'_\gamma$ touches neither $\hat\gamma^L$ nor $\hat\gamma^R$, we choose one arbitrarily).  We note that, by our invariant,
  the projection $D'_\gamma$ of~$\hat D'_\gamma$ to~$S$ is also
  part of~$D$, no face of which is a disk because $D$ is eligible.
  Thus, since $D$ has no degree zero or degree one vertex, $\hat D'_\gamma$ consists of the
  cycle~$\hat\gamma$ together with some disjoint trees attached to it and
  with leaves on~$\hat\gamma^R$.

  The na\"\i{}ve strategy now is to push $\hat D'_\gamma$
  onto~$\hat\gamma^R$, or more precisely to change $D'$ by removing
  $\hat D'_\gamma$ and by adding~$\hat\gamma^R$, as in
  Figure~\ref{F:skeleton2}.  However, there is a possible pitfall
  here, as two distinct points of $\hat D'_\gamma$ may project to the
  same point of~$S$ (see Figure~\ref{F:skeleton-pitfall}).  We therefore
  use a more complicated argument.  Let $\gamma^R$ be the projection
  of~$\hat\gamma^R$ on~$S$.  If $\gamma^R$ self-intersects, and since it
  is homotopic to the simple cycle~$\gamma$, it contains a monogon or a
  bigon (Lemma~\ref{L:hass-scott-1}).  We flip that monogon or bigon, so
  that the new curve has the same image as~$\gamma^R$ except in a small
  neighborhood of some self-crossing points, and has strictly less
  self-crossing points.  We repeatedly flip monogons and bigons in $\gamma$ until no more remain, obtaining a simple
  closed curve~$\gamma^{\prime R}$ homotopic to~$\gamma^R$ and with
  essentially the same image.  Now, $\gamma$ and~$\gamma^{\prime R}$ are
  simple, pairwise disjoint, homotopic in~$S$;
  furthermore, $\gamma$ cannot be contractible, since otherwise it would
  bound a disk~\cite[Theorem~1.7]{e-c2mi-66}.  It follows that $\gamma$ and $\gamma^{\prime R}$ bound an annulus~$A$ in~$S$~\cite[Lemma~2.4]{e-c2mi-66}.

  We claim that the edges of~$D$ incident to~$\gamma$, if there are any, leave~$\gamma$ on
  the side of~$A$.  Indeed, the pieces of~$\hat D'_\gamma$ belong to
  the annulus bounded by~$\hat\gamma$ and~$\hat\gamma^R$.  When we flip a
  monogon or bigon on~$\gamma^R$, this changes~$\hat\gamma^R$ to another
  closed curve that is disjoint from~$\hat\gamma$, and thus has to be on
  the same side of~$\hat\gamma$ in the annular cover (because flipping a
  monogon or bigon does not move the whole curve).  Thus, at the end of the
  process, the lift of the annulus~$A$ is on the same side of~$\hat\gamma$
  as was the annulus bounded by~$\hat\gamma$ and~$\hat\gamma^R$.

  We can now safely replace the part of~$D$ that belongs to~$A$ by the
  single cycle~$\gamma^{\prime R}$.  For the same reason as above, the part
  of~$D$ that belongs to~$A$ consists of the cycle~$\gamma$ together with
  disjoint trees with leaves on~$\gamma^{\prime R}$, so this operation can
  be represented as a sequence of edge compressions, edge expansions, and
  homotopies for~$D$.  This increases the length of~$D'$ by a factor of
  at most $1+\varepsilon$, because $\gamma^R$ (or $\gamma^{\prime R}$) is
  at most a factor $1+\varepsilon$ longer than~$\gamma$.  Finally, we
  slightly deform the part of $D'$ that overlaps $\gamma^{\prime R}$
  onto~$\gamma^R$, so that $D'$ really overlaps~$\Sk(D)$; this does
  not affect the length of~$D'$, and is possible because the image
  of~$\gamma^R$ is obtained from that of~$\gamma^{\prime R}$ by a small
  perturbation that makes pieces of~$\gamma^{\prime R}$ touch.

  We repeat this operation as long as there is a cycle~$\gamma$
  in~$D'$ that is not crossed by~$\Sk(D)$.  As proved above, such a cycle is
  necessarily part of~$\Gamma$, and after the iteration it is pushed
  onto~$\Sk(D)$, so this process eventually stops and the number of iterations
  is at most the number of cycles in~$\Gamma$.

  \begin{figure*}[t]
    \begin{center}
      \def\svgwidth{\linewidth}
      \begingroup%
  \makeatletter%
  \providecommand\color[2][]{%
    \errmessage{(Inkscape) Color is used for the text in Inkscape, but the package 'color.sty' is not loaded}%
    \renewcommand\color[2][]{}%
  }%
  \providecommand\transparent[1]{%
    \errmessage{(Inkscape) Transparency is used (non-zero) for the text in Inkscape, but the package 'transparent.sty' is not loaded}%
    \renewcommand\transparent[1]{}%
  }%
  \providecommand\rotatebox[2]{#2}%
  \newcommand*\fsize{\dimexpr\f@size pt\relax}%
  \newcommand*\lineheight[1]{\fontsize{\fsize}{#1\fsize}\selectfont}%
  \ifx\svgwidth\undefined%
    \setlength{\unitlength}{467.2121025bp}%
    \ifx\svgscale\undefined%
      \relax%
    \else%
      \setlength{\unitlength}{\unitlength * \real{\svgscale}}%
    \fi%
  \else%
    \setlength{\unitlength}{\svgwidth}%
  \fi%
  \global\let\svgwidth\undefined%
  \global\let\svgscale\undefined%
  \makeatother%
  \begin{picture}(1,0.1773203)%
    \lineheight{1}%
    \setlength\tabcolsep{0pt}%
    \put(0,0){\includegraphics[width=\unitlength,page=1]{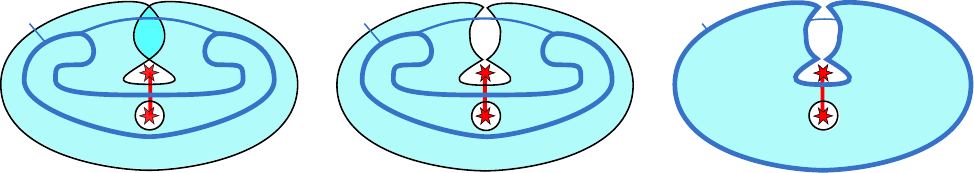}}%
    \put(0.03988074,0.05256574){\color[rgb]{0,0,0}\makebox(0,0)[lt]{\lineheight{0}\smash{\begin{tabular}[t]{l}$\gamma$\end{tabular}}}}%
    \put(0.26773669,0.15635353){\color[rgb]{0,0,0}\makebox(0,0)[lt]{\lineheight{0}\smash{\begin{tabular}[t]{l}$\gamma^R$\end{tabular}}}}%
    \put(0.17030639,0.05744702){\color[rgb]{0,0,0}\makebox(0,0)[lt]{\lineheight{0}\smash{\begin{tabular}[t]{l}$\gamma^L$\end{tabular}}}}%
    \put(0.51745862,0.05744702){\color[rgb]{0,0,0}\makebox(0,0)[lt]{\lineheight{0}\smash{\begin{tabular}[t]{l}$\gamma^L$\end{tabular}}}}%
    \put(0.38524355,0.05229412){\color[rgb]{0,0,0}\makebox(0,0)[lt]{\lineheight{0}\smash{\begin{tabular}[t]{l}$\gamma$\end{tabular}}}}%
    \put(0.61309951,0.1560819){\color[rgb]{0,0,0}\makebox(0,0)[lt]{\lineheight{0}\smash{\begin{tabular}[t]{l}$\gamma^{\prime R}$\end{tabular}}}}%
  \end{picture}%
\endgroup%

      \caption{Left: An example where $\gamma^R$ is non-simple (although
        its lift $\hat\gamma^R$ always is).  In that example, some points
        on the thin horizontal edge lifts to two distinct points
        in~$\hat A_\gamma$.  We iteratively flip the monogons and bigons
        of~$\gamma^R$ to make it simple (middle), and can then replace
        $\gamma$ with the new $\gamma^R$ (right).}
         \label{F:skeleton-pitfall}
    \end{center}
  \end{figure*}
  
  Finally, we slightly perturb $D'$ so that $D'$ and~$\Sk(D)$ are in
  general position to remove the overlaps.  We do so by ensuring that
  each edge of~$D'$ intersecting~$\Sk(D)$ before the perturbation
  still intersects it after the perturbation.  This preserves the fact
  that $\Sk(D)$ intersects every cycle in~$D'$.  This concludes
  the definition of~$D'$, which by construction has the same homotopy type as~$D$.
  
  At each of the
  iterations above, the length of~$D'$ increases by at most
  $(1+\varepsilon)$ times the length of the cycle~$\gamma$ considered
  in that iteration.  But the cycles~$\gamma$ considered are
  edge-disjoint, because once a cycle~$\gamma$ is considered, it is
  contained in~$\Sk(D)$.  So, after all the iterations, the length
  of~$D'$ has increased by a factor of at most $(1+\varepsilon)$.
  Hence the length of~$D'$ is at most $(1+\varepsilon)$ times the length of~$D$, as
  desired.

  There remains to prove that $D'$ is small.  Conditions \ref{sm:genpos},
  \ref{sm:nodeg01}, and~\ref{sm:compl} are obviously satisfied.  To prove
  Condition~\ref{sm:cross}, it suffices to prove that each lift~$\hat q$ of
  an arc of~$K$ in~$\hat S_\gamma$ is not crossed more by~$\hat\gamma^R$
  than by~$\hat\gamma$. This follows from the fact that arcs of $K$ are
  shortest homotopic paths and from the isolation rule for shortest paths
  explained in Section~\ref{S:topology}.
\end{proof}

\begin{proof}[Proof of Proposition~\ref{P:oneskeleton}]
  The result now immediately follows from Lemmas~\ref{L:numskeleta},
  \ref{L:build-skeleton}, \ref{L:length-skeleton},
  and~\ref{L:skeleton-forest}, observing that $\Sk(D)$ has $O(g+t)$
  vertices and edges by construction.
\end{proof}

\section{Near-optimal solution made of trees and cycles}
\label{S:portals}
Once the skeleta have been computed, one can compute a family of
\emph{portal sets}, one portal set per skeleton.

\begin{proposition}\label{P:portals}
  In $((g+t)/\varepsilon)^{O(g+t)}\cdot n\log n$ time, we can compute
  a family of
  $k=((g+t)/\varepsilon)^{O(g+t)}$ \emphdef{portal sets}
  $\{\PP_1, \ldots \PP_k\}$ such that each set consists of
  $O((g+t)^2/\varepsilon)$ points on~$S$, called \emphdef{portals}, and one
  of the portal sets, denoted by $\PP$, satisfies the following condition:
  There exists a good multicut dual $C''$ such that $C''_0\setminus\PP$ is
  a forest with $O(g+t)$ leaves.
\end{proposition}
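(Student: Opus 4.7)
The plan is to take the family of skeleta of Proposition~\ref{P:skeleton} and, for each skeleton $S_i$, place equally spaced portals along it. Concretely, I would parametrize $S_i$ by its cross-metric arc length and place $\Theta((g+t)^2/\varepsilon)$ portals at equal intervals, forming the set $\PP_i$. The number of portal sets equals the number of skeleta, which is $((g+t)/\varepsilon)^{O(g+t)}$, and since each skeleton has complexity $O((g+t)^3 n)$, the portal sets can be produced within the claimed time bound. For the correct skeleton $\Sk(C_0)$, whose length is $O((g+t)\OPT)$, the resulting spacing is $\delta = O(\varepsilon\OPT/(g+t))$; I denote the corresponding portal set by $\PP$.

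Next, starting from the good multicut dual $C' = (C'_0, C'_1, \ldots, C'_k)$ provided by Proposition~\ref{P:skeleton} (applied with the correct skeleton), I would define $C'' = (C''_0, C'_1, \ldots, C'_k)$ by local modifications to $C'_0$. Since $\Sk(C_0)$ meets every cycle of $C'_0$, the set $F$ of edges of $C'_0$ crossed by $\Sk(C_0)$ is a feedback edge set: every combinatorial cycle of $C'_0$ contains at least one edge of $F$. For each edge $e \in F$, I pick one of its intersection points $p$ with $\Sk(C_0)$, let $q$ be the portal of $\PP$ closest to $p$ along the edge of $\Sk(C_0)$ through $p$, and apply a local homotopy of $e$ that slides this crossing from $p$ to $q$ along the skeleton (followed by the kind of small perturbation used at the end of the proof of Lemma~\ref{L:skeleton-forest} to restore general position). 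Each such detour adds length at most $\delta$, and since $|E(C'_0)| = O(g+t)$, there are at most $O(g+t)$ detours, so the length of $C''_0$ increases by $O(\varepsilon\OPT)$.

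Finally, I would verify that $C''$ is a good multicut dual with $C''_0 \setminus \PP$ a forest. Each detour is a homotopy, so $C''_0$ and $C'_0$ have the same homotopy type; Lemma~\ref{L:preserve} then yields Condition~\ref{c:homology}, and Conditions \ref{c:cycles}, \ref{c:nodeg01}, \ref{c:nbcycles}, and~\ref{c:compl} follow since the abstract graph structure of $C'_0$ is unchanged. Condition~\ref{c:length} follows from the length estimate above. For the forest property, every combinatorial cycle of $C''_0$ corresponds to a cycle of $C'_0$, which by construction of $F$ contains an edge of $F$, and every such edge now carries a portal; hence cutting $C''_0$ at the portals of $\PP$ leaves a forest. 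The main obstacle I anticipate is verifying Condition~\ref{c:cross}: a detour runs along a portion of the skeleton and could \emph{a priori} introduce many new crossings with a given edge of $K$. I expect this to be controlled using that each edge of the skeleton is a (near-)shortest path---hence crosses each edge of $K$ a bounded number of times per skeleton edge---so that the $O(g+t)$ detours together introduce only $O(g+t)$ new crossings with each edge of $K$, preserving the $O(g+t)$ bound required by the definition of a good multicut dual.
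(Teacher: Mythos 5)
Your overall strategy is the same as the paper's: one portal set per skeleton, spaced so that every point of the skeleton is within $O(\varepsilon|\Sigma|/(g+t)^2)$ of a portal (hence $O((g+t)^2/\varepsilon)$ portals), then a single detour per edge of $C'_0$ crossed by $\Sk(C_0)$, and a condition-by-condition check that the result $C''$ is still a good multicut dual with $C''_0\setminus\PP$ a forest. The portal counts, the length accounting, the homotopy-type argument for Condition~\ref{c:homology}, and the forest argument all match the paper.

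However, the one step where you deviate is exactly where the argument breaks, and you have flagged it yourself: Condition~\ref{c:cross}. Your detour slides the crossing \emph{along the skeleton} to the nearest portal, and your proposed justification---that each skeleton edge is a (near-)shortest path and hence crosses each edge of $K$ a bounded number of times---is false. The skeleton edges are not shortest paths in $S$: they are projections of shortest non-contractible closed curves $\hat\gamma^1,\hat\gamma^2$ and of shortest paths computed in the annular cover $\hat S_\gamma$ (or one-sided cycles of $\Gamma$ added verbatim). The construction only guarantees that such a projection crosses each edge of $K$ at most as many times as the corresponding cycle $\gamma$ of $C_0$ does, which can be $\Theta(g+t)$ per edge of $K$ (this is all that Condition~\ref{c:cross} for $C_0$, or a candidate topology, gives). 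So a single detour running along a portion of a skeleton edge may pick up $\Theta(g+t)$ new crossings with one edge of $K$, and $O(g+t)$ detours may give $\Theta((g+t)^2)$, violating the $O(g+t)$ bound required of a good multicut dual. (Note also that cross-metric length does not control crossings with $K$, since $K$ is not part of $G$, so the fact that the detour is short does not help.) The paper avoids this by routing each detour via a \emph{shortest path in $S$} to the closest portal: its length is still $O(\varepsilon|\Sk(C_0)|/(g+t)^2)$ because the portal is within that distance along the skeleton, and since each arc of $K$ is the concatenation of two shortest paths (Proposition~\ref{P:cutgraph}) and two shortest paths cross at most once under the tie-breaking rule, each detour adds only $O(1)$ crossings with each edge of $K$, hence $O(g+t)$ in total. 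Replacing your ``slide along the skeleton'' detour by this shortest-path detour repairs the proof; the rest of your argument (including the forest property and the length bound) then goes through as in the paper.
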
               %
\begin{proof} 
We describe how to construct the sets 
  $\PP_1,\ldots, \PP_k$.
  There is a set $\PP_i$ for each skeleton obtained by
  Proposition~\ref{P:skeleton}.
  Fix a skeleton $\Sigma$ and let $e$ be an edge of $\Sigma$.
  Locate portals on~$e$ in a way that each
  point on~$e$ is at distance $O(\varepsilon |\Sigma|/(g+t)^2)$ of some
  portal, where as usual the $O(\cdot)$ notation hides a universal
  constant. This requires at most $O(|e|(g+t)^2/(\varepsilon |\Sigma|))$ portals.  Doing
  this for each of the $O(g+t)$ edges of~$\Sigma$ produces $O((g+t)^2/\varepsilon)$
  portals; these constitute the set $\PP_i$ corresponding
  to $\Sigma$. Doing this for each of the $((g+t)/\varepsilon)^{O(g+t)}$ skeleta
  yields the portal sets $\{\PP_1, \ldots \PP_k\}$.
  Recall that the algorithm of Section~\ref{S:skeleton} 
  encodes each skeleton by its abstract graph and,
  for each of its edges $e$, by the ordered sequence of the edges of $G$
  crossed by $e$ in the drawing, together with the orientation of each
  crossing. Thus, the algorithm has enough
  information to define the portal sets $\{\PP_1, \ldots \PP_k\}$.

  Now, following Proposition~\ref{P:skeleton}, let $\Sk(C_0)$ be a skeleton
  of length $O((g+t)\OPT)$ and $C'$ be the corresponding good multicut dual, 
  such that $C'_0\setminus\Sk(C_0)$ is a forest. This
  skeleton corresponds to one of the sets of portals, which we denote
  by $\PP$. 

  Consider the following subgraph $C''_0$ obtained
  by modifying $C'_0$ as follows:
  For every edge $e$ of $C'_0$ crossed by $\Sk(C_0)$, we insert, on~$e$, a
  \emph{single} detour  to the closest portal in $\PP$ via a shortest
  path.  Let $C''_0$ be the resulting graph.  By construction, each detour
  has length $O(\varepsilon |\Sk(C_0)|/(g+t)^2)$.  Moreover,
  $C''_0\setminus\PP$ is a forest with $O(g+t)$ leaves, because $C''_0$ has
  $O(g+t)$ edges, and each edge passes at most once through~$\PP$.  
  We now prove that $C''=(C''_0, C'_1, \ldots C'_k)$ is a good multicut dual.

  Conditions~\ref{c:small}\ref{sm:genpos}, \ref{c:small}\ref{sm:nodeg01},
  and~\ref{c:small}\ref{sm:compl} are satisfied by construction.  Since
  $C'_0$ crosses each arc of $K$ at most $O(g+t)$ times, each of the
  $O(g+t)$ detours consists of two shortest paths, and every arc in $K$ is
  the concatenation of two shortest paths, the number of crossings between
  $C''_0$ and each arc of~$K$ is $O(g+t)$, which gives
  Condition~\ref{c:small}\ref{sm:cross}.  Hence $C''$ is small.  Conditions
  \ref{c:cycles} and~\ref{c:nbcycles} are satisfied by construction.  The
  number of detours made is $O(g+t)$, and each of them has length
  $O(\varepsilon |\Sk(C_0)|/(g+t)^2)=O(\varepsilon\OPT/(g+t))$, so the
  length of $C''_0$ is at most $(1+O(\varepsilon))\OPT$, and
  Condition~\ref{c:length} is fulfilled.  Finally, $C''_0$ is obtained from
  $C'_0$ by a homotopy of the edges, so Condition~\ref{c:homology} is
  fulfilled.  Therefore, $C''$ is a good multicut dual.

  For each skeleton $\Sigma$ constructed by \textsc{BuildAllSkeleta}, the
  portals can be placed greedily by ``walking'' along each edge of
  $\Sigma$, in time linear in the complexity of $\Sigma$, which is
  $O((g+t)^3n)$. This yields an overall complexity of
  $((g+t)/\eps)^{O(g+t)}\cdot n \log n$.
\end{proof}

\section{The algorithm}
\label{S:alg}
The final step of our algorithm is to compute separately, for each set of
portals $\PP_i$ and for each possible position of a good multicut dual with
respect to $\PP_i$ and $K$, the optimal layout of this multicut dual. By
Proposition~\ref{P:portals}, there exists a good multicut dual $C''$ such
that $C_0''$ is cut into a forest by a set of portals $\PP$. Then, finding
$C_0''$ is achieved by (1) guessing the topology of the trees of the forest
and (2) for each of these, computing a Steiner tree corresponding to this
topology in some portion of the universal cover of $S$. The cycles $C''_i$
for $i\ge1$ are found by computing shortest homotopic closed
curves~\cite{ce-tnpcs-10} (see also Section~\ref{S:annular}).

Throughout this section, we consider homotopies of trees as defined in Section~\ref{S:homotopy-type}, where the leaves are fixed. The \emphdef{layout} of a forest $F$ drawn on $S$ is the collection
of homotopy types of the trees in $F$. These trees can be lifted to the universal
cover and, by Lemma~\ref{L:trees},
specifying a layout is the same as specifying the sets of lifts of
portals corresponding to each tree (after fixing a basepoint). It is a \emphdef{good layout} if
the forest $F$ has $O(g+t)$ leaves and if $F$ and each arc of~$K$ cross $O(g+t)$ times.

\begin{lemma}\label{L:homotopictrees}
Let $F=(T_1, \ldots, T_m)$ be a forest drawn on $S$ with a good layout. We
can compute in time $(g+t)^{O((g+t)^2)}n \log n$ a shortest forest
homotopic to $F$, i.e., a family of trees $(T'_1, \ldots , T'_m)$ drawn on
$S$ such that for each $i$, $T'_i$ is a shortest tree homotopic to $T_i$.
\end{lemma}

The proof of this lemma relies on the following theorem, which follows from the
classical algorithm of Dreyfus and Wagner~\cite{dw-spg-71}, sped up
using the techniques of Erickson, Monma, and
Veinott~\cite{emv-ssmmc-87} (this is not explicitly stated in that article, but it is folklore that the method applies to this problem, see, e.g., Vygen~\cite[Introduction]{v-faost-11}):
\begin{theorem} \label{T:steiner} Given a planar graph $G=(V,E)$ with $n$
  vertices and a set $T\subseteq V$ of $t$ terminals, one can compute an
  optimal Steiner tree of~$G$ with respect to~$T$ in time
  $2^{O(t)} n \log n$.
\end{theorem}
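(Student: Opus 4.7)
The plan is to combine the classical Dreyfus-Wagner dynamic program with a multi-source Dijkstra sweep tailored to planar graphs. For each nonempty subset $S \subseteq T$ and each vertex $v \in V$, define $D[S,v]$ to be the minimum weight of a connected subgraph of $G$ containing $S \cup \{v\}$; the desired answer is $\min_{v\in V} D[T,v]$. I would compute these values in order of increasing $|S|$, applying two update rules for each $S$.

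The first is the \emph{branching} update
\[
D[S,v] \;\leftarrow\; \min_{\varnothing \subsetneq S' \subsetneq S}\bigl(D[S',v] + D[S \setminus S', v]\bigr),
\]
which treats the case where the optimal Steiner tree rooted at $v$ splits into two subtrees. Summed over all $S$ and $v$, this contributes $\sum_{S \subseteq T} 2^{|S|} \cdot n = 3^t n = 2^{O(t)} n$. The second is the \emph{extension} update
\[
D[S,v] \;\leftarrow\; \min_{u \in V}\bigl(D[S,u] + \mathrm{dist}_G(u,v)\bigr),
\]
which covers the case where $v$ is a degree-one Steiner vertex joined to the rest of the tree by a shortest path from some branching point $u$. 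For a fixed $S$, this is exactly one shortest-path computation in the graph obtained by adding a virtual super-source $s$ to $G$ with an edge of weight $D[S,u]$ to each $u \in V$; equivalently, one seeds Dijkstra's priority queue with initial key $D[S,u]$ at every $u$ and relaxes the edges of $G$ as usual. On a planar graph with $O(n)$ edges, a single such Dijkstra call runs in $O(n \log n)$ time with a binary heap, so all $2^t$ extension steps together cost $2^{O(t)} n \log n$, dominating the branching contribution.

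The main obstacle is the recognition that the $\Theta(n^2)$ factor in the original Dreyfus-Wagner bound, arising from a naive all-pairs shortest-path step, can be replaced, for each subset $S$, by a single multi-source Dijkstra sweep with initial potentials $D[S, \cdot]$. Once this is in hand, planarity reduces the per-sweep cost from $O(m \log n)$ to $O(n \log n)$, and the correctness of the ``branch-then-extend'' schedule, applied to each $S$ in order of increasing cardinality, is the standard analysis of the Dreyfus-Wagner recurrence (this is the Erickson-Monma-Veinott speedup applied to Steiner tree). Combining the two bounds yields the claimed running time $2^{O(t)} n \log n$.
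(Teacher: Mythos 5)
Your proof is correct and matches exactly the approach the paper invokes: the paper does not spell out a proof but cites the Dreyfus--Wagner recurrence accelerated by the Erickson--Monma--Veinott trick of replacing the all-pairs shortest-path step with one multi-source Dijkstra sweep per subset, and your write-up is precisely that argument, with planarity entering only through $m = O(n)$. One could tighten the bound to $2^{O(t)}n$ by replacing the Dijkstra sweep with the linear-time planar shortest-path algorithm of Henzinger et al., but $O(n\log n)$ per sweep already matches the stated bound.
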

\noindent (We remark that the planar Steiner Tree PTAS of Borradaile, Klein, and
Mathieu~\cite{bkm-onasst-09} is exponential in $1/\varepsilon$, and thus
too slow for our purposes.)

\begin{proof}[Proof of Lemma~\ref{L:homotopictrees}]
Similarly to the computation of shortest homotopic paths described in
Section~\ref{S:universal}, the appropriate setting to compute Steiner
trees in our algorithm is a relevant region of the universal cover. A tree
$T$ drawn on $S$ connecting a set of leaves $L$ can be lifted in the
universal cover of $S$ to a tree $\tilde T$ connecting some lifts of the
leaves $\tilde L$. Computing a Steiner tree on the points $\tilde L$ and
projecting it back to $S$ yields the shortest tree with the same homotopy
type as $T$ connecting the leaves $\tilde L$. Indeed, after choosing a basepoint for
the lifts, every tree with the same homotopy type lifts to a tree that
connects the same lifts, by the definition of the universal
cover. Furthermore, it is enough to compute a Steiner tree in the relevant
region of $\tilde{S}$, defined as follows: If $D$ is the disk obtained by
cutting $S$ along~$K$, the relevant region is the union of the lifts of~$D$ visited
by~$\tilde T$.  The claim is that a lift of a shortest tree with the same
homotopy type as~$T$ belongs to the relevant region; as for shortest
homotopic paths, this follows from the fact that arcs in $K$ are shortest
homotopic arcs.

A relevant region
of the universal cover for \textit{all} the possible good layouts has
size $(g+t)^{O((g+t)^2)}n$: Since there are $O((g+t)^2)$ crossings
between the edges of $C''_0$ and the arcs of $K$, one can compute the relevant region
by cutting $S$ along $K$, yielding a disk $D$, and pasting new copies
of $D$ at its boundaries in a tree-like fashion up to depth
$O((g+t)^2)$. Since there are $O(g+t)$ arcs in $K$, the corresponding
relevant region has size $(g+t)^{O((g+t)^2)}n$.

In order to compute a shortest forest homotopic to $F$, we first
compute the relevant region. It is a planar graph, and we use
Theorem~\ref{T:steiner} to compute Steiner trees for each of lifts of
the trees in $F$. This is done in time
$2^{O(g+t)}(g+t)^{(O(g+t)^2)}n \log n=(g+t)^{(O(g+t)^2)}n \log
n$. Projecting the union of these trees back to the surface gives the
shortest forest homotopic to $F$.
\end{proof}

Similarly, for a closed curve $\gamma$ drawn on $S$, its homotopy class is \emphdef{good}
if the shortest homotopic closed curve crosses each arc of $K$ at most $O(g+t)$ times.

We now have all the tools to describe our algorithm. 

\bigskip
\textsc{Main algorithm}
\begin{enumerate}
\item Compute the family of skeleta using the \textsc{BuildAllSkeleta} algorithm of Section~\ref{S:skeleton}.
\item For each skeleton, compute a set of portals $\PP_i$ following Section~\ref{S:portals}.
\item For every set of portals $\PP_i$ and for every good layout of a
  forest with leaves on~$\PP_i$, compute a shortest homotopic forest corresponding to this layout using Lemma~\ref{L:homotopictrees}.
\item\label{c:homotopy} For every choice of good homotopy class of $j$ closed curves, for every
  integer $0 \leq j \leq O(g+t)$, compute shortest homotopic closed curves. All the possible combinations of these
  additional closed curves with the previous family of graphs yield the candidate solutions.
\item Output the shortest of the candidate solutions that is a multicut dual.
\end{enumerate}

\bigskip

\noindent We remark that, in step~\ref{c:homotopy}, we could rely on shortest \textit{homologous} closed curves instead of shortest homotopic closed curves. This follows from the homological nature of the multicut dual property. Those would be easier to enumerate and compute (see Erickson and Nayyeri~\cite{en-mcsnc-11}) but would not improve the overall complexity since this is not the bottleneck of the algorithm.

\begin{proof}[Proof of Theorem~\ref{T:main}]
\textbf{Running time.}
The computations in Sections~\ref{S:skeleton} and~\ref{S:portals} have
already been shown to take time $((g+t)/\varepsilon)^{O(g+t))}n \log
n$.

Each portal set~$\PP_i$ has $O((g+t)^2/\varepsilon)$ portals, which lift to
$O((g+t)^2/\varepsilon) \cdot
(g+t)^{O(g+t)^2}=(g+t)^{O(g+t)^2}/\varepsilon$ points in the relevant
region of the universal cover.  To specify a good layout, we need to choose
a set~$\hat{P}$ of $O(g+t)$ such lifted portals, which are the leaves of
the forest; this gives $(1/\eps)^{O(g+t)}\cdot(g+t)^{O(g+t)^3}$
possibilities.  Then we choose the number
$m=O(g+t)$ of trees and pick for each tree a subset of lifted
portals from the set $\hat{P}$; there are
$2^{O(g+t)}$ choices of subset for each tree. Thus the number of good
layouts of forests is
$1/\eps^{O(g+t)} \cdot (g+t)^{O((g+t)^3)}$.

A good homotopy class for a closed curve is characterized by the sequence of
crossings with $K$, and thus there are $(g+t)^{O((g+t)^2)}$ good
homotopy classes. This gives $(g+t)^{O(g+t)^3}$ choices
for the set of additional closed curves in the main algorithm.

Now, for each good layout of a forest, we compute the shortest homotopic forest using Lemma~\ref{L:homotopictrees}; this takes time $(g+t)^{O((g+t)^2)}n \log n$. Similarly, for each good layout of a closed curve, we compute the
shortest homotopic closed curve in the relevant region of its annular cover,
which is done using the tools exposed in
Section~\ref{S:annular}. Since the relevant region of the annular
cover has size $O((g+t)n)$, these computations are negligible compared
to the other ones.

Hence, the algorithm has an overall running time of
$1/\varepsilon^{O(g+t)}(g+t)^{O((g+t)^3)} \cdot n \log n$.

\bigskip
\textbf{Correctness and approximation guarantee.}
By Proposition~\ref{P:portals}, there is a good multicut dual $C''=(C''_0,C''_1,\ldots,C''_k)$ and a
set of portals $\PP$ such that $C''_0\setminus \PP$ is a forest with
$O(g+t)$ leaves.   By definition of a good multicut dual, $C''_0\setminus
P$ is a good layout of a forest with respect to~$\PP$, and $C''_1,\ldots,C''_k$ are good layouts
of closed curves. For this choice
of portals and layouts, our algorithm will output a graph $C'''$ made of a
set of closed curves $C'''_1,\ldots,C'''_k$, and a graph $C'''_0$ (possibly non-connected) such that:
\begin{itemize}
\item $C'''_0$ has the same homotopy type as $C''_0$, since the trees it consists of have the same homotopy type as the ones of $C''_0$;
\item $C'''_0$ is no longer than $C''_0$, since $C'''_0$ is made of Steiner trees 
  that are by definition no longer than the corresponding trees in
  $C''_0$;
\item each $C'''_i$, $1\le i\le k$, is homotopic to $C''_i$, since they have the same layout, and each $C'''_i$ is shorter than $C''_i$ by construction.
\end{itemize}

Therefore, by Lemma~\ref{L:preserve}, $C'''$ has the multicut dual
property, and thus by
Lemma~\ref{L:homology}, it is a multicut dual. Hence the graph output
by our algorithm is a multicut dual of length at most the length of $C''$,
which concludes the proof since $C''$ is a good multicut dual and thus a
near-optimal solution.
\end{proof}

\section*{Acknowledgments}

We would like to thank the referees for their numerous observations and
suggestions, which greatly improved the presentation of the article.

\bibliographystyle{siam}
\bibliography{biblio}
\appendix
\section{Covering spaces and shortest homotopic curves}
\label{A:covering}
In this appendix, we introduce the covering spaces that are used throughout this article, and summarize the most important points of the algorithms to compute shortest homotopic paths and closed curves~\cite[Sections 1
and~6]{ce-tnpcs-10}, eluding many technicalities
from that paper that are irrelevant to our problem.


A map $\pi:S' \to S$ between two surfaces is called a \emphdef{covering
  map} if it is a local homeomorphism, or more precisely if each point
$x\in S$ lies in an open neighborhood $U$ such that (1)~$\pi^{-1}(U)$ is a
countable union of disjoint open sets $U_1\cup U_2\cup\cdots$ and (2)~for
each~$i$, the restriction $\pi|_{U_i}:U_i \to U$ is a homeomorphism.  We
say that $S'$ (more formally, it should be $(S',\pi)$) is a
\emphdef{covering space} of $S$.  The definition implies that every
path~$p$ in~$S$ can be \emphdef{lifted} to a path~$p'$ in~$S'$ (the
opposite operation of \emphdef{projecting} with the map~$\pi$), and that
moreover this can be done in a unique way if the starting point~$x'$
of~$p'$ is prescribed (such that $\pi(x')$ equals the initial point
of~$p$).

An important type of covering space is the \emphdef{universal cover}
of~$S$, denoted by~$\tilde S$, in which every closed curve is
contractible.  The universal cover is unique; a closed curve in~$S$ is
contractible if and only if its lifts in~$\tilde S$ are closed curves.


\subsection{Universal cover and shortest homotopic paths}\label{S:universal}

Let $p$ be a path on~$S$.  By definition of the universal cover, to compute
a shortest path homotopic to~$p$, it suffices to (1) compute a
lift~$\tilde p$ of~$p$ in the universal cover~$\tilde S$, which is
naturally a (non-compact) cross-metric surface, (2) compute a shortest path
between the endpoints of~$\tilde p$, and (3) return the projection of the
resulting path to~$S$.  However, the universal cover is infinite, and one
needs to restrict oneself to a finite portion of the universal cover, its
\emph{relevant region} with respect to~$\tilde p$.  Let us explain in more
detail.

Let $D$ be the disk obtained by cutting $S$ along~$K$; this is naturally a
cross-metric surface defined by the image of~$G$.  The universal cover~$S$
can be built by gluing together infinitely many copies of~$D$ along lifts
of the arcs in~$K$; the \emphdef{relevant region} of~$\tilde S$ with
respect to~$\tilde p$ is defined to be the set of lifts of~$D$ visited
by~$\tilde p$.  The claim is that a shortest path in~$\tilde S$ connecting
the endpoints of~$\tilde p$ stays in the relevant region, and this follows
from the fact that, since the arcs in~$K$ are shortest homotopic arcs, they
lift to shortest paths, and each of these shortest paths separate the
universal cover.

We can easily build the relevant region of the universal cover
incrementally: Initially, it is the single copy of~$D$ containing the
source of~$\tilde p$.  We walk along~$\tilde p$ in the relevant region, and
each time we are about to exit the relevant region via a lift~$\tilde a$ of
an arc of~$K$, we attach a copy of~$D$ across~$\tilde a$, preserving the
local homeomorphism condition.  This results in a set of copies of~$D$
attached together in a tree-like fashion.

From the above discussion, it also follows that a shortest path~$\tilde p'$
with the same endpoints as~$\tilde p$ crosses each lift of an arc in~$K$ at
most once, and crosses only lifts of arcs of~$K$ that are already crossed
by~$\tilde p$.  This is because each lift of~$K$ is a shortest path; to
avoid $\tilde p'$ crossing twice the same lift of~$K$, as mentioned in Section~\ref{S:topology}, we use the
number of crossings with~$K$ as a tie-breaking measure for the notion of
length.  In particular, the projection~$p'$ of~$p$ crosses each arc of~$K$
at most as many times as $p$ does.

Finally, since computing shortest paths in a planar graph takes
linear time~\cite{hkrs-fspap-97}, and since the relevant region has
complexity $O((g+t)kn)$ where $k$ is the number of crossings between $p$
and~$K$, computing a shortest homotopic path can be done in time
$O((g+t)kn)$.


\subsection{Annular covers and shortest homotopic closed curves}\label{S:annular}
The same strategy can be used for computing a shortest homotopic closed
curve homotopic to a non-contractible closed curve~$\gamma$ in~$S$, but now
we have to use the \emphdef{annular cover} of~$S$ with respect to~$\gamma$,
denoted by~$\hat S_\gamma$; this is a covering space in which every simple
closed curve is either contractible or homotopic to a lift of~$\gamma$ or
to its reverse.  If $\gamma$ is two-sided, the annular cover is
homeomorphic to an annulus with some points of the boundary removed.
Otherwise, $\gamma$ is one-sided, and the annular cover is homeomorphic to
a M\"obius strip with some points of the boundary removed.  The shortest
homotopic closed curve in the annular cover crosses every lift of an arc
in~$K$ at most once, and crosses only arcs that are crossed by the
lift~$\hat\gamma$ of~$\gamma$ that is a closed curve (so, as above with
paths, a shortest homotopic closed curve crosses each arc at most as many
times as the original closed curve does).  In the annulus case, the proof
is well known, essentially as before, and in the case where the annular
cover is a M\"obius strip, it rests on the following lemma.
\begin{lemma}\label{L:mobius-aux}
  Let $M$ be a cross-metric surface that is a M\"obius strip.  Let $a$ be a
  non-separating arc that is as short as possible in its homotopy class.
  Then some shortest non-contractible closed curve in~$M$ crosses~$a$
  exactly once.
\end{lemma}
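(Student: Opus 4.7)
The plan is to take any shortest non-contractible closed curve $\gamma^*$ in $M$, which we may assume to be simple, and to argue in two phases. First, we homotope $\gamma^*$ into minimum position with $a$ without increasing its length, so that the resulting curve is still a shortest non-contractible closed curve. Second, we cut $M$ along $a$ and perform a combinatorial analysis to conclude that, in minimum position, $\gamma^*$ must cross $a$ exactly once.

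For the first phase, we iteratively eliminate any bigon formed by $\gamma^*$ and~$a$. Suppose such a bigon $B$ has boundary $\alpha \cup \alpha'$, with $\alpha$ a sub-arc of $\gamma^*$ and $\alpha'$ a sub-arc of~$a$. Since $B$ is a disk, $\alpha$ and $\alpha'$ are homotopic relative to their common endpoints, so we must have $|\alpha'| \le |\alpha|$: otherwise, substituting $\alpha$ for $\alpha'$ inside $a$ would produce an arc homotopic to~$a$ but strictly shorter, contradicting the minimality of $|a|$ in its homotopy class. Pushing $\alpha$ across $B$ just past~$\alpha'$ yields a closed curve $\gamma^{**}$ homotopic to~$\gamma^*$, with $|\gamma^{**}| \le |\gamma^*|$ and two fewer crossings with~$a$; by the minimality of~$|\gamma^*|$, $\gamma^{**}$ is also a shortest non-contractible closed curve. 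Iterating terminates in a shortest non-contractible closed curve in minimum position with~$a$, which I still call $\gamma^*$.

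For the second phase, cut $M$ along $a$. Since $a$ is non-separating, the result is a disk~$D$, whose boundary contains two copies $a^+, a^-$ of~$a$ glued back by an orientation-reversing flip, because $M$ is non-orientable. If $\gamma^*$ crosses $a$ at $k$ points, then $\gamma^*$ becomes a system of $k$ disjoint simple arcs in~$D$ whose $2k$ endpoints appear in cyclic order $x_1,\dots,x_k,y'_1,\dots,y'_k$ along $\partial D$, where $x_i$ and $y'_i$ both project to the same crossing $p_i \in M$. A mod-2 intersection argument shows that $k$ is odd. The $k$ arcs form a non-crossing chord diagram on these $2k$ endpoints. If any chord connects two endpoints on the same side, an innermost argument exhibits a consecutive pair $(x_j,x_{j+1})$ or $(y'_j,y'_{j+1})$ so chorded, which together with a sub-arc of $a$ bounds a bigon in~$M$, contradicting minimum position. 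Otherwise every $x_i$ is matched to some $y'_{\sigma(i)}$, and the non-crossing condition forces $\sigma$ to be strictly order-reversing on $\{1,\dots,k\}$, hence $\sigma(i) = k+1-i$. For odd $k$ this map has the fixed point $i = (k+1)/2$, producing a chord whose two endpoints project to the same crossing $p_i$ of~$M$; this would correspond to a sub-arc of $\gamma^*$ returning to~$p_i$ without further crossings with~$a$, which is impossible for a simple curve unless $k=1$.

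The main obstacle I anticipate is the combinatorial analysis of the second phase: one must carefully track how the non-orientable gluing relates the orderings on $a^+$ and~$a^-$, since it is precisely this orientation-reversal that forces $\sigma$ to be order-reversing and therefore to possess a fixed point when $k$ is odd. On an orientable annulus the analogous matching would be order-preserving and could avoid fixed points, which is consistent with the fact that this lemma is specific to the M\"obius strip case of the annular cover construction it supports.
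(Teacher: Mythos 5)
Your route is the same as the paper's: put the curve in minimal position with~$a$ using the minimality of~$a$ (your exchange argument in Phase~1 is exactly the justification the paper alludes to), cut along~$a$ into a disk, and analyze the resulting chord diagram under the orientation-reversing gluing. Your Phase~2 combinatorics (no same-side chords by an innermost bigon, the non-crossing matching forced to be order-reversing, the fixed point, and the simplicity argument) is correct and in fact more systematic than the paper's topmost/bottommost argument, which only treats the case of at least two strands and forces exactly two.

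The gap is the sentence ``A mod-2 intersection argument shows that $k$ is odd.'' Mod-2 intersection with the arc~$a$ only shows $k\equiv m \pmod 2$, where $\gamma^*$ is freely homotopic to the $m$-th power of the core of~$M$ (recall $\pi_1(M)\cong\mathbb{Z}$); non-contractibility gives $m\neq 0$, not $m$ odd. For even~$m$ the parity is even: for instance a curve parallel to~$\partial M$ is freely homotopic to the square of the core, is non-contractible, is embedded, and in minimal position crosses~$a$ exactly twice --- this realizes precisely the crosswise $k=2$ configuration $\sigma(1)=2$, $\sigma(2)=1$ that your fixed-point argument cannot reach, and neither simplicity nor bigon-freeness excludes it. So the parity claim is not a routine intersection-theory fact; it is equivalent to asserting that some shortest non-contractible closed curve lies in an odd (one-sided) class, which is really the crux of the lemma, and your proposal contains no argument comparing the shortest curves in the odd and even classes (a priori the boundary-parallel class could be the shorter one, so something beyond non-contractibility and minimal position must be invoked). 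For comparison, the paper reaches the same two-strand configuration and dismisses it in one line by asserting that a closed curve is non-contractible if and only if it crosses an odd number of times; your ``mod-2'' step plays the same role in your write-up, but as stated it does not establish that claim, and this is exactly the point where the proof needs an additional idea.
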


\begin{figure}
  \begin{center}
    \def\svgwidth{10cm}
    \begingroup%
  \makeatletter%
  \providecommand\color[2][]{%
    \errmessage{(Inkscape) Color is used for the text in Inkscape, but the package 'color.sty' is not loaded}%
    \renewcommand\color[2][]{}%
  }%
  \providecommand\transparent[1]{%
    \errmessage{(Inkscape) Transparency is used (non-zero) for the text in Inkscape, but the package 'transparent.sty' is not loaded}%
    \renewcommand\transparent[1]{}%
  }%
  \providecommand\rotatebox[2]{#2}%
  \newcommand*\fsize{\dimexpr\f@size pt\relax}%
  \newcommand*\lineheight[1]{\fontsize{\fsize}{#1\fsize}\selectfont}%
  \ifx\svgwidth\undefined%
    \setlength{\unitlength}{225.07965465bp}%
    \ifx\svgscale\undefined%
      \relax%
    \else%
      \setlength{\unitlength}{\unitlength * \real{\svgscale}}%
    \fi%
  \else%
    \setlength{\unitlength}{\svgwidth}%
  \fi%
  \global\let\svgwidth\undefined%
  \global\let\svgscale\undefined%
  \makeatother%
  \begin{picture}(1,0.25266164)%
    \lineheight{1}%
    \setlength\tabcolsep{0pt}%
    \put(0,0){\includegraphics[width=\unitlength,page=1]{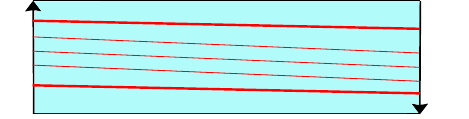}}%
    \put(-0.00223546,0.19982622){\color[rgb]{0,0,0}\makebox(0,0)[lt]{\lineheight{0}\smash{\begin{tabular}[t]{l}$a_1$\end{tabular}}}}%
    \put(0.91340881,0.04492524){\color[rgb]{0,0,0}\makebox(0,0)[lt]{\lineheight{0}\smash{\begin{tabular}[t]{l}$a'_1$\end{tabular}}}}%
    \put(0.91340881,0.18261499){\color[rgb]{0,0,0}\makebox(0,0)[lt]{\lineheight{0}\smash{\begin{tabular}[t]{l}$a'_2$\end{tabular}}}}%
    \put(-0.00223546,0.06213646){\color[rgb]{0,0,0}\makebox(0,0)[lt]{\lineheight{0}\smash{\begin{tabular}[t]{l}$a_2$\end{tabular}}}}%
    \put(-0.00223546,0.24285427){\color[rgb]{0,0,0}\makebox(0,0)[lt]{\lineheight{0}\smash{\begin{tabular}[t]{l}$a$\end{tabular}}}}%
    \put(0.93136262,0.00189718){\color[rgb]{0,0,0}\makebox(0,0)[lt]{\lineheight{0}\smash{\begin{tabular}[t]{l}$a$\end{tabular}}}}%
  \end{picture}%
\endgroup%

    \caption{Proof of Lemma~\ref{L:mobius-aux}: The shortest non-contractible
      closed curve~$\gamma$ in the M\"obius strip~$M$ corresponds, in
      this representation, to pairwise disjoint simple arcs connecting
      the left and right sides of the quadrangle.  If $\gamma$ contains
      more than one arc, as is depicted, then $a_1$ must be identified
      to $a'_1$ and similarly $a_2$ must be identified with $a'_2$.  This
      implies that $\gamma$ crosses~$a$ exactly twice, which in turn
      implies that it is contractible, a contradiction.}
    \label{F:mobius}
  \end{center}
\end{figure}

\begin{proof}
  Let $\gamma$ be a shortest non-contractible closed curve in~$M$,
  crossing~$a$ as few times as possible; $\gamma$ is simple and must
  cross~$a$ at least once (for otherwise it would be contractible in~$M$).
  We now prove that it crosses~$a$ exactly once; see Figure~\ref{F:mobius}.
  Cut~$M$ along~$a$, obtaining a quadrangle~$Q$ with four sides, two
  opposite sides (say the left and right sides) corresponding to the sides
  of~$a$.  Because $a$ is a shortest homotopic arc, and because the number
  of crossings with~$a$ is minimal, the image of~$\gamma$ in~$Q$ is made of
  simple disjoint arcs connecting the opposite sides of~$Q$ corresponding
  to~$a$.  Assume that there are at least two arcs.  Because the arcs are
  simple and disjoint, the left endpoint of the topmost arc, $a_1$, must be
  identified to the right endpoint of the bottommost arc, $a'_1$, and
  similarly the left endpoint of the bottommost arc, $a_2$, must be
  identified to the right endpoint of the topmost arc, $a'_2$, which
  implies that there are exactly two arcs, implying that $\gamma$
  crosses~$a$ exactly twice.  This is impossible since a closed curve is
  non-contractible if and only if it crosses $\gamma$ an odd number of
  times.
\end{proof}

A variation of the construction described in Section~\ref{S:universal}
allows to build the \emphdef{relevant region} of the annular
cover~$\hat S_\gamma$, which is topologically an annulus (in the two-sided
case) or a M\"obius strip (in the one-sided case) made of finitely many
copies of~$D$.  In essence, we first start at an arbitrary point
of~$\gamma$ and build the relevant region of the closed path starting at
that point.  This is a topological disk obtained by gluing together copies
of~$D$ in a tree-like fashion.  We only keep the copies of~$D$ that are
between the starting and ending points of the lift, discarding the other
copies, obtaining a set of copies of~$D$ glued together along a path.
Finally, we identify the initial and final copies of~$D$ in this path,
leading to an annulus or a M\"obius strip.

Algorithmically, we need to compute a shortest non-contractible closed
curve in an annulus or a M\"obius strip of complexity $O((g+t)kn)$, where
$k$ is the number of crossings between $\gamma$ and~$K$.  We can do this in
$O((g+t)kn\log((g+t)kn))$ time.
We describe a possible algorithm for the case of the annulus, the ideas of
which will be reused in Section~\ref{S:skeleton} (the result is actually
well-known, because it reduces to a minimum cut computation in the dual,
which can even be done in $O((g+t)kn\log\log((g+t)kn))$
time~\cite{insw-iamcm-11}).
\begin{lemma}\label{L:annulus}
  Let $A$ be a cross-metric annulus with complexity~$m$.  In
  $O(m\log m)$ time, we can compute a shortest non-contractible closed
  curve in~$A$.
\end{lemma}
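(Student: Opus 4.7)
The plan is to reduce the problem to a multiple-source shortest path (MSSP) computation in a planar graph, which Klein's algorithm~\cite{k-msppg-05} solves in $O(m \log m)$ time.

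First, I would compute a shortest arc $a$ in $A$ connecting its two boundary components. In the cross-metric setting, this reduces to a shortest-path computation in the dual graph of $G'$, with super-source and super-sink vertices attached to the faces incident to each boundary; Dijkstra on this planar graph runs in $O(m \log m)$.

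Next, I would argue that some shortest non-contractible closed curve $\gamma^*$ in $A$ crosses $a$ exactly once. Choose $\gamma^*$ to be simple and, among shortest such curves, to have a minimum number of crossings with $a$. Since $a$ is a shortest arc, any bigon or monogon formed by $\gamma^*$ and $a$ could be flipped, either strictly shortening $\gamma^*$ (contradicting its length-optimality) or strictly decreasing the number of crossings (contradicting the minimality of crossings). Thus $\gamma^*$ and $a$ form neither monogons nor bigons. Combined with the fact that $\gamma^*$ crosses $a$ an odd number of times---because $\gamma^*$ is non-contractible in the annulus and $a$ joins its two boundary components---this forces exactly one crossing, in the spirit of Lemma~\ref{L:mobius-aux}.

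Cutting $A$ along $a$ yields a cross-metric disk $Q$ topologically a rectangle, in which $\gamma^*$ appears as a simple arc from a point $p^L$ on the left copy of $a$ to the matching point $p^R$ on the right copy. Each face of $G'$ traversed by $a$ is split by the cut into a sibling pair $(f^L, f^R)$, both lying on the outer boundary of $Q$. Hence the length of $\gamma^*$ equals the minimum, over such sibling pairs, of the shortest-path distance in the dual graph of $G'$ restricted to $Q$ from $f^L$ to $f^R$. Since $Q$ is a disk and both sources and targets lie on its outer face, Klein's MSSP algorithm computes all these distances in $O(m \log m)$ total time; taking the minimizer and projecting back to $A$ recovers $\gamma^*$.

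The main obstacle is to make the monogon/bigon-elimination argument fully rigorous in the cross-metric framework and to set up the MSSP instance carefully---in particular, handling faces that $a$ visits more than once and the precise coupling between siblings $(f^L, f^R)$. These are technicalities in the spirit of the earlier sections and do not affect the $O(m \log m)$ running time.
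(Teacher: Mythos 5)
Your proposal is correct and follows essentially the same route as the paper's proof: compute a shortest arc $a$ between the two boundary components, observe that some shortest non-contractible closed curve crosses $a$ exactly once, cut along $a$ into a quadrangle, and use Klein's multiple-source shortest path algorithm to find the closest matching pair, all in $O(m\log m)$ time. The only difference is that you spell out the exchange/parity argument for the single-crossing claim, which the paper dismisses as easy to see.
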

\begin{proof}
  Compute a shortest path~$a$ between the two boundaries of~$A$.  It is
  easy to see that a desired shortest non-contractible closed crosses~$a$
  exactly once.  Cut~$A$ along~$a$, obtaining a quadrangle~$Q$ with four
  sides, two opposite sides (say the left and right sides) corresponding to
  the sides of~$a$.

  The shortest non-contractible closed curve in~$A$ corresponds to a
  shortest path between matching pairs of points in~$Q$.  We can compute
  the distance between all the matching pairs in $O(m\log m)$ time using
  the planar multiple-source shortest path algorithm by
  Klein~\cite{k-msppg-05}; the shortest non-contractible closed curve
  in~$A$ connects the matching pair with the smallest distance.
\end{proof}

In the case of the M\"obius strip, we expect this to be known in some
circles, but to our knowledge the result has not appeared anywhere:
\begin{lemma}\label{L:mobius}
  Let $M$ be a cross-metric M\"obius strip with complexity~$m$.  In
  $O(m\log m)$ time, we can compute a shortest non-contractible closed
  curve in~$M$.
\end{lemma}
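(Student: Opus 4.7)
The plan is to extend the proof of Lemma~\ref{L:annulus} to the M\"obius strip case, using Lemma~\ref{L:mobius-aux} in place of the annulus-specific observation that the shortest non-contractible closed curve crosses a shortest boundary-to-boundary path exactly once.

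First, I will compute a shortest non-separating arc $a$ in $M$. I construct the orientation double cover $A$ of $M$: two copies of the cellular decomposition of $M$ glued via the orientation-reversing identification, yielding an annulus of complexity $O(m)$. I compute a shortest path $\tilde a$ between the two boundary circles of $A$ using Dijkstra's algorithm, in $O(m \log m)$ time. By the correspondence between arcs in $M$ and their lifts in $A$ (an arc in $M$ from $\partial M$ to $\partial M$ is non-separating if and only if its lift connects the two distinct boundary circles of $A$), the projection $a = \pi(\tilde a)$ is a shortest non-separating arc in $M$. Under the tie-breaking rule from Section~\ref{S:topology}, $a$ is simple and shortest in its homotopy class, so Lemma~\ref{L:mobius-aux} applies. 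I then cut $M$ along $a$ to obtain the quadrangle $Q$ of Lemma~\ref{L:mobius-aux}, with opposite sides $a^+$ and $a^-$ corresponding to the two sides of $a$; a local analysis near $a$ shows that the identification in $M$ matches points at the same parameter value on $a^+$ and $a^-$.

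By Lemma~\ref{L:mobius-aux}, some shortest non-contractible closed curve in $M$ crosses $a$ exactly once, and therefore corresponds to a shortest path in $Q$ between a point on $a^+$ and its matching point on $a^-$. I then apply Klein's multiple-source shortest path algorithm~\cite{k-msppg-05} to $Q$, with sources the vertices on $a^+$ (which lie on a common face of $Q$ after the cut). In $O(m \log m)$ time, this yields shortest distances from each source to every vertex of $Q$; looking up the distance from each vertex on $a^+$ to its matching vertex on $a^-$ and taking the minimum gives the length of the desired closed curve, and the corresponding path, projected back to $M$, is the answer.

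The main obstacle will be the correctness of Step~1: ensuring that the projection $a$ of the shortest path $\tilde a$ in $A$ between its boundaries is indeed a simple non-separating arc of minimum length in its homotopy class. When $\tilde a$ meets its image $\sigma(\tilde a)$ in $A$ (where $\sigma$ is the deck involution), the projection has self-crossings and must be simplified in a way that preserves length and homotopy class; and verifying that the resulting $a$ is shortest in its homotopy class, so that Lemma~\ref{L:mobius-aux} applies cleanly, relies on standard arguments for shortest paths in planar covers under the tie-breaking rule of Section~\ref{S:topology}.
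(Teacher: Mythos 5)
Your overall architecture matches the paper's proof after its first step: cut $M$ along a shortest non-separating arc $a$ that is shortest in its homotopy class, invoke Lemma~\ref{L:mobius-aux} to know that some shortest non-contractible closed curve crosses $a$ exactly once, and find it by running Klein's multiple-source shortest path algorithm~\cite{k-msppg-05} on the cut-open quadrangle, exactly as in Lemma~\ref{L:annulus}. Where you genuinely diverge is in how $a$ is obtained: the paper caps the boundary of $M$ with a disk to get a projective plane and computes a shortest non-contractible \emph{loop through a basepoint} in the capping disk via~\cite[Lemma~5.2]{eh-ocsd-04}; that loop is guaranteed to be simple, and its trace on $M$ is directly a shortest non-separating arc. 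You instead take the orientation double cover $A$ (an annulus) and a shortest path $\tilde a$ between its two boundary circles. Your topological dictionary is correct (an arc is non-separating iff it lifts to a path joining the two boundary circles, since all such arcs are the ``odd'' ones), so the $B_1$--$B_2$ distance in $A$ is indeed a lower bound on the length of any non-separating arc, attained by lifts of such arcs.

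The gap is exactly the point you flag as ``the main obstacle,'' and it is not a standard black box. The projection $\pi(\tilde a)$ may cross itself (whenever $\tilde a$ meets $\sigma(\tilde a)$), and what you need in order to cut $M$ and to apply Lemma~\ref{L:mobius-aux} is a \emph{simple} arc that is shortest in its homotopy class; ``simplify preserving length and homotopy class'' begs the question, because a length-preserving removal of self-crossings is precisely what must be proved. Two routes would close it, but each needs an actual argument: (i) show that, under the tie-breaking rule, $\tilde a$ can be chosen disjoint from $\sigma(\tilde a)$ --- note that intersection points come in pairs $\{x,\sigma(x)\}$ with both $x$ and $\sigma(x)$ on $\tilde a$, and uniqueness of shortest subpaths between two common points forbids two transversal crossings, so the crossing number is both even and at most one, hence zero; or (ii) flip monogons/bigons of the projected path without increasing length, which requires an arc version of Lemma~\ref{L:hass-scott-1} (as quoted it concerns closed curves only). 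The paper avoids all of this by delegating simplicity to the Erickson--Har-Peled result. A minor additional slip: after cutting along $a$ in the M\"obius case, the two copies of $a$ on the boundary of the quadrangle are identified by an orientation-reversing matching (parameter $s$ on one side corresponds to $1-s$ on the other in the natural parametrization, cf.\ Figure~\ref{F:mobius}), not by equal parameter values; this does not hurt the algorithm provided you query, for each interior point of $a$, the distance between its \emph{two copies} on the cut boundary, but the statement as written is off.
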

\begin{proof}
  Attach a disk to the boundary component of~$M$, obtaining a projective
  plane~$P$, inheriting the cross-metric structure from~$M$.  In~$P$,
  compute a shortest non-contractible loop based at some point inside the
  disk~\cite[Lemma~5.2]{eh-ocsd-04}; that loop is simple and crosses the
  boundary of~$M$ exactly twice.  Its trace on~$M$ is a shortest arc~$a$
  among all non-separating arcs in~$M$.  By Lemma~\ref{L:mobius-aux}, some
  shortest non-contractible closed curve in~$M$ crosses~$a$ exactly once,
  and thus corresponds to a shortest path between matching pairs of points
  in the disk obtained by cutting $M$ along~$a$, as in the annulus case; we
  can similarly conclude using Klein's algorithm~\cite{k-msppg-05}.
\end{proof}

\end{document}